%% file: main.tex
\documentclass[12pt,letterpaper]{article}
\usepackage{amssymb,amsmath,amsthm,mathtools}
\usepackage{thmtools,thm-restate}

\usepackage{array,geometry}  
\usepackage{bm,paralist,dsfont,nicefrac,bbm}

\usepackage{etoolbox,algorithm}
\usepackage{algpseudocode}

\usepackage{tikz,float,tabularx}
\usetikzlibrary{arrows}
\usetikzlibrary{arrows.meta}
\usetikzlibrary{shapes}
\usetikzlibrary{backgrounds}
\usetikzlibrary{positioning}
\usetikzlibrary{decorations.markings}
\usetikzlibrary{patterns}
\usetikzlibrary{calc}
\usetikzlibrary{fit}
\usetikzlibrary{math}

\usepackage[driverfallback=hypertex,pagebackref=true,colorlinks]{hyperref}
\hypersetup{linkcolor=[rgb]{.7,0,0}}
\hypersetup{citecolor=[rgb]{0,.7,0}}
\hypersetup{urlcolor=[rgb]{.7,0,.7}}

\usepackage{xcolor} 

\makeatletter
\newcommand{\algmargin}{\the\ALG@thistlm}
\algnewcommand{\parState}[1]{\State%
  \parbox[t]{\dimexpr\linewidth-\algmargin}{\strut #1\strut}}

\newcommand{\alglinenoNew}[1]{\newcounter{ALG@line@#1}}
\newcommand{\alglinenoPop}[1]{\setcounter{ALG@line}{\value{ALG@line@#1}}}
\newcommand{\alglinenoPush}[1]{\setcounter{ALG@line@#1}{\value{ALG@line}}}

\usepackage[capitalize, nameinlink, noabbrev, sort&compress]{cleveref}

\newtheorem{theorem}{Theorem}[section]
\newtheorem*{theorem*}{Theorem}

\newtheorem*{definition*}{Definition}

\newtheorem{lemma}[theorem]{Lemma}
\newtheorem*{lemma*}{Lemma}

\newtheorem{claim}[theorem]{Claim}
\newtheorem*{claim*}{Claim}

\newtheorem{fact}[theorem]{Fact}
\newtheorem*{fact*}{Fact}

\newtheorem*{observation*}{Observation}

\newtheorem*{conjecture*}{Conjecture}

\newtheorem{corollary}[theorem]{Corollary}
\newtheorem*{corollary*}{Corollary}

\newtheorem{remark}[theorem]{Remark}
\newtheorem*{remark*}{Remark}

\newtheorem*{proposition*}{Proposition}

\allowdisplaybreaks

\input{macros.tex}

\title{Settling Nondeterministic Communication Complexity of Graph Connectivity}
\author{
Zhijun Zhang\thanks{\href{mailto:zhijun.zhang@insait.ai}{zhijun.zhang@insait.ai}. This research was partially funded by the Ministry of Education and Science of Bulgaria (support for INSAIT, part of the Bulgarian National Roadmap for Research Infrastructure).}\\
{\small INSAIT, Sofia University ``St. Kliment Ohridski''}
}
\date{} %Remove to get date.

\begin{document}

\maketitle
\thispagestyle{empty}

\begin{abstract}
Determining the communication complexity of graph connectivity has been one of the longest standing open problems in communication complexity since the seminal work of Babai, Frankl, and Simon [FOCS'86].
A simple protocol transmits a spanning forest using $\cO(N\log N)$ communication on $N$-vertex graphs.
While this was shown optimal for deterministic protocols by Hajnal, Maass, and Tur\'an [STOC'88], the optimal nondeterministic and randomized complexities have remained unresolved for decades.

We settle the nondeterministic communication complexity of graph connectivity by proving an optimal $\Omega(N\log N)$ lower bound.
The best prior bound was $\Omega(N\log\log N)$ due to Raz and Spieker [Combinatorica'95].
Our bound is shown by upper bounding the maximum size of $1$-monochromatic rectangles.
Generalizing our techniques to weighted, almost $1$-monochromatic rectangles yields an $\Omega(N \cdot \log(\min(1/\varepsilon,N)))$ lower bound for randomized protocols with error probability $\varepsilon$.
In particular, this bound is optimal for any polynomially small $\varepsilon$.

For constant error probability, we also prove that any $r$-round protocol requires $\Omega(N/r \cdot \log N)$ communication, which is optimal for every constant $r$.
Previously, optimal $\Omega(N\log N)$ lower bounds for graph connectivity were known only for $1$-round [Sun, Woodruff; APPROX/RANDOM'15] and $2$-round protocols [Radhakrishnan, Reddy, Venkat; ITCS'26].
Our proof introduces a new communication problem which can be viewed as a composition of inner product and pointer chasing problems.

Our lower bounds can be extended to other fundamental graph properties, and also imply optimality of known query and streaming algorithms for related problems.
\end{abstract}

\newpage
\pagenumbering{roman} 
\tableofcontents

%\newpage
%\listoffixmes

\newpage
\setcounter{page}{1}
\pagenumbering{arabic}

\input{intro}
\input{overview}
\input{prelim}
\input{nondet}
\input{err}
\input{const}
\input{prob}

\newpage
\section*{AI Methodology}
\label{sec:ai}
\addcontentsline{toc}{section}{\nameref{sec:ai}}

Extensive interaction with ChatGPT was used to assist with exploring, formulating, and improving proof ideas for several lemmas in this paper. 
The tool materially affected \cref{lem:nondet-W,lem:err-bp,lem:prob-D-sublinear}.
In particular,
\begin{enumerate}
    \item ChatGPT discovered the edge coloring argument for proving \cref{lem:nondet-W};
    \item ChatGPT produced a preliminary version of the error-dependent randomized lower bound with worse $\Omega(\log\log(1/\varepsilon))$ dependency on $\varepsilon$.
        The approach was slightly different but to some extent could be viewed as a weaker version of \cref{lem:err-bp} where the additional loss term was exponential in $n$.
        It alone was useful only for exponentially small error probability.
        Direct sum arguments were used to extend it to the full scale of error probabilities.
        The proof has been significantly simplified and improved by human authors to derive the current $\Omega(\log(1/\varepsilon))$ dependency.
    \item ChatGPT suggested the use of \holder{}'s inequality to optimize the lower order term in proving \cref{lem:prob-D-sublinear}.
        It relaxed the round requirement of the lower bound for $\prob_{n,r}$ but did not improve the communication bound for $\prob_{n,r}$ or graph connectivity. 
\end{enumerate}
ChatGPT was also used for proof-reading and minor copy-editing.
Human authors take full responsibility for correctness and exposition of all content in the final manuscript.

\newpage
\bibliographystyle{alpha}
\bibliography{refs}

\newpage
\appendix
\input{info}
\input{pcp}
\input{prob-sublinear}

\end{document}

%% file: macros.tex
\DeclarePairedDelimiter{\ang}{\langle}{\rangle}
\DeclarePairedDelimiter{\paren}{\lparen}{\rparen}

\DeclarePairedDelimiter{\set}{\lbrace}{\rbrace}
\DeclarePairedDelimiter{\card}{\lvert}{\rvert}

\DeclarePairedDelimiter{\abs}{\lvert}{\rvert}
\DeclarePairedDelimiter{\floor}{\lfloor}{\rfloor}
\DeclarePairedDelimiter{\ceil}{\lceil}{\rceil}

\DeclareMathOperator*{\poly}{\mathsf{poly}}
\DeclareMathOperator*{\polylog}{\mathsf{poly\,log}}
\DeclareMathOperator*{\Exp}{\bbE}

\newcommand{\conn}{\mathsf{Conn}}
\newcommand{\bconn}{\mathsf{bConn}}
\newcommand{\ippc}{\mathsf{IP{\circ}PC}}
\newcommand{\prob}{\ippc}
\newcommand{\pc}{\mathsf{PC}}
\newcommand{\bpc}{\mathsf{bPC}}
\newcommand{\pcp}{\mathsf{PC\text{-}Perm}}
\newcommand{\bpcp}{\mathsf{bPC\text{-}Perm}}

\newcommand{\bbD}{\mathbb{D}}
\newcommand{\bbE}{\mathbb{E}}
\newcommand{\bbH}{\mathbb{H}}
\newcommand{\bbI}{\mathbb{I}}

\newcommand{\bfD}{\mathbf{D}}
\newcommand{\bfN}{\mathbf{N}}
\newcommand{\bfR}{\mathbf{R}}

\newcommand{\cA}{\mathcal{A}}
\newcommand{\cB}{\mathcal{B}}
\newcommand{\cC}{\mathcal{C}}
\newcommand{\cE}{\mathcal{E}}
\newcommand{\cO}{\mathcal{O}}
\newcommand{\cP}{\mathcal{P}}
\newcommand{\cS}{\mathcal{S}}
\newcommand{\cU}{\mathcal{U}}
\newcommand{\cV}{\mathcal{V}}
\newcommand{\cW}{\mathcal{W}}

\newcommand{\sA}{\mathsf{A}}
\newcommand{\sB}{\mathsf{B}}
\newcommand{\sL}{\mathsf{L}}
\newcommand{\sR}{\mathsf{R}}
\newcommand{\sX}{\mathsf{X}}
\newcommand{\sY}{\mathsf{Y}}

\newcommand{\oE}{\overline{\cE}}

\newcommand{\wM}{\widehat{M}}
\newcommand{\wa}{\widehat{a}}
\newcommand{\wb}{\widehat{b}}

\newcommand{\wu}{\widehat{u}}
\newcommand{\ww}{\widehat{w}}
\newcommand{\wz}{\widehat{z}}

\newcommand{\1}{\mathds{1}}
\newcommand{\kl}{\bbD_{\mathsf{KL}}}

\newcommand{\bits}{\set{0,1}}
\newcommand{\cyc}{{\sf cyc}}
\newcommand{\supp}{{\sf supp}}

\newcommand{\holder}{H\"older}
\newcommand{\renyi}{R\'enyi}

%% file: intro.tex
\section{Introduction}
\label{sec:intro}

Graph problems have been extensively studied in communication complexity since the early works of \cite{BFS86,HMT88}.
Examples include connectivity~\cite{RS95,FKMSZ08,IKLSW12,SW15,RRV26}, matching~\cite{GKK12,BBEMN22,CJSY26}, vertex/edge coloring~\cite{FM25,CMNS25}, minimum cuts~\cite{ASW25,JNS26}, vertex covers~\cite{DGR25}, planarity~\cite{DP89}, and many more.

One notable problem is connectivity, denoted $\conn_N$.
In the standard two-party communication model~\cite{Yao79}, Alice and Bob receive disjoint edge sets $E_\sA,E_\sB$ of an $N$-vertex undirected graph\footnote{To avoid confusion, we use $N$ to denote the number of vertices in the input graph throughout this paper. $n$ is reserved for the size parameter of auxiliary communication problems used for reduction.} $G=(V,E_\sA \cup E_\sB)$.
The goal is to jointly decide if $G$ is connected.

A simple one-way deterministic protocol solves $\conn_N$ in $\cO(N\log N)$ communication: Alice sends a spanning forest of her edges $E_\sA$ to Bob, who then updates the spanning forest using his edges $E_\sB$ and then decides connectivity of $G$.
\cite{HMT88} proved a matching lower bound, showing that no deterministic protocol with $o(N\log N)$ communication exists, regardless of the number of rounds.
For randomized protocols, the seminal work of \cite{BFS86} gave a reduction from set disjointness to graph connectivity.
When combined with the $\Omega(N)$ lower bound for set disjointness~\cite{KS92,Raz92,BJKS04}, it implies that any randomized protocol for graph connectivity requires $\Omega(N)$ communication.
Closing the gap between lower and upper bounds remains a major open problem for nondeterministic and randomized protocols, which has been highlighted for decades; see for example \cite{BFS86,RS95,IKLSW12,AEGLMN22,RRV26}.

Prior to this work, the strongest nondeterministic lower bound was $\Omega(N\log\log  N)$ due to \cite{RS95}.
Their lower bound actually holds for the more restricted bipartite version $\bconn_N$, where $G$ is promised to be a balanced bipartite graph with $N/2$ vertices on each side, and both $E_\sA$ and $E_\sB$ are guaranteed to be perfect matchings across the bipartition\footnote{Technically, $\bconn_N$ may induce multigraphs with parallel edges. This can be easily fixed by splitting every vertex into a length-$1$ path, resulting in simple bipartite graphs with $\Theta(N)$ vertices.}.
Equivalently, $E_\sA,E_\sB$ can be viewed as permutations $a,b$ over $[N/2]$, and $G$ is connected if and only if $b^{-1}a$ is a single $(N/2)$-cycle.

No superlinear $\omega(N)$ lower bound is known for randomized protocols in general.
When the number of rounds is bounded, the first progress was an optimal $\Omega(N\log N)$ lower bound for $1$-round protocols due to \cite{SW15}.
Recently, the same lower bound was shown for $2$-round protocols by \cite{RRV26}.
For every constant $r \ge 3$, \cite{RRV26} also gave an $\Omega(N\log^{(2r+1)}N)$ lower bound for $r$-round protocols.

\subsection{Our Results}

Our first main contribution is settling the nondeterministic communication complexity of graph connectivity.
\cite{RS95} explicitly conjectured that $\Omega(N\log N)$ communication is required for nondeterministic protocols.
We resolve the conjecture in the affirmative.

\begin{theorem}[Informal version of \cref{thm:nondet}]
\label{thm:nondet-informal}
The nondeterministic communication complexity of $N$-vertex graph connectivity is $\Theta(N\log N)$.
\end{theorem}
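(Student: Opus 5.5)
The upper bound is the spanning-forest protocol from the introduction: a one-way deterministic protocol using $\cO(N\log N)$ bits is in particular nondeterministic. So the work is in the lower bound. I would prove it for the restricted problem $\bconn_N$. With $n=N/2$, Alice holds a permutation $a$ and Bob holds a permutation $b$ of $[n]$, and the answer is $1$ iff $b^{-1}a$ is a single $n$-cycle. Since a nondeterministic protocol of cost $c$ covers the $1$-inputs by $2^c$ monochromatic $1$-rectangles, it suffices to bound their size.

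There are $n!\cdot(n-1)!$ $1$-inputs: for each $a$ there are $(n-1)!$ permutations $b$ making $b^{-1}a$ an $n$-cycle. The key claim is that every $1$-monochromatic rectangle $A\times B$ satisfies
\[
|A|\cdot|B| \le (n!)^{2}\, 2^{-\Omega(n\log n)} .
\]
Dividing, at least $2^{\Omega(n\log n)}/n$ rectangles are needed, hence $c=\Omega(N\log N)$.

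To prove the rectangle bound, fix the rectangle. For every $a\in A$ and $b\in B$, the permutation $b^{-1}a$ is a single cycle. I would aim for a dichotomy: if $|A|\ge n!\,2^{-\delta n\log n}$, then $|B|\le n!\,2^{-\Omega(n\log n)}$. I would induct or run an entropy argument along the structure of the permutations. Take a uniformly random $a\in A$ and an independent uniformly random $b\in B$. We have
\[
\log|A|+\log|B| = H(a)+H(b) = \sum_i \bigl[H(a_i\mid a_{<i})+H(b_i\mid b_{<i})\bigr].
\]
I want to show that for many coordinates $i$, the conditional supports of $a_i$ and $b_i$ cannot both have size $n^{\Omega(1)}$. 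The reason: consider the combined graph consisting of the matching from $a$ together with the matching from $b$. If both players retained large local freedom at a step, then switching two values would merge or split cycles. Concretely, composing with a transposition changes the cycle count by $\pm1$. One would then expect to find $a,a'\in A$ and $b\in B$ with $a'=a\circ(i\,j)$, so that $b^{-1}a$ and $b^{-1}a'$ differ by a transposition and cannot both be $n$-cycles. The difficulty is that $A$ need not contain such pairs directly.

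The main obstacle is exactly this combinatorial step: turning ``both sides rich'' into a contradiction for sets that are not closed under transpositions. My first attempt would be a cut-and-paste argument on the partial cycle structure. Fix the first $k$ values of both permutations to get a partial union of paths. Then show that, conditioned on this prefix, the sets of possible next-edges for Alice and Bob must avoid the pairs that would close a cycle prematurely. I would encode these forbidden pairs as a bipartite graph and use an edge-coloring/matching argument to show that the product of the two option counts is at most $\cO(n)$ rather than $n^2$, averaged appropriately. Summing over the $n$ steps gives $\log|A|+\log|B|\le 2\log n!-\Omega(n\log n)$, once the losses $\log(n/\text{options})$ are shown to accumulate to a constant fraction of $\log n$ on a constant fraction of the steps.

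If the per-step bound fails, my fallback is a weaker but still sufficient statement: prove the bound only for steps $i\le n/2$, where at least $n/2$ path endpoints remain available.
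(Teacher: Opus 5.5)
\textbf{There is a genuine gap: the rectangle bound is left as a plan, and the plan you describe cannot deliver it.} Your upper bound, the restriction to $\bconn_N$, the count $n!(n-1)!$ of $1$-inputs, and the target $\log(|A|\cdot|B|)\le 2\log n!-\Omega(n\log n)$ all agree with the paper. To see why the plan fails, condition on prefixes $a(1),\ldots,a(k)$ and $b(1),\ldots,b(k)$. Each $j\le k$ joins $a(j)_\sR$ and $b(j)_\sR$ through $j_\sL$, so the right vertices form $n-k$ paths. Each path has exactly one endpoint still free for $a$ and one still free for $b$; isolated vertices are free for both. The pairs $(a(k+1),b(k+1))$ that close a cycle prematurely are exactly the two endpoints of the same path. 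These pairs form a perfect matching $\mu$ between the two free sets, and every other pair merges two paths and is locally harmless. A product set $S_A\times S_B$ avoids $\mu$ as soon as $\mu(S_A)\cap S_B=\emptyset$, which allows $|S_A|=|S_B|=(n-k)/2$. So the one-step constraint is compatible with losing only $2$ bits per step, $\cO(n)$ in total. That is exactly the trivial linear bound. No coloring or matching argument about a single prefix step can therefore give a product of $\cO(n)$; such a per-step bound would also yield $|A|\cdot|B|\le n!\,2^{\cO(n)}$, far more than the paper proves or you need. Any $\Omega(\log n)$ saving per step must come from constraints imposed by the rest of the rectangle, which a fixed-order prefix chain rule does not see. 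Your transposition idea fails for the reason you note yourself, and the fallback to steps $i\le n/2$ does not address this.

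What is missing is a way to turn the global constraint into a local one, which the paper does in three steps.
(i) Reveal $A(u)=v$ for a uniformly random $u$ and shortcut the subpath $b(u)_\sR,u_\sL,v_\sR,b^{-1}(v)_\sL$ of the Hamiltonian cycle into a single edge. Then $\cA_{(u,v)}\times\cB_{(u,v)}$ is again a $1$-monochromatic rectangle for size $n-1$, which produces the terms $\log n+S_{n-1}$ of a recursion.
(ii) Any two $b,b'\in\cB$ with the same shortcut satisfy $b^{-1}b'=(u\ w\ b^{-1}(v))$. Since $(a,b')$ must stay connected for every $a\in\cA$, the triple $u,b^{-1}(v),w$ must have the same cyclic order along $b^{-1}a$ for all $a\in\cA$. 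This $3$-cycle/cyclic-order condition is the right replacement for your transposition pairs.
(iii) Encode $C=b^{-1}A$ by the colors of each vertex's outgoing and incoming edge. This shows that many admissible $w$ force low entropy: $\Exp_u\Exp_a\log(|\cW_{(u,b^{-1}(a(u)))}(b)|+1)\le\log n-\bbH(A)/(2n)+\cO(\log\log n)$.
With $\bbH(A)\ge S_n/2$ this gives $(1+\frac{1}{4n})S_n\le S_{n-1}+2\log n+\cO(\log\log n)$, hence $S_n\le\frac85 n\log n+\cO(n\log\log n)$. Your dichotomy intuition (large $A$ forces small $B$) is realized precisely by the $-\bbH(A)/(2n)$ term. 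Without ingredients like these, the proposal does not yet contain a proof of the lower bound.
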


Like many nondeterministic communication lower bounds, the proof of \cref{thm:nondet-informal} is via upper bounds on the maximum size of $1$-monochromatic rectangles.
In \cref{sec:overview-nondet}, we sketch our new ideas and give a brief comparison with the approach of \cite{RS95}.

For randomized protocols, transcripts do not necessarily correspond to monochromatic rectangles anymore.
Nevertheless, our techniques can be generalized to show that $\bconn_N$ does not have large weighted rectangles that are close to being $1$-monochromatic.
This yields an error-dependent randomized lower bound.

\begin{theorem}[Informal version of \cref{thm:err}]
\label{thm:err-informal}
The randomized communication complexity of $N$-vertex graph connectivity with error probability $\varepsilon=\varepsilon(N)$ is $\Omega(N \cdot \log(\min(1/\varepsilon,N)))$.
\end{theorem}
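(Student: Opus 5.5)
We prove the bound for the bipartite version $\bconn_N$, where the inputs are permutations $a,b$ of $[n]$ with $n=N/2$, and the answer is $1$ iff $b^{-1}a$ is an $n$-cycle. A lower bound here transfers to $\conn_N$ by the standard reduction. The plan is a weighted corruption (rectangle) bound, generalizing the nondeterministic argument.

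\textbf{Step 1: the distribution.} Take $\mu$ to be uniform over pairs $(a,b)$, conditioned so that $b^{-1}a$ is either an $n$-cycle or lies in a fixed hard family of non-cycles. A natural choice is permutations with exactly two cycles, both of length at least $n/4$, mixed with constant weight on each side. Both the $1$-inputs and this $0$-family have $\mu$-mass about $(n!)^2/n$ relative to the uniform measure, so neither side is negligible.

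\textbf{Step 2: the core corruption lemma.} Let $R=X\times Y$ be a rectangle whose $1$-mass is at least $2^{-c\,n\log k}$ times the total $1$-mass, where $k=\min(1/\varepsilon,n)$ and $c$ is a small constant. We show that the $0$-mass of $R$ is at least a $1/\poly(k)$ fraction of its $1$-mass, i.e. $R$ cannot be nearly $1$-monochromatic. The intended argument is a switching/local-surgery argument. Fix a $1$-input $(a,b)\in R$. Modifying $a$ by a transposition that splits the cycle $b^{-1}a$ produces a $0$-input with the same $b$ whenever the modified $a'$ also lies in $X$. We count the triples $(a,a',b)$ in which $a,a'\in X$ differ by a transposition and $(a,b),(a',b)\in R$. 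The nondeterministic machinery bounds how many $a\in X$ can have few transposition-neighbours in $X$, via the edge-colouring argument on the Cayley-type graph. From this we derive an expansion inequality: a set $X$ of density $2^{-n\log k}$ inside $S_n$ has, on average, about $n^2/k$ transposition-neighbours inside $X$. Each such neighbour splits the cycle into a $0$-input, which gives $0$-mass at least $\Omega(1/k)$ times the $1$-mass, up to the averaging over $b$.

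\textbf{Step 3: conclusion.} A protocol with error $\varepsilon\le 1/(Ck)$ and cost $c\,n\log k$ yields, under $\mu$, a $1$-rectangle cover with almost all $1$-mass in rectangles of error below $1/k$. The standard corruption bound then gives communication $\Omega(n\log k)=\Omega(N\log\min(1/\varepsilon,N))$. For $\varepsilon$ larger than constant-level we fall back on the $\Omega(N)$ bound.

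The main obstacle is Step 2. The nondeterministic proof only handles perfectly monochromatic rectangles, where a single $0$-input is a contradiction. Here we need a quantitative statement in which the loss from $X$ not being closed under transpositions is measured precisely in terms of the density of $X$, with error parameter $1/k$ rather than $2^{-n}$. The expansion inequality in Step 2 is a plan-level claim, not yet proved. I would first try an entropy formulation: write $\log|X|$ as a sum of conditional entropies of $a(i)$ given the earlier values, and show that a position with low conditional entropy forces few cycle-splitting partners, balancing this against the density loss.
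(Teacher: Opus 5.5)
Your plan breaks down at Step 1, and the same parity fact defeats Step 2. A permutation with $c$ cycles has sign $(-1)^{n-c}$. So every $n$-cycle has sign $(-1)^{n-1}$, while every permutation with exactly two cycles has sign $(-1)^{n}$. Since $\mathrm{sgn}(b^{-1}a)=\mathrm{sgn}(a)\,\mathrm{sgn}(b)$, your distribution $\mu$ is solved with zero error by Alice sending the single bit $\mathrm{sgn}(a)$. Equivalently, the rectangle $\{a:\mathrm{sgn}(a)=1\}\times\{b:\mathrm{sgn}(b)=(-1)^{n-1}\}$ carries half of the $1$-mass and no $0$-mass under $\mu$, so the corruption lemma of Step 2 is false for $\mu$.

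The switching mechanism fails for the same reason, since multiplying by a transposition flips the sign. In every $1$-monochromatic rectangle, all of $X$ has one sign, so $X$ contains \emph{no} transposition-neighbour pairs however large it is. The expansion inequality is also false as a statement about $X$ alone. The alternating group has density $1/2$ and no internal transposition edges, and a random set of density $2^{-n\log k}$ has average internal degree about $\binom{n}{2}2^{-n\log k}\ll 1$. A working argument must use parity-preserving moves and must exploit the other side of the rectangle.

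That is what the paper does, and in the opposite direction from your plan. The relevant move is a $3$-cycle: multiplying an $n$-cycle by a $3$-cycle gives either an $n$-cycle or three cycles, depending on the cyclic order of the three points. Suppose $b'\neq b$ in $\cB$ agree after shortcutting one edge $(u,a(u))$. Then $b^{-1}b'=(u\ w\ z)$ with $z=b^{-1}(a(u))$, and the cyclic order of $u,z,w$ must be the same along $b^{-1}a$ for \emph{every} $a\in\cA$. An edge-colouring encoding of the cycle $b^{-1}A$ then shows that high $\bbH(A)$ forces \emph{few} such $b'$: at most about $\log n-\bbH(A)/(2n)$ bits on average. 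This is an upper bound on neighbours, not a lower bound.

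Combined with a vertex-removal recursion, this bounds the entropy of any product distribution whose probability of a non-$n$-cycle is at most $\delta$. The recursion uses that the shortcut pair is an $(n-1)$-instance with no more corruption on average. The bound is $\frac85 n\log n+\cO(n\log\log n+\sqrt{\delta}n^4\log n)$, which is useful only for $\delta\le n^{-8}$. An information argument over transcripts on uniform inputs then gives $\Omega(n\log n)$ for error $n^{-10}$.

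The $\log(1/\varepsilon)$ dependence is not derived from a corruption bound at scale $k$ at all. Running $t=\Theta(\log n/\log(1/\varepsilon))$ independent copies with a majority vote drives the error from $\varepsilon$ to $n^{-10}$, so $Lt=\Omega(n\log n)$. Your scale-$k$ corruption claim is a stronger, distributional statement than anything the paper proves, and you do not need it.
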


It is worth noting that \cref{thm:err-informal} implies an optimal $\Omega(N\log N)$ lower bound for any polynomially small error probability $\varepsilon=1/\poly(N)$.
For any $\varepsilon=o(1)$, it gives a superlinear $\omega(N)$ lower bound, improving the $\Omega(N)$ bound of \cite{BFS86}.
We also remark that any improvement to the $\Omega(\log(1/\varepsilon))$ dependency on $\varepsilon$ would immediately imply $\omega(N)$ lower bounds even for protocols with constant error.
This is because $\cO(\log(1/\varepsilon))$ repetitions suffice to decrease the error probability from constant to $\varepsilon$.

For randomized protocols with constant error, we also prove tradeoffs between the number of rounds and the total communication.

\begin{theorem}[Informal version of \cref{thm:const}]
\label{thm:const-informal}
The $r$-round randomized communication complexity of $N$-vertex graph connectivity with constant error probability is $\Omega(N/r \cdot \log N)$.
\end{theorem}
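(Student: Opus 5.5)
The plan is to reduce from an auxiliary problem $\prob_{n,r}$ (the $\ippc$ composition of inner product and pointer chasing mentioned in the abstract) and prove an $r$-round lower bound for it by a round-elimination / information-complexity argument. The target is $\Omega(n\log n)$ for $\prob_{n,r}$ when the protocol has fewer rounds than the chasing depth.

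\textbf{The problem and the reduction.} Alice and Bob hold layered functions $f_1,\dots,f_{r+1}:[n]\to[n]$, with the layers alternating between the players as in standard pointer chasing. Each pointer $f_i(j)$ is not given in the clear but is encoded by an inner-product gadget on $\Theta(\log n)$ bits per coordinate, split between the two players. Chasing $r+1$ steps from a fixed start then determines an output bit.

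To embed this in $\bconn_N$:
\begin{itemize}
\item[(i)] Each layer of pointers becomes a layer of $n$ vertices, and the graph is built from perfect matchings between consecutive layers. Within each block, the inner-product gadget is realized by a small permutation gadget: the composition $b^{-1}a$ of two permutations of $[m]$ on $\Theta(\log n)$ points encodes the value.
\item[(ii)] The total number of vertices is $N=\Theta(nr)$ if each layer uses $O(n)$ vertices. This already loses the factor $r$ that appears in the bound, so the communication lower bound $\Omega(n\log n)$ becomes $\Omega(N/r\cdot\log N)$.
\item[(iii)] Connectivity must encode the final output. A cycle structure is arranged so that $b^{-1}a$ is a single cycle iff the chased value satisfies a predicate. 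This gluing is fiddly but standard in the style of \cite{RS95}: a single long cycle closes up through the chased path exactly when the endpoint bit is $1$.
\end{itemize}

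\textbf{Lower bound for $\prob_{n,r}$.} I would use induction on rounds via information complexity under a product-like hard distribution. The uniformly random gadget inputs make each pointer nearly uniform given the other player's view. For a protocol with $r$ rounds and $C=o(n\log n)$ bits, the first message reveals on average $o(\log n)$ bits per coordinate about the next layer, by chain rule and subadditivity over the $n$ coordinates. For the coordinate actually pointed to, which is a uniformly random location independent of the message, the mutual information is small.

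Then, fixing the message, the residual instance is close in distribution to an $(r-1)$-round instance of $\prob_{n,r-1}$ with the roles swapped. This is the standard round-elimination step of Nisan--Wigderson for pointer chasing, with the pointer now $\log n$ bits rather than one bit.

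\textbf{Main obstacle.} The key point is why the inner-product composition is needed. Plain pointer chasing only gives $\Omega(n)$ per round-elimination step. To get $\log n$ per coordinate we need that learning $\delta\log n$ bits about a pointer does not let one predict it, even after the pointer is used later. Inner product gives this through its discrepancy/extractor property: even with partial information about one side, the output remains close to uniform, up to an exponent loss of $2^{-\Omega(\log n)}$ per coordinate.

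I expect the hardest step to be controlling error accumulation across $r$ rounds while keeping constant error overall. My first attempt would be to measure the conditional distribution of the current pointer in KL divergence to uniform, bounding each round's increase by $O(C/n)$ bits and then applying Pinsker once at the end. If direct KL is too lossy, I would apply \holder{}'s inequality to bound the advantage via a R\'enyi-type divergence, which fits the inner-product structure.
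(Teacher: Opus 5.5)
There is a genuine gap, and it lies in the hard problem you chose rather than in the round-elimination details: that problem cannot carry the bound.

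First, steps (i) and (ii) are incompatible. If every pointer is produced by its own permutation gadget on $m=\Theta(\log n)$ points, each layer has $\Theta(n\log n)$ vertices. Then $N=\Theta(rn\log n)$, and an $\Omega(n\log n)$ bound gives only $\Omega(N/r)$. That loses exactly the $\log N$ factor, and is weaker than the known $\Omega(N)$ bound for unbounded rounds. If you keep $\cO(n)$ vertices per layer, each pointer gets $\cO(1)$ vertices. A local gadget then cannot hide a pointer behind $\Theta(\log n)$-bit inner products, so the pointers are essentially matching edges owned by one player. (For perfect matchings the $f_i$ must also be permutations, not arbitrary functions.)

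But then you are doing single-start permutation pointer chasing with $r+1$ layers in $r$ rounds, which is easy for $r\ge 2$. If the owner of $f_1$ speaks first, direct chasing costs $\cO(r\log n)$ when only the last receiver must output. Otherwise the sampling protocol of \cite{NW93} costs $\cO((n/r+r)\log n)$. The protocol in \cref{sec:pcp} chases $r'^2$ layers in $r'$ rounds with $\cO(nr'^2\log\log n)$ bits, so $r'=\ceil{\sqrt{r+1}}\le r$ gives $\cO(nr\log\log n)$. That protocol also refutes the heuristic behind your round elimination. It reveals only $\cO(\log\log n)$ bits per vertex (colorings into $\Theta(\log n)$ classes consistent with the pointers), i.e.\ $o(\log n)$ bits per coordinate, yet it still jumps many layers once the start is known. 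The gluing in (iii) is also not something \cite{RS95} supplies; it needs a concrete construction.

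The missing idea is to drop the starting vertex altogether. In the paper's $\prob_{n,r}$, Bob holds $X,Y\in\bits^n$, and each permutation $A_i,B_i$ is owned outright by one player, so it is a perfect matching on $\cO(n)$ vertices. The output is a single inner product $\ang{A_r\cdots A_1X,B_r\cdots B_1Y}$ taken at the end. So all $n$ coordinates must be chased, and there is no single path for sampling or colorings to exploit.

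The reduction has $4n$ vertices per layer and $N=\Theta(rn)$:
\begin{itemize}
\item four blocks realize $C,C^{-1},C,C^{-1}$ for $C=(B_r\cdots B_1)^{-1}A_r\cdots A_1$;
\item constant-size product gadgets implement $(\sigma_\sY)^{-y}(\sigma_\sX)^{-x}(\sigma_\sY)^y(\sigma_\sX)^x=\tau^{x\cdot y}$;
\item fixed edges chain the $n$ groups and close through $\pi=(1\ 2\ 3\ 4)$.
\end{itemize}
The graph is then one $N$-cycle or three cycles according to the inner product.

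The lower bound is not round elimination either. It tracks $\bbD_2$ of $(W_i,Z_i)=(A_iW_{i-1},B_iZ_{i-1})$, charging $(\bbD_\infty(A_i,B_i)+\cO(n))/\log n$ per round via the cycle-counting \cref{clm:cyc}. Permutations with many cycles are rare, and this division by $\log n$ is where the $\log n$ per coordinate comes from. It adds virtual messages $\wM_i$ so that min-entropy loss is controlled by Shannon-entropy loss. The accumulated charge is then $\cO((L+rn)/\log n)$ rather than $\cO(r(L+n)/\log n)$. The argument ends with a Cauchy--Schwarz bound against the inner product.
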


In particular, for every constant $r$, \cref{thm:const-informal} implies optimal $\Omega(N\log N)$ lower bounds for any $r$-round protocols, improving \cite{SW15,RRV26}.
Also note that our lower bound is always superlinear in the case of $r=o(\log N)$, while an $\Omega(N/r \cdot \log N)$ lower bound for $r=\Omega(\log N)$ already follows from the $\Omega(N)$ bound of \cite{BFS86}, which holds even for protocols with unbounded rounds.

We emphasize that the proof of \cref{thm:const-informal} is not a generalization of previous approaches.
Indeed, the lower bounds in \cite{RRV26} crucially rely on a specific variant of pointer chasing.
\cref{sec:pcp} presents a round-efficient protocol that can jump over quadratically many layers for their specific variant, which tends to be an unfavorable feature for lower bound proofs\footnote{With current techniques, it is unclear how to prove substantially stronger lower bounds for these problems when the number of rounds is significantly smaller than the number of layers.}.
\cref{sec:overview-const} includes a more detailed discussion.

Motivated by this, our proof of \cref{thm:const-informal} introduces a new two-party communication problem $\prob_{n,r}$, which is essentially a composition of two well-studied communication problems, namely inner product and pointer chasing, as the name suggests.
Concretely, the input of $\prob_{n,r}$ consists of two bit vectors $X,Y \in \bits^n$ and $2r$ permutations $A_1,\dots,A_r,B_1,\dots,B_r$ over $[n]$.
Alice holds all $A_i,B_i$ for odd $i$, while Bob holds $X,Y$ as well as all $A_i,B_i$ for even $i$.
The goal is to output the binary inner product $\ang{A_r \cdots A_1 X,B_r \cdots B_1 Y}$, where $A_1,\dots,A_r,B_1,\dots,B_r$ are identified with their corresponding permutation matrices.
At a high level, before outputting the final inner product, protocols have to simultaneously ``chase'' each of the $2n$ input bits through $r$ layers of permutations (i.e., ``pointers'').

We prove the following strong lower bound\footnote{The communication lower bound in \cref{thm:prob-informal} is independent of the number of rounds, so long as $r$ is not too large; see \cref{thm:prob,thm:prob-sublinear} for the precise range of $r$.} for $\prob_{n,r}$.
Unlike most of the known lower bounds for pointer chasing problems (see \cref{sec:related}), \cref{thm:prob-informal} neither loses any $\poly(r,\log n)$ factor nor has iterated-logarithm-type dependency on $r$.
We believe that $\prob_{n,r}$ can be a powerful alternative to pointer chasing and may be of independent interest.

\begin{theorem}[Informal version of \cref{thm:prob,thm:prob-sublinear}]
\label{thm:prob-informal}
The $r$-round randomized communication complexity of $\prob_{n,r}$ with constant error probability is $\Omega(n\log n)$, for $r \ll n$.
\end{theorem}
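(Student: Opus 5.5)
The plan is to prove the bound under the natural hard distribution $\mu$: $X,Y$ uniform in $\bits^n$ and all $2r$ permutations independent and uniform. By Yao's principle it suffices to handle deterministic protocols of cost $C=o(n\log n)$. First I rewrite the output. Write $\sigma=A_r\cdots A_1$ and $\tau=B_r\cdots B_1$. Then
\[
\ang{A_r\cdots A_1X,\;B_r\cdots B_1Y}=\ang{X,\pi Y},\qquad \pi=\sigma^{-1}\tau=A_1^{-1}\cdots A_r^{-1}B_r\cdots B_1 .
\]
So the task is to compute $\ang{X,\pi Y}$, where $\pi$ is a uniform permutation (a perfect matching between coordinates of $X$ and of $Y$). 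The factors of $\pi$ are distributed alternately between the players, and the bits $X,Y$ sit entirely with Bob.

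The point is that Bob holds the data but not the matching. The trivial protocol is for Alice to send her layers, which costs $\Theta(rn\log n)$, or at least to send information equivalent to $\Theta(n\log n)$ bits about the matching. The layered structure is meant to prevent any cheaper way of chasing the matching in few rounds.

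The argument would proceed in the following steps.

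\textbf{Step 1 (round-by-round invariant).} After round $k$, condition on the transcript $T_{\le k}$. I want to show that the \emph{unchased part} of $\pi$ still has nearly full entropy. Concretely, consider the composition of the layers that neither player can yet evaluate jointly, say the product of layers $k+1,\dots,r$ on both sides. I want its conditional entropy to be at least $\log n!-\cO(C)$, with the layers on both sides remaining close to independent given the transcript. This would go by a round-elimination argument in which the speaker in round $k$ holds only every other layer. Any message is a function of that player's own layers (and, for Bob, of $X,Y$). By the rectangle property, conditioning on it keeps the two players' layers independent and lowers total entropy by at most the message length. Because permutations are composed, a deficit of $d$ bits in a single layer translates into a deficit of at most $d$ bits in any product that contains a uniform independent factor. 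This is where $r\ll n$ enters: the accumulated deficit over the $r$ rounds must stay below $n\log n$ by a margin.

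\textbf{Step 2 (entropy of $\pi$ at the end).} When the protocol ends, Bob (or whoever outputs) has seen $C$ bits. Conditioned on the transcript and on Bob's own input, $\pi$ is distributed as $\rho\circ\pi'$, with $\rho$ known and $\pi'$ a random permutation of entropy at least $\log n!-\cO(C)$. Here $\pi'$ comes from the composition of Alice's layers, which are still independent of $X,Y$ given the transcript.

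\textbf{Step 3 (high-entropy matching kills the inner-product bias).} This is the main obstacle: show that for any fixed $X,Y$ that are not too degenerate (for example, both of weight about $n/2$, which holds with high probability even after conditioning on $o(n)$ bits about them), and for a random permutation $\pi'$ with entropy deficit $\delta n\log n$ with $\delta=o(1)$, the bias $\abs{\Exp(-1)^{\ang{X,\pi' Y}}}$ is bounded away from $1$.

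I would try the following. Entropy deficit $o(n\log n)$ implies, by a chain-rule argument on $\pi'(1),\pi'(2),\dots$, that for most $i$ the value $\pi'(i)$ given the previous values has entropy $(1-o(1))\log n$, and hence lands almost uniformly on a $1-o(1)$ fraction of the remaining targets. Then I expose the coordinates one at a time and apply a martingale / anti-concentration bound to $\sum_i X_i Y_{\pi'(i)}$ mod $2$. This should give constant (indeed exponentially small in $n$) bias, using a \holder{}-type bound to control the coordinates whose conditional entropy is low.

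Combining Steps 1--3, any protocol with $C=o(n\log n)$ has only $o(1)$ advantage under $\mu$, which contradicts constant error and proves the theorem. The precise range of $r$ comes from the per-round loss in Step 1.
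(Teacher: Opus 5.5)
There is a genuine gap, and it starts with Step 3, which is false as stated. Fix balanced $x,y\in\bits^n$ and let $\pi'$ be uniform on $\set{p\in\cP_n : \ang{x,py}=0}$. This set contains about half of $\cP_n$, so $\pi'$ has entropy deficit about one bit, yet its bias is exactly $1$. A parity can be fixed by the last few choices of the matching, so no chain-rule or anti-concentration argument over $\pi'(1),\pi'(2),\dots$ can control it.

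This cannot be patched by strengthening Step 3. The quantity you track, the entropy of $\pi$ given the transcript and the output party's input, does not see the round structure, and the round structure is essential. Suppose Bob speaks first (equivalently, allow one extra round after an empty first message from Alice). Let round $k$ carry $A_{k-1}\cdots A_1X$ and $B_{k-1}\cdots B_1Y$, sent by the owner of layer $k-1$. This solves $\prob_{n,r}$ in $r$ rounds with $2rn$ bits. For even $r$, Bob outputs. Given the transcript and Bob's input, Alice's $r$ permutations are uniform on a set of size at least $(n!)^r2^{-rn}$, and each value of $\pi$ has at most $(n!)^{r-1}$ preimages. Hence $\bbH_\infty(\pi)\ge\log n!-rn$, and Alice's layers are independent of $X,Y$ given the transcript. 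Yet $\ang{X,\pi Y}$ is determined, because the transcript contains $X,Y$ and the conditional law of Alice's layers is adapted to them. Your Steps 2--3 never use that Alice speaks first. The Step 1 invariant (``layers $k+1,\dots,r$ are unchased'') is vacuous after round $r$. So the argument would also rule out this cheap protocol.

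The missing idea is to track the chased vectors instead of the matching. The paper sets $(W_0,Z_0)=(X,Y)$ and $(W_i,Z_i)=(A_iW_{i-1},B_iZ_{i-1})$. It shows that, even after revealing $X,Y$, the layers before $i-1$, $(W_{i-1},Z_{i-1})$ and the messages so far, $(W_i,Z_i)$ typically keeps collision entropy at least $2n-\Phi_i$, with $\Phi_r=o(n)$.

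The engine is \cref{lem:prob-D}. Since $\Pr_w(aw=a'w)=2^{\cyc(a^{-1}a')-n}$ and permutations with many cycles are rare (\cref{clm:cyc}), a min-entropy deficit $D$ in $(A_i,B_i)$ costs only about $(D+\cO(n))/\log n$ bits of collision entropy of $(A_iw,B_iz)$ for typical $(w,z)$. This compression is where the $\log n$ factor comes from, and your proposal has no analogue of it.

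Rounds enter as follows. With Alice first, round $i$ is sent by the owner of $(A_i,B_i)$, so by the rectangle property $(A_i,B_i)$ is independent of $(W_{i-1},Z_{i-1})$ given the transcript. In the last round, the output and $(W_{r-1},Z_{r-1})$ depend only on the receiver's layer, while $(A_r,B_r)$ belongs to the sender. Cauchy--Schwarz then bounds the advantage by $\bigl(2^{-\bbH_2(W_{r-1},Z_{r-1}\mid\cdot)}\cdot 2^n\,\Exp\, 2^{\cyc(c^{-1}c')}\bigr)^{1/2}$ with $c=a^{-1}b$ (\cref{lem:prob-O}).

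Finally, each round must be charged only for the information actually revealed about its own layer; charging the full length $L$ per round gives only $\Omega(n\log n/r)$. To achieve this, the paper adds virtual messages $\floor{-\log\Pr(A_i,B_i\mid\gamma_{i-1},m_i)}$. These make the min-entropy and Shannon deficits agree up to one bit, so by the chain rule the deficits sum to $\cO(L)$ over all rounds.
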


We remark that our lower bounds for graph connectivity can be extended to other fundamental graph properties.
In \cref{thm:property}, we give lower bounds similar to \cref{thm:const-informal} for deciding if the input graph is bipartite, cycle-free, Eulerian, Hamiltonian, or planar.

Our lower bounds also immediately imply optimality of known query and streaming algorithms.
In particular, \cref{thm:nondet-informal} means that the zero-error cut query algorithm for connectivity in \cite[Theorem 1.2]{AEGLMN22} has optimal expected query complexity $\Theta(N)$.
Their algorithm actually outputs a spanning forest, whereas our lower bound holds even for algorithms that only decide connectivity; see \cref{cor:query} for details.

Another important corollary is that any constant-pass streaming algorithm for global minimum cuts in unweighted graphs, also known as edge connectivity, requires $\Omega(N\log N)$ space.
It implies that the two-pass algorithm in \cite{AD21} is indeed optimal, even in logarithmic factors.
This is formalized in \cref{cor:stream}.

\subsection{Related Work}
\label{sec:related}

\paragraph{Pointer chasing.}

Initially introduced by \cite{PS84}, the problem of pointer chasing $\pc_{n,k}$ has played a central role in communication complexity.
Specifically, the problem instance consists of $k$ functions $f_1,\dots,f_k: [n] \to [n]$, together with some fixed starting vertex $v_0$.
Alice owns $f_i$ for odd $i$ and Bob owns $f_i$ for even $i$.
For $i \in [k]$, recursively define the $i$-th target vertex $v_i=f_i(v_{i-1})$.
The goal is for both parties to output $v_k$.
For the bit version $\bpc_{n,k}$, protocols only need to decide the parity of $v_k$.

In the original version of \cite{PS84}, Alice holds a single function $f_\sA$, and $f_1 = f_3 = \dots = f_\sA$.
Similarly for Bob, $f_2 = f_4 = \dots = f_\sB$.
A straightforward protocol uses only $\cO(k\log n)$ communication in $k$ rounds, by exchanging target vertices $v_i$.
A long line of work has thus focused on protocols with $r \le k-1$ rounds, or $k$ rounds but starting with Bob.
\cite{NW93} gave a $(k-1)$-round randomized protocol using $\cO((n/k+k) \cdot \log n)$ communication.
On the lower bound side, a series of results \cite{NW93,Kla00,Yeh20,MYZ25} leads to an $\Omega(n/k+k)$ randomized lower bound even for the bit version.
Additionally, for constant $k$, \cite{DJS98} showed a $k$-round deterministic protocol starting with Bob that uses $\cO(n\log^{(k-1)} n)$ communication, which is shown to be tight in \cite{PRV01} via a matching $\Omega(n\log^{(k-1)} n)$ randomized lower bound.

For reductions to connectivity and other related graph problems, it is also useful to study the variants of pointer chasing where each function $f_i$ is promised to be a permutation; see for example \cite{AN21,RRV26}.
We denote these variants by $\pcp_{n,k},\bpcp_{n,k}$.

\paragraph{Beyond constant error.}

The standard randomized communication complexity concerns protocols with some fixed constant error probability, typically $1/3$.
In many cases, however, it is also important to understand tradeoffs between communication and error probability.
That is, we want to obtain error-dependent communication bounds.
Examples\footnote{The first two cases are also closely related to direct product theorems and XOR lemmas.} include error probabilities close to $1$ (i.e., small success probabilities)~\cite{BRWY13,BW15,JPY16}, close to $1/2$ (i.e., small advantages)~\cite{BM13,Yu22}, close to $0$~\cite{BGPW13,HPZZ21,LMW23}, and asymmetric error probabilities~\cite{GW16}.

\paragraph{Other computation models.}

Communication complexity has proven to be an extremely useful tool in a wide variety of computation models.
For example, via standard connections, strong results are known for streaming~\cite{FKMSZ08,GKK12,SW15,GO16,AD21,AKZ24,AKNS24,ABKNS26,EKSZ26}, sketching~\cite{NY19,AKO20,Yu21,AKZ25}, and query algorithms~\cite{AEGLMN22,BBEMN22,ASW25,JNS26}.

As a fundamental graph property, connectivity in particular has been studied in many different settings other than two-party communication.
The most closely related one is multi-party communication~\cite{PVZ16,WZ17}.
Besides, efficient algorithms and strong lower bounds for graph connectivity have also been established for streaming~\cite{FKMSZ08,SW15}, sketching~\cite{AGM12,NY19,Yu21}, and query algorithms~\cite{ACK21,AEGLMN22}.

\paragraph{Quantum communication complexity.}

With the rapid development of quantum computation, quantum communication complexity has attracted increasing attention as well.
Strong quantum bounds are shown for many classical communication problems, such as graph connectivity~\cite{IKLSW12,EKNP14}, pointer chasing~\cite{JRS02a,JRS02b,KNTZ07}, inner product~\cite{Kre95,NS06,CDNT13}, and set disjointness~\cite{BCW98,HW02,Raz03,JRS03,AA05,KSW07,BGKMT18}.

%% file: overview.tex
\section{Technical Overview}
\label{sec:overview}

We provide a high-level overview of our proofs in this section.
\cref{sec:overview-nondet} is dedicated to the nondeterministic lower bound and compares our approach with \cite{RS95}.
Towards the end, we also briefly discuss how to generalize our techniques to prove error-dependent lower bounds for randomized protocols.
\cref{sec:overview-const} first motivates our new two-party communication problem $\prob$ by a comparison with the pointer-chasing-based approach of \cite{RRV26}, and then sketches the intuition for the communication lower bound on $\prob$.

\subsection{Nondeterministic Lower Bound}
\label{sec:overview-nondet}

Nondeterministic communication complexity is by definition closely related to monochromatic rectangles.
In particular, upper bounds on the maximum size of $1$-monochromatic rectangles imply nondeterministic communication lower bounds.

Let $n=N/2$.
The approach of \cite{RS95} focused on a special collection $\cS$ of $n!=2^{n\log n-\cO(n)}$ permutation pairs $(a,b) \in \bconn_N^{-1}(1)$ such that any $1$-monochromatic rectangle $\cA \times \cB$ contains at most $(n/\polylog(n))^n=2^{n\log n-\Omega(n\log\log n)}$ elements of $\cS$.
This yields their $\Omega(n\log\log n)$ nondeterministic lower bound.
A closer look reveals that all permutation pairs $(a,b) \in \cS$ actually have distinct $a$ (and distinct $b$).
Thus, the argument of \cite{RS95} upper bounds the projection of $(\cA \times \cB) \cap \cS$ onto either coordinate.
If we additionally assume that every row (column) of $\cA \times \cB$ contains some element of $\cS$, \cite{RS95} essentially bounds the maximum height (width) of certain $1$-monochromatic rectangles, i.e., \[\max_{\cA \times \cB \subseteq \bconn_N^{-1}(1)} \card{\cA} = \max_{\cA \times \cB \subseteq \bconn_N^{-1}(1)} \card{\cB},\] subject to the assumption on $\cA \times \cB$.
Instead, we directly upper bound the (logarithmic) size of all $1$-monochromatic rectangles: \[S_n = \max_{\cA \times \cB \subseteq \bconn_N^{-1}(1)} \log\card{\cA \times \cB}.\]

Consider two independent random variables $A \sim \cA$ and $B \sim \cB$, where $\cA \times \cB$ is the maximizer in the above definition, so $\bbH(A,B)=S_n$.
In the bipartite graph view, $A,B$ induce a Hamiltonian cycle $C$ of length $N=2n$ among left vertices $[n]_\sL$ and right vertices $[n]_\sR$.
Moreover, the edges of $C$ are provided by $A,B$ alternately.

The starting point of our proof is the following observation (\cref{lem:nondet-recur}).
Suppose we know one edge $(u_\sL,v_\sR)$ supplied by $A$, so $A(u)=v$.
Its two adjacent edges\footnote{These are all distinct edges as $B(u) \ne v$. Otherwise, the graph is not Hamiltonian for $n>1$.} $(u_\sL,B(u)_\sR)$ and $(B^{-1}(v)_\sL,v_\sR)$ must be supplied by $B$.
Imagine ``shortcutting'' these three edges by a new edge $(B^{-1}(v)_\sL,B(u)_\sR)$.
Then, we actually get another cycle of length $2(n-1)=N-2$ on all vertices except $u_\sL,v_\sR$.
In other words, up to renaming of the remaining vertices, we 
have reduced the case to $1$-monochromatic rectangles for a smaller input size.

Let $A_{(u,v)}: [n] \setminus \set{u} \to [n] \setminus \set{v}$ denote the restriction of $A$ on $[n] \setminus \set{u}$.
Also let $B_{(u,v)}: [n] \setminus \set{u} \to [n] \setminus \set{v}$ be the restriction of $B$ on $[n] \setminus \set{u}$ except redefining $B_{(u,v)}(B^{-1}(v))=B(u)$.
Note that specifying $A(u)=v$ reveals at most $\log n$ bits of information on average while $\bbH(A_{(u,v)},B_{(u,v)})$ is at most $S_{n-1}$ by the argument above.
As $A$ is fully determined given $A(u)=v$ and $A_{(u,v)}$, it now boils down to upper bounding the remaining randomness in $B$.

Equivalently, we want to show that there cannot be too many $b' \ne b \in \cB$ satisfying $b'_{(u,v)}=b_{(u,v)}$ for any fixed $b \in \cB$.
A trivial upper bound of $n-1$ immediately follows from the observation that $b_{(u,v)}$ specifies $n-1$ edges for $b'$, out of which only one is newly introduced by  ``shortcutting''.
However, an increase of $\log n + \log(n-1)$ for each recursive step leads to an upper bound of $\log(n!)+\log((n-1)!)$ on $S_n$.
Meanwhile, it can be verified that $\card{\bconn_N^{-1}(1)}=n!(n-1)!$.
No meaningful result can be derived from such a bound.

The key to a much tighter bound is realizing that $\cA \times \cB$ being $1$-monochromatic actually imposes more restrictions on $b'$, as $\bconn_N(a,b')=1$ must also hold for all $a \in \cA$.
A careful analysis (\cref{lem:nondet-bp}) establishes two necessary conditions for $b'_{(u,v)}=b_{(u,v)}$:
\begin{enumerate}
    \item It must be that\footnote{See \cref{sec:prelim} for the cycle decomposition representation of permutations and also \cref{fig:nondet-bp} for a detailed illustration of the first condition.} $b^{-1}b'=(u\ w\ b^{-1}(v))$ for some $w \in [n] \setminus \set{u,b^{-1}(v)}$;
    \item Moreover, $u,b^{-1}(v),w$ must follow the same cyclic order $(u\ b^{-1}(v)\ w)$ along the cycle $b^{-1}a$ for every $a \in \cA$.
\end{enumerate}
The first condition essentially rephrases the requirement in the trivial upper bound above.
Intuitively, the more important second condition is useful because the cyclic order of many triples of vertices must be fixed if we want to have a lot of $w$ satisfying the condition.
On the other hand, a large $\cA$ means high randomness in the cycle $b^{-1}a$, which in turn implies that most triples of vertices should not have a fixed cyclic order.
The conflict between these two requirements suggests a better upper bound.

Formalizing this intuition involves a delicate edge coloring argument (\cref{lem:nondet-W}).
After all, we manage to show that the remaining randomness in $B$ conditioned on $B_{(u,v)}$ is in fact bounded away from $\log n$ by at least\footnote{Here, we actually assume $\card{\cA} \ge \card{\cB}$. This is without loss of generality by symmetry.} $\Omega(\bbH(A)/n)=\Omega(S_n/n)$, up to lower order terms, when averaged over $(u,v)$.
Putting everything together, we have that \[S_n \lesssim \log n + S_{n-1} + \log n - \Omega\paren*{\frac{S_n}{n}}.\]
Solving the recurrence then yields \[S_n \lesssim (2-\Omega(1)) \cdot n\log n,\] concluding our proof.

As to randomized protocols, we generalize our techniques to upper bound the size of any weighted, almost $1$-monochromatic rectangle.
Specifically, suppose \[\Pr(\bconn_N(A,B)=0) \le \delta,\] for independent random permutations $A,B$.
We prove an upper bound (\cref{lem:err-S}) on $\bbH(A,B)$, similar to the nondeterministic one, with an additional $\cO(\sqrt{\delta} \cdot \poly(n))$ loss.
Indeed, a typical $A_{(u,v)} \times B_{(u,v)}$ is also $\delta$-close to $1$-monochromatic on average.
Thus, the recursive approach for the nondeterministic setting still applies.
The only source of the additional $\cO(\sqrt{\delta} \cdot \poly(n))$ loss comes from upper bounding the remaining randomness in $B$ (\cref{lem:err-bp}).
Overall, as typical transcripts on input from $\bconn_N^{-1}(1)$ correspond to weighted, almost $1$-monochromatic rectangles, optimal $\Omega(n\log n)$ lower bounds can be derived for polynomially small error probabilities.
The full lower bound for the entire scale of error probabilities then follows by repetition.

\subsection{Randomized Lower Bound with Constant Error}
\label{sec:overview-const}

The $2$-round lower bound of \cite{RRV26} for randomized protocols solving connectivity crucially relies on the bit version $\bpcp_{n,k}$ of pointer chasing with permutations.
Ordinarily, the literature around pointer chasing primarily concerns the separation between $(k-1)$-round and $k$-round protocols.
Different variants are typically irrelevant to the communication complexity, up to logarithmic factors, in such a case.
Upper bounds mentioned in \cref{sec:related} all apply to $\pc_{n,k}$, while many lower bounds can be adapted to hold even for $\bpcp_{n,k}$.

However, the situation changes dramatically if we do care about the logarithmic gap.
\cite{RRV26} gave a protocol for solving\footnote{Their protocols are stated for $\pcp_{n,k}$, but also work for $\pc_{n,k}$ in general.} $\pc_{n,k}$ with $\cO(nk\log\log n)$ communication in roughly $k/3$ rounds.
The round complexity can be reduced to about $k/4$ for the bit version.
No protocols with significantly fewer rounds are known.
In sharp contrast, we present in \cref{sec:pcp} round-efficient protocols for solving $\pcp_{n,k}$ and $\bpcp_{n,k}$ with $\cO(nk\log\log n)$ communication in only $\cO(\sqrt{k})$ rounds.

The $2$-round lower bound of \cite{RRV26} for connectivity follows from an optimal $\Omega(n\log n)$ communication lower bound for $2$-round\footnote{To better align with our convention for graph connectivity protocols, we assume throughout this paper that only the receiver of the last message has to output. The results in \cite{RRV26} were stated in the standard context of pointer chasing, where both parties need to output. This changes the number of rounds by $1$. Both their protocol and our protocol start with Bob.} protocols solving $\bpcp_{n,7}$.
Both their protocol and our protocol can solve $\bpcp_{n,6}$ with $\cO(n\log\log n)$ communication in $2$ rounds.
With $3$ rounds and $\cO(n\log\log n)$ communication, their protocol can only solve $\bpcp_{n,10}$, whereas our protocol can solve $\bpcp_{n,12}$.
The smallest hard instance that has not been ruled out is thus $\bpcp_{n,13}$. 
For more rounds and general $k$, to the best of our knowledge, it is unknown if $\Omega(n\log n)$ communication lower bounds can be proved for any variant of pointer chasing, even for protocols with far fewer than $\sqrt{k}$ rounds.

The protocol in \cref{sec:pcp} essentially preprocesses disjoint intervals of layers in parallel.
For each interval, almost balanced partitions into $\Theta(\log n)$ parts are constructed for each layer such that they are consistent with the permutations, at the communication cost of $\cO(n\log\log n)$ per layer.
Then, revealing the starting vertex allows the protocol to jump over each interval while only requiring the exact pointers for vertices within the same part.
In a nutshell, the protocol is only forced to send full information about a $\Theta(1/\log n)$ fraction of the pointers in each layer, once the starting vertex is fixed.

This motivates our new two-party communication problem $\prob_{n,r}$.
Recall that protocols for $\prob_{n,r}$ have to output $\ang{A_r \cdots A_1 X, B_r \cdots B_1 Y}$, where $X,Y$ are bit vectors and $A_1,\dots,A_r,B_1,\dots,B_r$ are permutations.
Intuitively, permutations $A_1,\dots,A_r$ are expected to send a random $X$ to a random bit vector for the final inner product.
Similarly for $B_1,\dots,B_r$ and $Y$.
Therefore, if some protocol only chased a small number of coordinates, it should not be able to predict the inner product very well.

The proof of the communication lower bound for $\prob_{n,r}$ involves delicate information-theoretic arguments.
The high level plan is straightforward: we keep track of the joint distribution of $A_i \cdots A_1 X$ and $B_i \cdots B_1 Y$ in round $i$ during execution of the protocol, by bounding its deviation from uniform.
Nevertheless, proving the tight bound, in particular ensuring no $\poly(r,\log n)$ factor loss, turns out to be quite nontrivial and requires extensive use of \renyi{} entropies $\bbH_\alpha(\cdot)$ (specifically, for $\alpha=2,\infty$, namely collision entropy and min-entropy).
Plenty of further intuitions and more technical details are provided throughout the full proof in \cref{sec:prob}.

Once lower bounds for $\prob$ are shown, our reduction to graph connectivity uses a layered construction similar to those in \cite{AN21,RRV26}.

%% file: prelim.tex
\section{Preliminaries}
\label{sec:prelim}

\paragraph{Notation.}

For $n \ge 1$, we write $[n]$ as a shorthand for $\set{1,\dots,n}$.
For any tuple $x=(x_1,x_2,\dots)$ and $i \ge 1$, define $x_{<i}$ to be $(x_1,\dots,x_{i-1})$.
Similarly define $x_{\le i}$, $x_{\ge i}$, etc.
For two bit vectors $x,y$, we write $\ang{x,y}$ for their binary inner product.
$\circ$ denotes composition of functions.
Logarithms are always base 2.
Iterated logarithm of order $k$ is denoted $\log^{(k)}(\cdot)$.

\paragraph{Permutations.}

For $n \ge 1$, an $n$-permutation $\sigma$ is a bijection from $[n]$ to $[n]$.
Let $\cP_n$ be the collection of all $n$-permutations, so $\card{\cP_n}=n!$.
We also use the standard cycle decomposition representation for permutations.
For example, $\sigma=(1\ 4)(2\ 10\ 5)$ means that $\sigma(1)=4$, $\sigma(4)=1$, $\sigma(2)=10$, etc., and all other elements (of omitted $1$-cycles) map to themselves.
For uniqueness, the smallest element of each cycle is written first, and the cycles are arranged in increasing order of their smallest elements.
The number of cycles in the decomposition of $\sigma$ is denoted by $\cyc(\sigma)$.
Let $\cC_n \subseteq \cP_n$ be the collection of all $n$-cycles, which are exactly $n$-permutations $\sigma$ with $\cyc(\sigma)=1$, so $\card{\cC_n}=(n-1)!$.
Additionally, with slight abuse of notation, an $n$-permutation $\sigma$ can be equivalently represented by an $n \times n$ binary matrix where $\sigma_{j,i}=\1(j=\sigma(i))$ (i.e., mapping the $i$-th standard basis vector to the $\sigma(i)$-th).
So $\sigma^{-1}=\sigma^\top$.
The specific representation we use will be clear from context.

\paragraph{Information theory.}

For any distribution $Q$, the support of $Q$ is denoted by $\supp(Q)$.
We use $x \sim Q$ to mean an outcome drawn according to $Q$.
For any set $S$, we also write $x \sim S$ when $x$ is drawn from $S$ uniformly at random.
With slight abuse of notation, we may write a random variable $X$ interchangeably with its distribution.
For $\alpha \in [0,\infty]$, the \renyi{} entropy of order $\alpha$ of a random variable $X$ is denoted by $\bbH_\alpha(X)$.
The cases of $\alpha=1,2,\infty$ are also known as Shannon entropy, collision entropy, and min-entropy, respectively.
We use $\bbD_\alpha(X \Vert Y)$ to denote the \renyi{} divergence of order $\alpha$ of $X$ from $Y$.
The case of $\alpha=1$ is also known as Kullback-Leibler (KL) divergence.
When $Y$ follows the uniform distribution over a universe $\cU \supseteq \supp(X)$, we simply write $\bbD_{\alpha}(X)=\bbD_{\alpha}(X \Vert Y) = \log\card{\cU} - \bbH_\alpha(X)$.
When there is no confusion, we use the convention $\bbH(\cdot)=\bbH_1(\cdot)$ and $\bbD(\cdot \Vert \cdot)=\kl(\cdot \Vert \cdot)=\bbD_1(\cdot \Vert \cdot)$.
The mutual information between two random variables $X$ and $Y$ is denoted by $\bbI(X;Y)$.
\cref{sec:info} provides a summary of the necessary background and tools we use in information theory.
Throughout, random variables are denoted by uppercase letters, and we use the corresponding lowercase letters for their outcomes.

\paragraph{Communication complexity.}

We use the standard two-party communication model in this paper.
Deterministic communication complexity is denoted by $\bfD(\cdot)$ while nondeterministic communication complexity is denoted by $\bfN(\cdot)$.
For $r \ge 1$ and $\varepsilon \in [0,1]$, we use $\bfR^{(r)}_\varepsilon(\cdot)$ for the $r$-round randomized communication complexity with error probability $\varepsilon$.
When the number of rounds is unbounded, we simply write $\bfR_\varepsilon(\cdot)$.
For any constant error probability $\varepsilon \in (0,1/2)$, the subscript is omitted, as it can always be assumed without loss of generality to be $1/3$ by repetition.
We use the convention that Alice sends to Bob in the first round and that output is only required for the receiver of the last message, unless explicitly stated otherwise.
Throughout, communication is always measured in the worst case.

%% file: nondet.tex
\section{Nondeterministic Lower Bound for Connectivity}
\label{sec:nondet}

In this section, we prove \cref{thm:nondet}.
The upper bound follows from a simple one-way deterministic protocol that communicates to Bob a spanning forest of Alice's edges.

\begin{theorem}
\label{thm:nondet}
For even $N>1$, the nondeterministic communication complexity of $\bconn_N$ is $\bfN(\bconn_N) = \Theta(N \log N)$, regardless of the number of rounds.
\end{theorem}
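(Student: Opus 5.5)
The upper bound $\bfN(\bconn_N)=\cO(N\log N)$ comes from the one-way deterministic protocol in which Alice sends her permutation $a$ using $\log(n!)=\cO(N\log N)$ bits. For the lower bound, let $n=N/2$. I will use the standard fact that $\bfN(f)\ge \log(\card{f^{-1}(1)}/M)$, where $M$ is the maximum size of a $1$-monochromatic rectangle. Since $\card{\bconn_N^{-1}(1)}=n!(n-1)!$, we have $\log\card{\bconn_N^{-1}(1)}=2n\log n-\cO(n)$. So it suffices to show
\[S_n=\max_{\cA\times\cB\subseteq\bconn_N^{-1}(1)}\log\card{\cA\times\cB}\le(2-c)\,n\log n+\cO(n)\]
for some constant $c>0$.

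Fix a maximizing rectangle and take $A\sim\cA$, $B\sim\cB$ independent and uniform, so that $\bbH(A,B)=S_n$. By symmetry between $a$ and $b$ (inverting and swapping roles preserves the cycle condition on $b^{-1}a$), assume $\card{\cA}\ge\card{\cB}$, so $\bbH(A)\ge S_n/2$.

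\textbf{Shortcut recursion.} Pick $u\in[n]$ uniformly at random and let $v=A(u)$. Form the restricted pair $(A_{(u,v)},B_{(u,v)})$, where $B_{(u,v)}(B^{-1}(v))=B(u)$. After relabeling, $b^{-1}a$ remains an $(n-1)$-cycle, so the supports of $A_{(u,v)}$ and $B_{(u,v)}$ again form a $1$-monochromatic rectangle for size $n-1$. Subadditivity of entropy on product supports then gives $\bbH(A_{(u,v)},B_{(u,v)})\le S_{n-1}$. Expanding with the chain rule,
\[S_n\le \bbH(A(u)\mid u)+\bbH(A_{(u,v)},B_{(u,v)}\mid u,v)+\bbH(B\mid B_{(u,v)},u,v),\]
and the first two terms are at most $\log n+S_{n-1}$. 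This is the recursion in the form of \cref{lem:nondet-recur}. The remaining work is to show that the last term, averaged over $u$, is at most $\log n-\Omega(\bbH(A)/n)+\cO(1)$.

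\textbf{Main obstacle.} Bounding the residual entropy of $B$ is where the difficulty lies. First I would prove the structural lemma (\cref{lem:nondet-bp}): if $b'_{(u,v)}=b_{(u,v)}$ with $b'\neq b$, then $b^{-1}b'=(u\ w\ b^{-1}(v))$ for some $w$. Moreover, every $a\in\cA$ must place $u,b^{-1}(v),w$ in the same cyclic order on the cycle $b^{-1}a$, because otherwise the $3$-cycle splits $b^{-1}a$ into several cycles. Consequently the residual entropy is at most $\log$ of the number of $w$ whose triple has a cyclic order fixed by all of $\cA$.

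Second, I would show that a large $\cA$ forces few such fixed triples (the role of \cref{lem:nondet-W}). The plan is to view each $a$ as the cyclic sequence $b^{-1}a$, build a graph on pairs $(u,w)$ whose triples are fixed, and run an edge-coloring/counting argument. Many fixed triples force the cyclic arrangement to be largely determined, which caps the number of cyclic orders, and hence the number of $a$'s, by roughly $2^{n\log n-\Omega(\cdot)}$. Quantitatively, I expect this to give an average count of valid $w$ of at most $n\cdot 2^{-\Omega(\bbH(A)/n)}$. Since $\bbH(A)\le\log n!$ this requires care with concavity: I would average $\log(\#w)$ over $u$ and apply Jensen to pass to the global bound. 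This is the step I would expect to need the most iteration.

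\textbf{Solving the recurrence.} Combining the pieces gives $S_n\le S_{n-1}+2\log n-\Omega(S_n/n)+\cO(1)$. To solve it, suppose $S_n\ge(2-c)n\log n$ for a small constant $c$. Then each step gains at most $(2-\Omega(1))\log n$, and telescoping contradicts the assumption. Hence $S_n\le(2-\Omega(1))n\log n+\cO(n)$, which yields $\bfN(\bconn_N)\ge\Omega(n\log n)=\Omega(N\log N)$.
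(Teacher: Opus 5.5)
Your architecture is the paper's: the shortcut recursion, the structural fact that $b'\ne b$ with $b'_{(u,v)}=b_{(u,v)}$ forces $b^{-1}b'=(u\ w\ b^{-1}(v))$ with a cyclic order common to all of $\cA$, and the chain-rule decomposition. Conditioning $B$ only on $(A(u),B_{(u,v)})$ rather than on all of $A$ is fine, since the count from \cref{lem:nondet-bp} depends only on $u,v,b,\cA$.

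\textbf{The gap.} The step you flag as the main obstacle is the entire content of the lower bound, and you do not prove it. The concrete quantitative route you sketch for it also fails. A bound on the \emph{arithmetic} average of $\#w$ of the form $n\cdot 2^{-\Omega(\bbH(A)/n)}$, followed by Jensen, is false. Split $[n]$ into pairs $\set{x_k,y_k}$, take $\cB=\set{b}$, and let $\cA$ be all $a$ for which $b^{-1}a$ is an $n$-cycle mapping $x_k\mapsto y_k$ for every $k$. Then $\bbH(A)=\log((n/2-1)!)=\Omega(n\log n)$. Yet for every $u=x_k$ the order $(w\ x_k\ y_k)$ is forced for every $w\notin\set{x_k,y_k}$, so $\card{\cW_{(x_k,y_k)}(b)}=n-2$. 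The average count is therefore $\Omega(n)$, not $n^{1-\Omega(1)}$. Only the averaged-logarithm statement can hold: here it is about $\frac12\log n$, against the paper's bound of about $\frac34\log n$. Likewise, ``many fixed triples cap the number of cyclic orders'' does not give it by itself. The relevant sets $\cW_{(u,C(u))}$ are evaluated on the edges of the random cycle $C=b^{-1}A$ itself, so a global count of consistent cycles does not control their average logarithm.

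\textbf{The missing idea: the paper's encoding argument for \cref{lem:nondet-W}.}
\begin{enumerate}
\item Bucket each potential directed edge $(x,y)$ by the dyadic scale $i$ of $\card{\cW_{(x,y)}(b)}+1$.
\item For each scale, fix $t_i=\Theta(n\log n/2^i)$ witnesses such that every scale-$i$ edge has some $w^{(i)}_j\in\cW_{(x,y)}(b)\cup\set{x}$, and color that edge $(i,j)$.
\item The color of the random edge $(U,C(U))$ then has entropy at most $\log n+\cO(\log\log n)-\Exp\log(\card{\cW}+1)$.
\item The crux is that $C$ is recoverable from the out-color $\phi_u$ and in-color $\psi_u$ of every vertex. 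Within one color class, cut $C$ just before $w^{(i)}_j$ to get a linear order. Every potential edge of that color points forward in this order; this is exactly where the cyclic order being common to all of $\cA$ is used. The true edges have rank gap exactly $1$, and every perfect matching between the out-set and in-set has the same total rank gap. Hence the true matching is the unique one using only edges of that color.
\item Subadditivity gives $\bbH(A)=\bbH(C)\le\sum_u(\bbH(\phi_u)+\bbH(\psi_u))$. This rearranges to the needed bound with an $\cO(\log\log n)$ rather than $\cO(1)$ error, which is harmless.
\end{enumerate}

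\textbf{Minor issue in solving the recurrence.} Assuming $S_n\ge(2-c)n\log n$ says nothing about $S_m$ for $m<n$, so the telescoping step needs an actual induction. For example, induct on $S_m\le\frac{2}{1+c}m\log m+\cO(m\log\log m)$, or use the paper's explicit solution, which gives $\frac85 n\log n$.
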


As an immediate corollary, we obtain an optimal query lower bound for graph connectivity in the zero-error setting.
It implies that the cut query algorithm in \cite[Theorem 1.2]{AEGLMN22} is optimal, even if only connectivity has to be decided (without outputting a spanning forest).

\begin{corollary}
\label{cor:query}
For $N \ge 1$, any zero-error randomized cut query algorithm that decides connectivity in $N$-vertex graphs, requires at least $\Omega(N)$ queries in expectation.
\end{corollary}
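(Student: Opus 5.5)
We prove \cref{cor:query} by reducing from the nondeterministic communication complexity of connectivity, and then applying \cref{thm:nondet}. The reduction has three steps: simulate the query algorithm as a protocol, make it zero-error with bounded worst-case cost, and read off a nondeterministic certificate.

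\textbf{Simulating a cut query.} Suppose we have a zero-error randomized cut query algorithm $\mathcal{Q}$ for connectivity of $N$-vertex graphs with expected query complexity $q$. Alice and Bob can simulate one query cheaply. A cut query asks for $\card{E(S,V\setminus S)}$, and since $E=E_\sA\cup E_\sB$ with $E_\sA,E_\sB$ disjoint, we have $\card{E(S,V\setminus S)}=\card{E_\sA(S,V\setminus S)}+\card{E_\sB(S,V\setminus S)}$. Both parties know $S$, since it is determined by public randomness and previous answers. Alice sends her count, which takes $\cO(\log N)$ bits because cut values are at most $N^2$; Bob sends his count as well, so both know the answer. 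Hence each query costs $\cO(\log N)$ bits of communication.

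\textbf{Zero-error protocol with bounded cost.} Running $\mathcal{Q}$ this way gives a public-coin zero-error protocol with expected cost $\cO(q\log N)$. Truncate every run after $3q$ queries, and output ``fail'' if the algorithm has not yet halted. By Markov's inequality, on every input the protocol halts with a correct answer with probability at least $2/3$, and it never outputs a wrong answer. Its worst-case communication is $\cO(q\log N)$.

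\textbf{Extracting a nondeterministic protocol.} Fix any input in $\conn_N^{-1}(1)$. With positive probability the truncated run outputs ``connected'', and it never does so on a disconnected input. So the following nondeterministic protocol is sound and complete: the certificate is the random string together with the transcript, and each party checks consistency with its own input. Equivalently, fixing the random string, every transcript ending in ``connected'' is a $1$-monochromatic rectangle, and these rectangles cover $\conn_N^{-1}(1)$. Standard counting gives $\bfN(\conn_N)=\cO(q\log N)$. The public randomness is not part of the communication, so the guessed random string costs nothing; alternatively, we may guess only the transcript, because the set of inputs consistent with a given (random string, transcript) pair is already a rectangle.

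\textbf{Applying \cref{thm:nondet}.} The bipartite version embeds into $\conn_N$ on $\Theta(N)$ vertices by the footnote on parallel edges. Therefore \cref{thm:nondet} gives $\bfN(\conn_N)=\Omega(N\log N)$, so $q\log N=\Omega(N\log N)$, i.e.\ $q=\Omega(N)$.

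The main point requiring care is the parallel-edge fix. If the hard instance is a multigraph, we instead use the path-splitting version, which keeps the edge sets disjoint and simple, so the cut-value decomposition above still holds. The rest is routine.
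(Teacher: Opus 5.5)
\textbf{There is a gap in turning the randomized protocol into a nondeterministic one.} Your Markov truncation, the $\cO(\log N)$-bit simulation of each cut query, and the final appeal to \cref{thm:nondet} all match the paper's proof. The problem is your claim that guessing the public random string ``costs nothing''.

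A nondeterministic protocol has no shared randomness, so the seed has to be part of the certificate. For a fixed seed, the accepting transcripts cover only the connected inputs accepted under that seed. Covering all of $\conn_N^{-1}(1)$ therefore needs the rectangles $R_{\rho,t}$ indexed by pairs (seed $\rho$, accepting transcript $t$). This gives $\bfN(\conn_N) \le \log(\text{number of seeds}) + \cO(q\log N)$, and the number of seeds of the query algorithm is a priori unbounded.

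The general principle you invoke is in fact false. The standard public-coin protocol for non-equality on $n$-bit strings uses $\cO(1)$ bits and errs only on $1$-inputs, yet the nondeterministic complexity of non-equality is $\Theta(\log n)$.

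Your alternative of guessing only the transcript also fails. Without a common seed, the set of inputs producing a given transcript is a union of rectangles over seeds, not a rectangle. If Alice and Bob each check consistency against their own, possibly different, seed, they reconstruct different query sets $S$. The sum of their reported counts is then not a cut value of $G$, and soundness breaks.

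\textbf{The fix is the step the paper supplies via Newman's theorem.} There are only $2^{\cO(N^2)}$ inputs, so the randomness can be reduced to $\cO(\log N)$ bits while keeping one-sided constant error. Newman's sampling only selects seeds of the original protocol, so it introduces no errors on disconnected inputs.

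Equivalently, use a direct union bound: $\cO(N^2)$ independently sampled seeds suffice so that every connected input is accepted under at least one of them. The certificate is then the index of that seed plus the transcript, which gives $\bfN(\conn_N) \le \cO(q\log N) + \cO(\log N)$. The extra $\cO(\log N)$ is negligible against the $\Omega(N\log N)$ lower bound, so your conclusion $q=\Omega(N)$ holds once this step is repaired.
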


\begin{proof}
By Markov's inequality, any zero-error protocol with $o(N)$ queries in expectation can be transformed into another one with $o(N)$ queries in the worst case, that has one-sided constant error only in the connected case.
As the total number of possible inputs is no more than $2^{\cO(N^2)}$, the use of randomness can be reduced to $\cO(\log N)$ by Newman's theorem~\cite{New91}, while preserving one-sided constant error.
In the meantime, each cut query can be simulated using $\cO(\log N)$ communication.
Altogether, it implies a randomized protocol with $o(N\log N)$ communication and one-sided constant error, contradicting \cref{thm:nondet}, since nondeterministic protocols are at least as easy.
\end{proof}

We prove \cref{thm:nondet} in the rest of this section.
Fix sufficiently large $n=N/2$.
The total number of input combinations $a,b \in \cP_n$ with $\bconn_N(a,b)=1$ is $T=n!(n-1)!$.
To see this, pick an arbitrary permutation $b \in \cP_n$ and an arbitrary cycle $c \in \cC_n$.
Then, $\bconn_N(bc,b)=1$ as $b^{-1}(bc)=c$.
Furthermore, $(bc,b)$ enumerates exactly all possibilities in $\bconn_N^{-1}(1)$, and there are $\card{\cP_n}=n!$ choices for $b$ and $\card{\cC_n}=(n-1)!$ choices for $c$.

Let $\cA \times \cB \subseteq \cP_n \times \cP_n$ be any $1$-monochromatic rectangle for $\bconn_N$.
We want to show that the size of $\cA \times \cB$ must be substantially smaller than $T$.
In particular, any bound of the form $\log\card{\cA \times \cB} \le (2-\Omega(1)) \cdot n\log n$ would suffice to prove \cref{thm:nondet}.
This is because $\log T = (2-o(1)) \cdot n\log n$, implying that at least $2^{\Omega(n\log n)}$ $1$-monochromatic rectangles are required to cover $\bconn_N^{-1}(1)$.

To this end, define \[S_n = \max_{\cA \times \cB \subseteq \bconn_N^{-1}(1)} \log\card{\cA \times \cB}.\]
\cref{lem:nondet-S} shows a recurrence for $S_n$ that is strong enough to obtain the desired bound.

\begin{lemma}
\label{lem:nondet-S}
For $n>1$, it holds that \[\paren*{1+\frac{1}{4n}} \cdot S_n \le S_{n-1} + 2\log n + \cO(\log\log n).\]
\end{lemma}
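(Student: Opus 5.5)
We follow the plan sketched in \cref{sec:overview-nondet}. Fix a maximizing $1$-monochromatic rectangle $\cA \times \cB$, and by symmetry assume $\card{\cA} \ge \card{\cB}$, so $\log\card{\cA} \ge S_n/2$. Let $A \sim \cA$ and $B \sim \cB$ be independent and uniform, so $\bbH(A,B) = S_n$. Pick $U \sim [n]$ uniformly and independently, and set $V = A(U)$. We decompose
\[
S_n = \bbH(A) + \bbH(B) \le \bbH(V \mid U) + \bbH(A_{(U,V)}, B_{(U,V)} \mid U,V) + \bbH(B \mid U, V, B_{(U,V)}).
\]
This uses that $A$ is determined by $(U, V, A_{(U,V)})$, together with the chain rule. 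Since $U$ is independent of $(A,B)$, conditioning on $U$ costs nothing.

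The first term is at most $\log n$.

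For the second term we use the shortcutting observation. For every fixed $(u,v)$, the set of pairs $(a_{(u,v)}, b_{(u,v)})$ ranging over $a \in \cA$ with $a(u)=v$ and $b \in \cB$ is itself a combinatorial rectangle. After relabeling $[n]\setminus\set{u}$ and $[n]\setminus\set{v}$ as $[n-1]$, this rectangle is $1$-monochromatic for $\bconn_{N-2}$. The reason is that $b_{(u,v)}^{-1} a_{(u,v)}$ is exactly the cycle $b^{-1}a$ with $u$ deleted, hence an $(n-1)$-cycle. Conditioned on $(U,V)=(u,v)$, the pair $(A_{(u,v)}, B_{(u,v)})$ is supported on this rectangle. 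Because entropy is at most the log of the support size, the second term is at most $S_{n-1}$. Conditional independence of $A$ and $B$ is not even needed here.

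The third term is the heart of the proof and the main obstacle. Fix $b$ and $(u,v)$. We invoke \cref{lem:nondet-bp}: any $b' \in \cB$ with $b'_{(u,v)} = b_{(u,v)}$ satisfies $b^{-1}b' = (u\ w\ b^{-1}(v))$ for some $w$. Moreover, the triple $(u, b^{-1}(v), w)$ has the same cyclic order along $b^{-1}a$ for every $a \in \cA$. So the number of such $b'$ is at most $1 + W_{b,u,v}$, where $W_{b,u,v}$ counts the vertices $w$ whose cyclic order with $u$ and $b^{-1}(v)$ is fixed across all of $\cA$.

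We then apply \cref{lem:nondet-W}, the edge-colouring argument. The plan is to view the fixed cyclic orders as a colouring of pairs $\set{u, b^{-1}(v)}$ and to show that the average of $\log(1 + W)$ over $(U,V)$ is at most
\[
\log n - \frac{\bbH(A \mid B=b)}{2n} + \cO(\log\log n),
\]
or something of comparable strength. The intuition is that many triples with a fixed order force the random cycle $b^{-1}A$ to carry little entropy. Note that $\bbH(A \mid B=b) = \bbH(A) \ge S_n/2$.

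The delicate point is that $U$ is uniform but $V = A(U)$ is correlated with $A$, while the bound on $W$ must be averaged correctly. We would handle this by writing $\bbH(B \mid \cdots) \le \Exp \log(1 + W_{B,U,V})$ and applying Jensen to pass to an averaged count. That averaged count is controlled by \cref{lem:nondet-W}, with a $\cO(\log\log n)$ slack absorbing lower-order losses.

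Combining the three bounds gives
\[
S_n \le 2\log n + S_{n-1} - \frac{S_n}{4n} + \cO(\log\log n),
\]
which rearranges to the claim.
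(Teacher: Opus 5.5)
Your proposal is correct and follows essentially the same route as the paper. The chain rule splits $\bbH(A,B)$ into $\log n$ for $A(u)$, $S_{n-1}$ from the shortcutting rectangle (your argument is \cref{lem:nondet-recur}), and the multiplicity term, which you bound with \cref{lem:nondet-bp} and \cref{lem:nondet-W} before rearranging with $\bbH(A)\ge S_n/2$. Drop the Jensen step. The bound $\bbH(B \mid U,V,B_{(U,V)}) \le \Exp \log\paren*{\card{\cW_{(U,B^{-1}(V))}(B)}+1}$ with $V=A(U)$ is already exactly the quantity \cref{lem:nondet-W} controls for each fixed $b$ (then average over $B$); passing to $\log\Exp(\cdot)$ would require a stronger bound that the lemma does not give.
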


We first prove \cref{thm:nondet}, assuming \cref{lem:nondet-S}, whose proof is deferred to \cref{sec:nondet-S}.

\begin{proof}[Proof of \cref{thm:nondet}]
For some sufficiently large constant $K$, we solve the recurrence in \cref{lem:nondet-S} as follows:
\begin{align}
S_n
& \le \frac{4n}{4n+1} \cdot S_{n-1} + \frac{4n}{4n+1} \cdot \paren*{2\log n + K\log\log n} \nonumber\\
& \le \frac{4n(4n-4)}{(4n+1)(4n-3)} \cdot S_{n-2} + \paren*{\frac{4n}{4n+1}+\frac{4n(4n-4)}{(4n+1)(4n-3)}} \cdot \paren*{2\log n + K\log\log n} \nonumber\\
& \qquad \vdots \nonumber\\
& \le \paren*{\prod_{i \in [2,n]} \frac{4i}{4i+1}} \cdot S_1 + \paren*{\sum_{i \in [2,n]} \prod_{j \in [i,n]} \frac{4j}{4j+1}} \cdot \paren*{2\log n + K\log\log n} \nonumber\\
& \le \paren*{\sum_{i \in [n]} \prod_{j \in [i,n]} \frac{4j}{4j+1}} \cdot \paren*{2\log n + K\log\log n}, \label{eqn:nondet-S}
\end{align}
as $S_1=\log 1=0$.
Define \[f_n = \sum_{i \in [n]} \prod_{j \in [i,n]} \frac{4j}{4j+1}.\]
Then, we have that $f_1=4/5$ and that $f_n = 4n/(4n+1) \cdot (f_{n-1}+1)$.
By induction on $n$, it can be verified that $f_n=4n/5$.
Plugging it back into \cref{eqn:nondet-S} yields \[S_n \le \frac{8}{5} n\log n + \cO(n\log\log n).\]

Observe that at least $T/2^{S_n}$ $1$-monochromatic rectangles are required to cover $\bconn_N^{-1}(1)$.
Consequently, the nondeterministic communication complexity of $\bconn_N$ is at least \[\log\frac{T}{2^{S_n}} = \log T - S_n \ge (2-o(1)) \cdot n\log n - \frac{8}{5} n\log n - \cO(n\log\log n) = \Omega(N \log N),\] by Stirling's approximation, as claimed.
\end{proof}

\subsection{Proof of \cref{lem:nondet-S}}
\label{sec:nondet-S}

We now prove the recurrence in \cref{lem:nondet-S}.
Fix $\cA \times \cB$ to be the maximizer in the definition of $S_n$.
Without loss of generality, assume $\card{\cA} \ge \card{\cB}$.
Consider two independent and uniformly random variables $A \sim \cA$ and $B \sim \cB$, so we have that \[\bbH(A)+\bbH(B)=\bbH(A,B)=S_n \qquad \text{and} \qquad \bbH(A) \ge \frac{S_n}{2}.\]
Fix $u,v \in [n]$.
Suppose $a(u)=v$ for some $a \in \cA$.
Then, $b(u) \ne v$ must hold for all $b \in \cB$, as otherwise $b^{-1}a$ contains a singleton cycle, contradicting $(a,b) \in \cA \times \cB \subseteq \bconn_{N}^{-1}(1)$.
Define $a_{(u,v)}: [n] \setminus \set{u} \to [n] \setminus \set{v}$ to be the restriction of $a$ on $[n] \setminus \set{u}$.
Also define $b_{(u,v)}: [n] \setminus \set{u} \to [n] \setminus \set{v}$ to be the same as $b$ except that $b_{(u,v)}(b^{-1}(v)) = b(u)$.
That is, $b_{(u,v)}$ redirects the preimage of $v$, namely $b^{-1}(v) \ne u$, to the image of $u$, namely $b(u) \ne v$.
For any $u,v \in [n]$ such that $a(u)=v$ for some $a \in \cA$, define \[\cA_{(u,v)} = \set{a_{(u,v)} \mid a \in \cA, a(u)=v}, \qquad \text{and} \qquad \cB_{(u,v)} = \set{b_{(u,v)} \mid b \in \cB}.\]

\cref{lem:nondet-recur} shows that $\cA_{(u,v)} \times \cB_{(u,v)}$ also constitutes a $1$-monochromatic rectangle for $n-1$, after renaming both $[n] \setminus \set{u}$ and $[n] \setminus \set{v}$ to $[n-1]$.

\begin{lemma}
\label{lem:nondet-recur}
For any $u,v \in [n]$ such that $a(u)=v$ for some $a \in \cA$, it must be that $\wb^{-1}\wa$ consists of a single $(n-1)$-cycle on $[n] \setminus \set{u}$, for all $(\wa,\wb) \in \cA_{(u,v)} \times \cB_{(u,v)}$.
\end{lemma}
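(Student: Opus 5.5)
Fix $(\wa,\wb) \in \cA_{(u,v)} \times \cB_{(u,v)}$, say $\wa = a_{(u,v)}$ for some $a \in \cA$ with $a(u)=v$ and $\wb = b_{(u,v)}$ for some $b \in \cB$. Since $\cA \times \cB \subseteq \bconn_N^{-1}(1)$, the permutation $c = b^{-1}a$ is a single $n$-cycle on $[n]$. The plan is to show that $\wb^{-1}\wa$, a permutation of $[n] \setminus \set{u}$, is obtained from $c$ by deleting $u$ from its cycle. This immediately gives a single $(n-1)$-cycle on $[n]\setminus\set{u}$. As noted before the statement, $b(u) \ne v$ and hence $w := b^{-1}(v) \ne u$, so $\wb$ is a well-defined bijection $[n]\setminus\set{u} \to [n]\setminus\set{v}$.

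The key computation is to determine $\wb^{-1}$ on $[n]\setminus\set{v}$. For $y \ne v, b(u)$ we have $\wb^{-1}(y) = b^{-1}(y)$, while $\wb^{-1}(b(u)) = w = b^{-1}(v)$. Now take $x \in [n]\setminus\set{u}$ and set $y = \wa(x) = a(x) \ne v$.
\begin{itemize}
\item If $a(x) \ne b(u)$, i.e.\ $c(x) \ne u$, then $\wb^{-1}\wa(x) = b^{-1}a(x) = c(x)$.
\item If $a(x) = b(u)$, i.e.\ $c(x) = u$, so $x = c^{-1}(u)$, then $\wb^{-1}\wa(x) = b^{-1}(v) = b^{-1}(a(u)) = c(u)$.
\end{itemize}
Hence $\wb^{-1}\wa$ agrees with $c$ everywhere except that it sends $c^{-1}(u)$ directly to $c(u)$, skipping $u$. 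This is exactly the graph-theoretic ``shortcutting'' described in \cref{sec:overview-nondet}.

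It remains to conclude. Write $c = (u\ x_1\ x_2\ \cdots\ x_{n-1})$, so that $c(u) = x_1$ and $c^{-1}(u) = x_{n-1}$. Then $\wb^{-1}\wa = (x_1\ x_2\ \cdots\ x_{n-1})$, a single $(n-1)$-cycle on $[n]\setminus\set{u}$. For $n=2$ this is the trivial $1$-cycle, which is consistent since $c^{-1}(u) = c(u)$ in that case.

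There is no real obstacle here. The only care needed is the bookkeeping in the redirection: checking that $\wb^{-1}$ differs from $b^{-1}$ only at the point $b(u)$, and that $v$ never arises as an image under $\wa$.
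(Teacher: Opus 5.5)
Your proof is correct and is the same shortcutting argument as the paper's. The paper phrases it in the bipartite-graph view, replacing the length-$3$ subpath $(b(u)_\sR,u_\sL,v_\sR,b^{-1}(v)_\sL)$ of the Hamiltonian $2n$-cycle by a single edge, which is exactly your algebraic statement that $\wb^{-1}\wa$ equals $b^{-1}a$ with $u$ spliced out of its cycle; your version simply makes the bookkeeping explicit.
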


\begin{proof}
Fix any $a \in \cA$ such that $a(u)=v$, and $b \in \cB$.
Consider a bipartite graph with vertex set $[n]$ on each side.
We use subscripts $\sL,\sR$ to distinguish vertices on two sides.
Note that each of $a$ and $b$ corresponds to a perfect matching, and their union forms a single $2n$-cycle, due to $(a,b) \in \bconn_N^{-1}(1)$.
Furthermore, $b(u)_\sR,u_\sL,v_\sR,b^{-1}(v)_\sL$ actually constitute a length-$3$ subpath on the cycle.
The net effect of $a_{(u,v)},b_{(u,v)}$ is simply replacing this subpath with an edge between $b(u)_\sR$ and $b^{-1}(v)_\sL$.
As a result, the edges corresponding to $a_{(u,v)},b_{(u,v)}$ form a single $2(n-1)$-cycle on vertices excluding $u_\sL,v_\sR$, concluding the proof.
\end{proof}

After conditioning on $A(u)=v$, observe that $A_{(u,v)}$ is now a uniformly random element of $\cA_{(u,v)}$.
However, $B_{(u,v)}$ is not necessarily uniform over $\cB_{(u,v)}$ as each $b_{(u,v)}$ may arise from a different number of $b \in \cB$.
Nevertheless, \cref{lem:nondet-recur} implies that \[\log\card{\cA_{(u,v)} \times \cB_{(u,v)}} \le S_{n-1}.\]
As will be seen later, this is exactly the source of the $S_{n-1}$ term in the recurrence in \cref{lem:nondet-S}.
In addition, since $A(u)$ has at most $n$ possibilities, its uncertainty contributes one $\log n$ term.
What remains is to bound the contribution coming from multiplicity of $b_{(u,v)}$.

To this end, for $x \ne y \in [n]$, let $\cW_{(x,y)}(b)$ be the collection of $w \in [n] \setminus \set{x,y}$ such that $(w\ x\ y)$ is compatible with the cycle $b^{-1}a$ for every $a \in \cA$.
That is, $w,x,y$ follow the cyclic order $(w\ x\ y)$, instead of the only other possibility $(w\ y\ x)$, along $b^{-1}a$.

\cref{lem:nondet-bp} upper bounds the multiplicity of $b_{(u,v)}$.

\begin{lemma}
\label{lem:nondet-bp}
For any $u,v \in [n]$ such that $a(u)=v$ for some $a \in \cA$, and $b \in \cB$, it holds that \[\card{\set{b' \in \cB \mid b'_{(u,v)}=b_{(u,v)}}} \le \card{\cW_{(u,b^{-1}(v))}(b)}+1.\]
\end{lemma}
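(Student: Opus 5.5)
The plan is to first pin down exactly which permutations $b'$ can satisfy $b'_{(u,v)}=b_{(u,v)}$, and then to use that $\cA\times\cB$ is $1$-monochromatic to show that all such $b'$ other than $b$ itself are indexed injectively by elements of $\cW_{(u,b^{-1}(v))}(b)$. Write $x=b^{-1}(v)$ and $x'=b'^{-1}(v)$. As noted before \cref{lem:nondet-recur}, $x\neq u$ and $x'\neq u$, since $b(u)\neq v$ and $b'(u)\neq v$ for every element of $\cB$.

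\textbf{Step 1 (structure of $b'$).} Compare $b_{(u,v)}$ and $b'_{(u,v)}$ pointwise on $[n]\setminus\set{u}$. For $z\notin\set{u,x,x'}$ both maps agree with the original permutations, so $b'(z)=b(z)$.
\begin{itemize}
\item If $x'=x$, then evaluating at $x$ gives $b'(u)=b(u)$. Together with $b'(x)=v=b(x)$ this yields $b'=b$, which accounts for the ``$+1$''.
\item If $x'\neq x$, then evaluating at $x'$ gives $b'(u)=b_{(u,v)}(x')=b(x')$, and evaluating at $x$ gives $b'(x)=b_{(u,v)}(x)=b(u)$. Also $b'(x')=v=b(x)$.
\end{itemize}
Hence in the second case $b^{-1}b'$ sends $u\mapsto x'$, $x\mapsto u$ and $x'\mapsto x$, and fixes everything else. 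In other words, $b^{-1}b'=(u\ w\ x)$ with $w=x'\in[n]\setminus\set{u,x}$. The map $b'\mapsto w=b'^{-1}(v)$ is injective. So it suffices to show that every such $w$ arising from some $b'\in\cB$ lies in $\cW_{(u,x)}(b)$.

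\textbf{Step 2 (cyclic order forced by monochromaticity).} Fix any $a\in\cA$ and let $c=b^{-1}a$, which is an $n$-cycle. Since $(a,b')\in\cA\times\cB$, the permutation $b'^{-1}a=(b^{-1}b')^{-1}c=\tau\circ c$ must also be an $n$-cycle, where $\tau=(u\ x\ w)$. The elements $u,x,w$ cut the cycle $c$ into three arcs. Composing with $\tau$ only redirects the $c$-predecessors $p_u,p_x,p_w$ of $u,x,w$, sending $p_u\mapsto x$, $p_x\mapsto w$ and $p_w\mapsto u$. I will trace $\tau\circ c$ starting from $u$ in the two possible cyclic orders.
\begin{itemize}
\item If $c$ visits the three elements in the order $(u\ x\ w)$, the walk goes along the arc from $u$ to $p_x$, jumps to $w$, runs to $p_u$, jumps to $x$, runs to $p_w$, and returns to $u$. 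This covers all $n$ elements, so $\tau\circ c$ is a single cycle.
\item If $c$ visits them in the order $(u\ w\ x)$, the walk goes from $u$ to $p_w$ and immediately returns to $u$. This closes a proper sub-cycle, contradicting $\bconn_N(a,b')=1$.
\end{itemize}
Therefore $w,u,x$ appear in the cyclic order $(u\ x\ w)=(w\ u\ x)$ along $b^{-1}a$, for every $a\in\cA$. This is exactly the condition $w\in\cW_{(u,x)}(b)$.

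\textbf{Main obstacle.} The only delicate point is bookkeeping of conventions: the order of composition in $b^{-1}b'$ versus $b'^{-1}a$, and the fact that $(u\ x\ w)$ and $(w\ u\ x)$ describe the same cyclic order. The resulting orientation must match the definition of $\cW_{(x,y)}$ with $(x,y)=(u,b^{-1}(v))$. I would double-check this orientation on a small example, say $n=3$ or $n=4$. If it comes out reversed, the argument is unchanged except for the labeling. Combining Step 1 (injectivity and the ``$+1$'' for $b'=b$) with Step 2 then gives the stated bound $\card{\set{b'\in\cB \mid b'_{(u,v)}=b_{(u,v)}}}\le\card{\cW_{(u,b^{-1}(v))}(b)}+1$.
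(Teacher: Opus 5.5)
Your proof is correct and follows essentially the same route as the paper: you show that any $b'\neq b$ with $b'_{(u,v)}=b_{(u,v)}$ satisfies $b^{-1}b'=(u\ w\ b^{-1}(v))$ with $w=(b')^{-1}(v)$, and then that $(b')^{-1}a$ splits into three cycles unless $w,u,b^{-1}(v)$ appear in the cyclic order $(w\ u\ b^{-1}(v))$ along $b^{-1}a$ for every $a\in\cA$. The differences are only presentational: the paper reads $b^{-1}b'$ off an alternating $6$-cycle and writes $b^{-1}a$ out in cycle notation, whereas you evaluate pointwise and trace the redirected predecessors. The orientation you flagged does come out exactly matching the definition of $\cW_{(u,b^{-1}(v))}(b)$, so no relabeling is needed.
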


\begin{figure}[tp]
    \centering
    \input{fig/nondet-bp}
    \caption{An illustration of the $6$-cycle in the proof of \cref{lem:nondet-bp}, with $n=7$. Red and blue edges correspond to $b'$ and $b$, respectively. The green edge corresponds to $a$. Solid purple (=red+blue) edges correspond to $\wb=b_{(u,v)}=b'_{(u,v)}$. Dashed edges are not included in $\wb$.}
    \label{fig:nondet-bp}
\end{figure}

\begin{proof}
Let $\wb=b_{(u,v)}$ and $z=b^{-1}(v) \ne u$ for notational convenience.
We use the same bipartite graph view as in the proof of \cref{lem:nondet-recur}.
The net effect from $b^{-1}a$ to $b_{(u,v)}^{-1}a_{(u,v)}$ is replacing the length-$3$ subpath $(b(u)_\sR,u_\sL,v_\sR,z_\sL)$ by the edge between $b(u)_\sR$ and $z_\sL$, so $\wb(z)=b(u)$.
Similarly, for any $b' \in \cB$ such that $b'_{(u,v)}=b_{(u,v)}=\wb$, the net effect from $(b')^{-1}a$ to $(b'_{(u,v)})^{-1}a_{(u,v)}$ is replacing the length-$3$ subpath $(b'(u)_\sR,u_\sL,v_\sR,w_\sL)$ by the edge between $b'(u)_\sR$ and $w_\sL$, where $w=(b')^{-1}(v) \ne u$, so $\wb(w)=b'(u)$.

Observe that $b,b',\wb$ all share the same image on $[n] \setminus \set{u,w,z}$.
If $w=z$, then we additionally have $b'(u)=\wb(w)=\wb(z)=b(u)$ and $b'(w)=v=b(z)$ by definition.
As a result, it must be that $b'=b$.

Now suppose $w \ne z$.
$b$ provides edges $(z_\sL,v_\sR)$ and $(u_\sL,b(u)_\sR)$, while $b'$ provides edges $(w_\sL,v_\sR)$ and $(u_\sL,b'(u)_\sR)$.
In the meantime, due to $b_{(u,v)}=\wb=b'_{(u,v)}$, we know that $b'(z)=b'_{(u,v)}(z)=\wb(z)=b(u)$, so $b'$ also provides the edge $(z_\sL,b(u)_\sR)$.
Similarly, we have that $b(w)=b_{(u,v)}(w)=\wb(w)=b'(u)$, so $b$ also provides the edge $(w_\sL,b'(u)_\sR)$.
Altogether, we obtain a $6$-cycle $(u_\sL,b'(u)_\sR,w_\sL,v_\sR,z_\sL,b(u)_\sR)$ using alternating edges provided by $b',b$.
As all other edges corresponding to $\wb$ are common to $b,b'$, it actually implies that $b^{-1}b'=(u\ w\ z)$.
See \cref{fig:nondet-bp} for an illustration of the $6$-cycle.

Furthermore, not all $w \in [n] \setminus \set{u,z}$ are valid.
In particular, it must be that \[(b')^{-1}a = (u\ w\ z)^{-1}b^{-1}a = (u\ z\ w)b^{-1}a \in \cC_n,\] for every $a \in \cA$, due to $(a,b') \in \bconn_N^{-1}(1)$.
Suppose $u,w,z$ follow the cyclic order $(u\ w\ z)$ along $b^{-1}a$.
Equivalently, as $b^{-1}a \in \cC_n$, we have that $b^{-1}a = (u\ \wu\ w\ \ww\ z\ \wz)$ for some potentially empty sequences $\wu,\ww,\wz$ such that $\wu,\ww,\wz$ constitute a disjoint partition of $[n] \setminus \set{u,w,z}$.
Then, we can get \[(b')^{-1}a = (u\ \wu)(w\ \ww)(z\ \wz),\] consisting of three disjoint cycles, a contradiction.

On the other hand, if $u,w,z$ follow the cyclic order $(u\ z\ w)$ along $b^{-1}a$ instead, then we have that $b^{-1}a = (u\ \wu\ z\ \wz\ w\ \ww)$.
It implies that \[(b')^{-1}a = (u\ \wu\ w\ \ww\ z\ \wz)\] still consists of a single cycle.
Since the above analysis holds for every $a \in \cA$, the maximum number of such $w$, or equivalently $b'$, is exactly $\card{\cW_{(u,z)}(b)}$ by definition.

Together with the possibility of $b'=b$, this concludes the proof.
\end{proof}

Intuitively, if the cycle $b^{-1}a$ is close to random, $\cW_{(u,b^{-1}(v))}(b)$ should be relatively small, as otherwise the cyclic order of many triples $(w,u,b^{-1}(v))$ is already fixed.
\cref{lem:nondet-W} formalizes this intuition, providing the last major piece to the proof of \cref{lem:nondet-S}.

\begin{lemma}
\label{lem:nondet-W}
For any $b \in \cB$, it holds that \[\Exp_{u \sim [n]} \Exp_{a \sim A} \log \paren*{\card{\cW_{(u,b^{-1}(a(u)))}(b)}+1} \le \log n - \frac{\bbH(A)}{2n} + \cO(\log\log n).\]
\end{lemma}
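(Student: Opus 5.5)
Throughout, fix $b \in \cB$ and write $C=b^{-1}A$. Since $b$ is fixed, $C$ is a random $n$-cycle with $\bbH(C)=\bbH(A)$. The first step rewrites the quantity in terms of $C$. Note that $b^{-1}(A(u))=C(u)$ is the successor $z$ of $u$ on $C$. By definition, $w\in\cW_{(u,z)}(b)$ means that $u,z,w$ appear in cyclic order $(u\ z\ w)$ on every cycle $c\in\supp(C)$. Write $W(u,z)=\cW_{(u,z)}(b)$ and let $S_u=\supp(C(u))$. The left-hand side of the lemma is then
\[
\Exp_{u\sim[n]}\Exp_{C}\log\paren*{\card{W(u,C(u))}+1}.
\]

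The first structural observation is a disjointness property. If $z,z'\in S_u$, then $z'\notin W(u,z)$. Indeed, suppose some $c'\in\supp(C)$ has $c'(u)=z'$. Then $u,z',z$ appear in order $(u\ z'\ z)$ on $c'$, and this contradicts $(u\ z\ z')$. Hence every $W(u,z)$ with $z\in S_u$ lies in $[n]\setminus S_u$. A pointwise bound alone cannot suffice: combining it with $\bbH(C)\le\sum_u\log\card{S_u}$ fails at $\card{S_u}\approx n/2$. So the argument must use how the fixed-order triples constrain the entropy of $C$ globally, not only vertex by vertex.

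The main step, which I expect to be the obstacle, is to pass from ``many $w\in W(u,C(u))$ on average'' to ``$\bbH(C)$ is small''. I would do this through an edge-coloring/charging scheme. For each $u$, each $z\in S_u$ and each $w\in W(u,z)$, the triple $\{u,z,w\}$ has a fixed cyclic orientation across $\supp(C)$. I would then generate $C$ by a sequential process: reveal successors along the cycle, $u_1, C(u_1), C^2(u_1),\dots$. When the current vertex is $x$, the possible next vertices are the unvisited vertices. Every $w$ that some earlier-revealed pair forces to come later, i.e.\ $w\in W(\cdot,\cdot)$ for a pair already traversed, cannot be the immediate next vertex unless the ordering allows it. This should cut the number of admissible successors at each step roughly by the sizes of the relevant $W$-sets.

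Concretely, I would aim for an inequality of the form
\[
\bbH(C)\le\sum_{u}\Exp\log\frac{n}{\card{W(u,C(u))}+1}\cdot 2+\cO(n\log\log n).
\]
Summing and dividing by $2n$ gives the lemma, with the factor $2$ producing the $\frac{1}{2n}$. To get it, I would color each forced triple by the directed edge $(u,C(u))$ that generates it. Each triple can be charged to at most two of its three edges, and this is the source of the factor $2$. I would then use a chain-rule/subadditivity bound on the successor entropies, conditioned on the colored constraints.

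The remaining steps are routine. Jensen's inequality (concavity of $\log$) moves expectations inside where needed. The $\cO(\log\log n)$ term absorbs the $+1$ and the losses from bucketing $\card{W}$ into dyadic scales; there are $\cO(\log n)$ scales, each costing $\cO(\log\log n)$ bits.
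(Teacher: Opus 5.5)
\textbf{There is a genuine gap.} Your reformulation via $C=b^{-1}A$ is correct. The inequality you aim for, $\bbH(C)\le 2\sum_u\paren*{\log n-\Exp\log(\card{W(u,C(u))}+1)}+\cO(n\log\log n)$, is exactly what is needed. But the step that proves it is missing, and the mechanism you sketch does not give it.

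In your sequential revelation, the cost you must achieve at vertex $x$ is governed by $W(x,C(x))$. That set depends on the successor being encoded, not on constraints generated by earlier-traversed pairs. Moreover, on the cycle $C$ itself, $w\in W(x,C(x))$ says nothing beyond $w\ne C(x)$; its content lies entirely in the \emph{other} cycles of $\supp(C)$. Locally, several unvisited $z'$ may satisfy $w\in W(x,z')$, and nothing in your scheme identifies which of them is the immediate successor. The sentence about charging each triple to at most two of its three edges is not an argument, and it is not where the factor $2$ comes from.

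The missing idea is an explicit encoding of $C$ by edge colors that is decoded globally.
\begin{enumerate}
\item Bucket all potential directed edges $(u,v)$ by the dyadic scale $i$ of $\card{\cW_{(u,v)}(b)}+1$. For each scale, fix $t_i=\Theta(n\log n/2^i)$ points $w^{(i)}_1,\dots,w^{(i)}_{t_i}$ such that every set $\cW_{(u,v)}(b)\cup\set{u}$ at that scale contains one of them. This follows from a random choice plus a union bound over the $n(n-1)$ edges.
\item Color $(u,v)$ by such an index $(i,j)$. Naming the scale ($\log g$ bits) and the $\log n$ oversampling factor in $t_i$ are the only sources of the $\cO(\log\log n)$ loss. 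So, on average over $u$, the out-color $\phi_u$ has entropy at most $\log n-\Exp\log(\card{\cW}+1)+\cO(\log\log n)$, and the same holds for the in-color $\psi_u$.
\item To decode, fix a color $(i,j)$ and cut $C$ just before $w^{(i)}_j$ into a linear order. Either $w^{(i)}_j=u$, or the order $(w^{(i)}_j\ u\ v)$ is fixed on all of $\supp(C)$. Hence every \emph{potential} edge of that color, including non-edges of $C$, points forward in this order.
\item The true edges have rank difference exactly $1$. The total rank difference is the same for every perfect matching between $\cU_{(i,j)}$ and $\cV_{(i,j)}$. So the true matching is the only one that uses color-$(i,j)$ edges alone.
\end{enumerate}
Therefore the out-colors and in-colors together determine $C$, and $\bbH(A)=\bbH(C)\le\sum_u\bbH(\phi_u)+\sum_u\bbH(\psi_u)$. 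The factor $2$, and hence the $1/(2n)$, comes from paying for both colors of each vertex. Your disjointness observation is correct but plays no role.
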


\begin{proof}
Let $C=b^{-1}A$.
Note that $C \in \cC_n$ because of $(A,b) \in \bconn_N^{-1}(1)$.
For $u \in [n]$, we say that vertex $u$ has an outgoing edge to $C(u)$ and an incoming edge from $C^{-1}(u)$.
In the following, we will color all $n(n-1)$ potential directed edges such that the color $\phi_u$ of $u$'s outgoing edge and the color $\psi_u$ of $u$'s incoming edge, for all vertices $u$, uniquely determine the cycle $C$.

Partition all potential edges $(u,v)$ into $g=\cO(\log n)$ groups such that the $i$-th group consists of all edges $(u,v)$ satisfying $2^{i-1} \le \card{\cW_{(u,v)}(b)}+1 < 2^i$.
This is well-defined because $0 \le \card{\cW_{(u,v)}(b)} \le n-2$ by definition.
For each group $i$, independently sample $w^{(i)}_1,\dots,w^{(i)}_{t_i} \sim [n]$, where $t_i=\Theta(n/2^i \cdot \log n)$.
For any edge $(u,v)$ in group $i$, by Chernoff bound, there exists $j \in [t_i]$ such that $w^{(i)}_j \in \cW_{(u,v)}(b) \cup \set{u}$, with probability at least $1-1/\poly(n)$.
Define the color of edge $(u,v)$ to be $(i,j)$.
By union bound over all $n(n-1)$ potential edges, such $w^{(i)}_1,\dots,w^{(i)}_{t_i}$ for every group $i$ and thus the overall coloring can be fixed.

Consider $U \sim [n]$ and $V = C(U)$.
Let $(I,J)$ denote the color of $(U,V)$.
It holds that
\begin{align*}
\bbH(I,J)
& = \bbH(I) + \bbH(J \mid I) \tag{by chain rule of entropy (\cref{fact:chain-H})}\\
& \le \log g + \sum_{i \in [g]} \Pr(I=i) \cdot \log t_i \\
& \le \cO(\log\log n) + \sum_{u,v \in [n]} \Pr((U,V)=(u,v)) \cdot \log \frac{\cO(n\log n)}{\card{\cW_{(u,v)}(b)}+1} \\
& \le \log n + \cO(\log\log n) - \Exp_{u \in [n]} \Exp_{c \sim C} \log \paren*{\card{\cW_{(u,c(u))}(b)}+1} \\
& = \log n + \cO(\log\log n) - \Exp_{u \in [n]} \Exp_{a \sim A} \log \paren*{\card{\cW_{(u,b^{-1}(a(u)))}(b)}+1}.
\end{align*}
Note that $(U,V)$ is a uniformly random edge of $C$.
Thus, for outgoing edges, we can get
\begin{align*}
\Exp_{u \sim [n]} \bbH(\phi_u)
& = \Exp_{u \sim [n]} \bbH(I,J \mid U=u) \\
& = \bbH(I,J \mid U) \\
& \le \bbH(I,J) \\
& \le \log n + \cO(\log\log n) - \Exp_{u \in [n]} \Exp_{a \sim A} \log \paren*{\card{\cW_{(u,b^{-1}(a(u)))}(b)}+1}.
\end{align*}
A similar argument applies to incoming edges.

It now remains to prove that $C$ is indeed uniquely determined by $\phi_u,\psi_u$ for all $u \in [n]$.
For every color $(i,j)$, we show that all edges in $C$ with color $(i,j)$ are determined by \[\cU_{(i,j)} = \set{u \in [n] \mid \phi_u=(i,j)} \qquad \text{and} \qquad \cV_{(i,j)} = \set{v \in [n] \mid \psi_v=(i,j)}.\]
That is, we need to recover a perfect matching between $\cU$ and $\cV$ such that every matching edge $(u,v)$ has color $(i,j)$ and satisfies $v=C(u)$.
For any $u \in \cU$ and $v \in \cV$, the edge $(u,v)$ having color $(i,j)$ means $w^{(i)}_j \in \cW_{(u,v)}(b) \cup \set{u}$ by construction of the coloring.
By definition of $\cW_{(u,v)}(b)$, it further implies that either $w^{(i)}_j=u$, or $(w^{(i)}_j\ u\ v)$ is compatible with every $b^{-1}a$ with $a \in \cA$, including $C$.
In other words, if we break the cyclic order of $C$ right before $w^{(i)}_j$ into a linear order, $v$ must come after $u$.

The perfect matching between $\cU,\cV$ induced by $C$ satisfies that $v=C(u)$ comes immediately after $u$ in the linear order, so the difference in their rankings is exactly $1$.
Observe that the sum of these differences over all matching edges is fixed, regardless of the perfect matching.
As a result, any other perfect matching would have some matching edge $(u,v)$ with negative difference in their rankings, meaning that $v$ comes before $u$, a contradiction to the color of edge $(u,v)$ being $(i,j)$.
Consequently, we can construct a bipartite graph between $U$ and $V$, and add an edge $(u,v)$ if and only if it has color $(i,j)$.
The unique perfect matching then recovers all edges in $C$ with color $(i,j)$, as claimed.

This concludes the proof by rearranging the terms, as \[\bbH(A) = \bbH(C) \le \sum_{u \in [n]} \bbH(\phi_u) + \sum_{u \in [n]} \bbH(\psi_u).\]
\end{proof}

Now, we are ready to prove \cref{lem:nondet-S}.

\begin{proof}[Proof of \cref{lem:nondet-S}]
For any $u \in [n]$, $(A,B)$ can be equivalently represented as \[(A(u),A_{(u,A(u))},B_{(u,A(u))},\beta_{(u,A(u))}(B)),\] where $\beta_{(u,A(u))}(B) \in [\card{\cW_{(u,B^{-1}(A(u)))}(B)}+1]$ is a unique label among all $b' \in \cB$ such that $b'_{(u,A(u))}=B_{(u,A(u))}$.
This is well-defined because of \cref{lem:nondet-bp}.
Now observe that
\begin{align*}
\bbH(A,B)
& = \bbH\paren*{A(u),A_{(u,A(u))},B_{(u,A(u))},\beta_{(u,A(u))}(B)} \\
& = \bbH\paren*{A(u)} + \bbH\paren*{A_{(u,A(u))},B_{(u,A(u))} \mid A(u)}+ \bbH\paren*{\beta_{(u,A(u))}(B) \mid A,B_{(u,A(u))}} \tag{by chain rule of entropy (\cref{fact:chain-H})}\\
& \le \log n + S_{n-1} + \bbH(\beta_{(u,A(u))}(B) \mid A,B_{(u,A(u))}) \tag{by \cref{lem:nondet-recur} and averaging over $v=A(u)$}\\
& \le \log n + S_{n-1} + \Exp_{(a,b) \sim (A,B)} \log \paren*{\card{\cW_{(u,b^{-1}(a(u)))}(b)}+1}.
\end{align*}
By \cref{lem:nondet-W} and averaging over $b \sim \cB$ and $u \sim [n]$, it implies that \[\bbH(A,B) \le \log n + S_{n-1} + \log n - \frac{\bbH(A)}{2n} + \cO(\log\log n).\]
The proof is completed by rearranging the terms, since $\bbH(A) \ge S_n/2$ by assumption.
\end{proof}

%% file: fig/nondet-bp.tex
\begin{tikzpicture}

\tikzset{
    layer/.style={rectangle,rounded corners=3pt,draw,gray!50,line width=1pt,fill=yellow!10,inner sep=4pt},
    vertex/.style={circle,fill,minimum size=5pt,inner sep=0pt}
}

\tikzmath{
    \n=7;
    \u=4;\v=3;\w=3;\x=1;\z=6;\y=6;
    \m{1}=5;\m{2}=2;\m{3}=1;\m{5}=7;\m{6}=6;\m{7}=4;
}

\begin{scope}[local bounding box=bl]
    \node[vertex] (l1) {};
    \foreach \i in {2,...,\n} {
        \pgfmathtruncatemacro{\ip}{\i-1}
        \node[vertex] (l\i) [below=30pt of l\ip] {};
    }
\end{scope}

\begin{scope}[local bounding box=br]
    \node[vertex] (r1) [right=200pt of l1] {};
    \foreach \i in {2,...,\n} {
        \pgfmathtruncatemacro{\ip}{\i-1}
        \node[vertex] (r\i) [below=30pt of r\ip] {};
    }
\end{scope}

\begin{scope}[on background layer]
    \draw[layer] ($(bl.south west) - (5pt,5pt)$) rectangle ($(bl.north east) + (5pt,5pt)$);
    \draw[layer] ($(br.south west) - (5pt,5pt)$) rectangle ($(br.north east) + (5pt,5pt)$);
\end{scope}

\draw[thick,green,dashed,dash pattern=on 5pt off 5pt] (l\u) -- (r\v);

\draw[thick,red,dashed,dash pattern=on 5pt off 5pt] (l\w) -- (r\v);
\draw[thick,red,dashed,dash pattern=on 5pt off 5pt] (l\u) -- (r\x);

\draw[thick,blue,dashed,dash pattern=on 5pt off 5pt] (l\z) -- (r\v);
\draw[thick,blue,dashed,dash pattern=on 5pt off 5pt] (l\u) -- (r\y);

\foreach \i in {1,2,5,7} {
    \draw[thick,violet!60] (l\i) -- (r\m{\i});
}

\draw[thick,red,dashed,dash pattern=on 5pt off 5pt] (l\w) -- (r\x);
\draw[thick,violet!60,dashed,dash pattern=on 5pt off 5pt, dash phase=5pt] (l\w) -- (r\x);

\draw[thick,blue,dashed,dash pattern=on 5pt off 5pt] (l\z) -- (r\y);
\draw[thick,violet!60,dashed,dash pattern=on 5pt off 5pt, dash phase=5pt] (l\z) -- (r\y);

\node [left=10pt of l\u] {$u$};
\node [left=10pt of l\w] {$w=(b')^{-1}(v)$};
\node [left=10pt of l\z] {$z=b^{-1}(v)$};

\node [right=10pt of r\v] {$v=a(u)$};
\node [right=10pt of r\x] {$b'(u)$};
\node [right=10pt of r\y] {$b(u)$};

\end{tikzpicture}

%% file: err.tex
\section{Lower Bound for Connectivity with Small Error}
\label{sec:err}

This section is dedicated to the proof of error-dependent randomized lower bounds for connectivity, as shown in \cref{thm:err}.

\begin{theorem}
\label{thm:err}
For even $N>1$, and $\varepsilon=\varepsilon(N) \in (0,1/3)$, any randomized protocol for $\bconn_N$ with error probability $\varepsilon$ requires at least $\bfR_\varepsilon(\bconn_N) = \Omega(N \cdot \log(\min(1/\varepsilon,N)))$ communication, regardless of the number of rounds.
\end{theorem}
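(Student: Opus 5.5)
The plan is to generalize the rectangle-size recurrence of \cref{lem:nondet-S} from $1$-monochromatic rectangles to weighted, almost $1$-monochromatic product distributions. Then I would combine it with a standard "typical transcript" argument and a repetition step, as sketched in \cref{sec:overview-nondet}.

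First, I handle the regime $\varepsilon \le n^{-c}$ for a suitable constant $c$, aiming for an $\Omega(n\log n)$ bound; here $n=N/2$.

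\emph{The weighted recurrence.} Define
\[S_n(\delta)=\max \bbH(A,B),\]
where the maximum is over independent $A,B$ on $\cP_n$ with $\Pr(\bconn_N(A,B)=0)\le\delta$. I aim for a recurrence of the form
\[\paren*{1+\tfrac{1}{4n}} S_n(\delta)\le S_{n-1}(\delta')+2\log n+\cO(\log\log n)+\cO(\sqrt{\delta}\,\poly(n)),\]
with $\delta'$ polynomially related to $\delta$. The steps mirror \cref{sec:nondet-S}:
\begin{enumerate}
\item Condition on $A(u)=v$ for uniform $u$. By Markov's inequality, a typical restricted pair $(A_{(u,v)},B_{(u,v)})$ is still $\delta\cdot\poly(n)$-close to $1$-monochromatic. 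The analogue of \cref{lem:nondet-recur} holds pointwise on good pairs, since the shortcutting map preserves connectivity for each connected pair.
\item Bound the residual label entropy of $B$ given $B_{(u,v)}$ by an analogue of \cref{lem:nondet-bp,lem:nondet-W}. Replace $\cW_{(x,y)}(b)$ by the set of $w$ for which the cyclic order $(w\ x\ y)$ holds along $b^{-1}A$ with probability at least $1-\sqrt{\delta}$. Non-compatible $b'$ then contribute only through bad events of mass about $\sqrt\delta$. The edge coloring argument still recovers $C=b^{-1}A$ except on an event of small probability, which costs an additive $\cO(\sqrt\delta\, n\log n)$ in entropy.
\end{enumerate}

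\emph{Solving and applying the recurrence.} Solving it as in the proof of \cref{thm:nondet}, with $\delta\le n^{-c}$ so that the losses sum to $o(n)$ over $n$ levels, gives $S_n(\delta)\le (8/5)n\log n+\cO(n\log\log n)$.

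To apply this to a protocol $\Pi$ with error $\varepsilon$, I fix the public randomness by averaging, so that $\Pi$ is deterministic with error at most $2\varepsilon$ under the uniform distribution on $\bconn_N^{-1}(1)$ and error at most $2\varepsilon$ under a suitable distribution on $0$-inputs. Each transcript $\tau$ is a rectangle $\cR_\tau$. Conditioned on $\tau$, the inputs are uniform on $\cR_\tau\cap\bconn_N^{-1}(1)$, which is not a product distribution. I would therefore instead take $A,B$ uniform on the full rectangle $\cR_\tau$. For transcripts outputting $1$, the fraction of $0$-inputs in $\cR_\tau$ is small, at most $\delta$ for most such $\tau$ by Markov's inequality, provided the $0$-input distribution is chosen with density comparable to uniform on $\cP_n\times\cP_n$. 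Since $\card{\bconn_N^{-1}(1)}/n!^2=1/n$, I would take $\varepsilon\le n^{-c}$ so that $\delta=\varepsilon n^{\cO(1)}$ stays polynomially small. Then every $1$-heavy transcript has $\log\card{\cR_\tau}\le S_n(\delta)$, so covering most of $\bconn_N^{-1}(1)$ requires $2^{\log T-S_n(\delta)}=2^{\Omega(n\log n)}$ transcripts.

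For general $\varepsilon$, I set $k=\log(\min(1/\varepsilon,N))$ and reduce by a direct sum over blocks. Split $[n]$ into $n/m$ blocks of size $m=2^{\Theta(k)}$ with $m^{-c}\ge\varepsilon$, and embed $n/m$ independent instances of $\bconn_{2m}$ into one $\bconn_N$ instance by chaining the cycles. Then either run the small-$\delta$ argument per block, using an information-complexity version of the rectangle bound via a chain rule over blocks, or apply the previous case to $\bconn_{2m}$ and sum. This yields $\Omega((n/m)\cdot m\log m)=\Omega(n\,k)$.

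The main obstacle will be the robust version of \cref{lem:nondet-bp}/\cref{lem:nondet-W}. The unique perfect matching recovery in the coloring argument is fragile: a single incompatible $a$ breaks it. I would first try to define $\cW$ with the threshold $1-\sqrt\delta$ and show that decoding fails only on an event of probability $\cO(\sqrt\delta\,n^2)$, charging failures via $\bbH(C)\le \bbH(\text{colors})+\Pr(\text{fail})\cdot n\log n+1$. The direct-sum step for intermediate $\varepsilon$ is the second delicate point.
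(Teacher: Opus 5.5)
The main gap is in your recurrence. You pass from $S_n(\delta)$ to $S_{n-1}(\delta')$ with $\delta'=\delta\cdot\poly(n)$ via Markov's inequality, but the recurrence is unrolled over $n$ levels. At depth $j$ the parameter is therefore $\delta\cdot n^{\Theta(j)}$. For $\delta=n^{-c}$ this exceeds $1$ after $\cO(1)$ levels, and from then on the $\sqrt{\delta}\,\poly(n)$ loss terms are vacuous. So the claim that the losses sum to $o(n\log n)$ is false.

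A milder Markov threshold $t$ does not rescue it. The discarded atypical $v$ cost about $\frac{1}{t}\cdot\cO(n\log n)$ per level, which forces $t=\omega(n)$. Even a constant $t$ would compound to $\delta t^{n}$. As designed, the argument only tolerates $\delta$ exponentially small in $n$.

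The missing idea is that no Markov step is needed. Shortcutting preserves connectivity pointwise, so the disconnection probability $\delta_{(u,v)}$ of the restricted pair given $A(u)=v$ satisfies $\Exp_{v\sim A(u)}\delta_{(u,v)}\le\delta$ on average, with no loss. The paper then proves $S_{n,\delta}\le S'_{n,\delta}$ for the majorant defined by the recurrence in \cref{lem:err-S}. Its loss term $K\sqrt{\delta}\,n^3\log n$ makes $S'_{n,\delta}$ concave and nondecreasing in $\delta$. Jensen's inequality then gives $\Exp_{v}S'_{n-1,\delta_{(u,v)}}\le S'_{n-1,\delta}$, so the \emph{same} $\delta$ is used at every level. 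Concavity is not available for $S_{n,\delta}$ itself, which is why the majorant is introduced.

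Your intermediate-$\varepsilon$ step is a second gap, and an unnecessary one. Chaining $n/m$ copies of $\bconn_{2m}$ computes their AND, so a reduction extracts only one block's lower bound. ``Apply the previous case and sum'' presupposes a direct-sum theorem for randomized communication that you do not have. An information-complexity direct sum would instead need the small-error bound as an information lower bound under a distribution the players can sample privately. The hard distribution, uniform on $\bconn_N^{-1}(1)$, is not a product distribution, so this does not go through as stated.

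The paper simply amplifies. It runs the $\varepsilon$-error protocol $t=\Theta(\log n/\log(1/\varepsilon))$ times and takes the majority, giving error at most $(4\varepsilon)^{t/2}\le n^{-10}$ at cost $tL$. The $\Omega(n\log n)$ bound for error $n^{-10}$ then forces $L=\Omega(n\log(1/\varepsilon))$.

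Your remaining deviations are fine:
\begin{itemize}
\item A thresholded $\cW$ with a union-bound decoding failure is a workable alternative to the paper's device. You need to adjust the thresholds by constant factors, so that a good $b'$ agreeing with a good $b$ after shortcutting forces $w\in\cW$. The paper instead restricts to the exactly $1$-monochromatic rectangle $\cA^*(b)\times\cB^*(b)$ with $\card{\cB^*(b)}\le n^3$ and applies \cref{lem:nondet-bp,lem:nondet-W} verbatim.
\item Counting typical near-$1$-monochromatic rectangles of a derandomized protocol is equivalent to the paper's entropy argument.
\end{itemize}
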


Fix sufficiently large $n=N/2$, and $\varepsilon=\varepsilon(N) \in (0,1/3)$.
The proof of \cref{thm:err} can be viewed as a generalization of the analysis in \cref{sec:nondet}.
In a nutshell, the key idea is to show that $\bconn_N$ does not even admit large almost $1$-monochromatic rectangles, including weighted ones.
Concretely, the distance of a product distribution\footnote{That is, $Q$ is exactly the product of its two marginals.} $Q$ on $\cP_n \times \cP_n$ from being $1$-monochromatic is defined to be \[\Delta(Q) = \Pr_{(a,b) \sim Q}\paren*{\bconn_N(a,b)=0}.\]
For $\delta \in [0,1]$, define \[S_{n,\delta} = \max_{Q: \Delta(Q) \le \delta} \bbH(Q),\]
where $Q$ ranges over product distributions only.

\cref{lem:err-S} is the analog of \cref{lem:nondet-S}.
In fact, the special case of $\delta=0$ recovers the nondeterministic setting.

\begin{lemma}
\label{lem:err-S}
For $n \ge 1$ and $\delta \in [0,1]$, it holds that \[S_{n,\delta} \le S'_{n,\delta},\] where $S'_{n,\delta}$ is defined recursively by
\[\paren*{1+\frac{1}{4n}} \cdot S'_{n,\delta} = S'_{n-1,\delta} + 2\log n + K \cdot \paren*{\log\log n + \sqrt{\delta}n^3\log n},\] for $n > 1$, starting with $S'_{1,\delta}=0$, for some sufficiently large constant $K$.
\end{lemma}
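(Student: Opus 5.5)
We argue by induction on $n$, mirroring the proof of \cref{lem:nondet-S}. The base case $n=1$ is trivial, since $\bbH(Q)=0$. For $n>1$, fix a product distribution $Q=A\times B$ with $\Delta(Q)\le\delta$, and assume without loss of generality that $\bbH(A)\ge\bbH(B)$, so $\bbH(A)\ge\bbH(Q)/2$. If $\delta$ is not tiny, say $\sqrt{\delta}n^3\ge 1$, the claim is immediate. In that case the additive term $K\sqrt{\delta}n^3\log n$ alone exceeds $2\log(n!)\ge\bbH(Q)$ times the factor $(1+1/4n)$, which we can check by unrolling the recurrence. So we may assume $\delta<n^{-6}$.

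Fix $u\in[n]$. As before, we decompose $(A,B)$ into $A(u)$, $A_{(u,A(u))}$, $B_{(u,A(u))}$, and a residual label for $B$ given these. The complication is that $A$ may now map $u$ to $v$ even when $B(u)=v$ has positive probability. We handle this by first conditioning away a bad event $E$ of probability $\cO(\delta)$: the event that $B(u)=A(u)$ or, more generally, that $(A,B)$ is disconnected. We pay $\cO(\Pr(E)\cdot n\log n+1)$ entropy for the conditioning.

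\textbf{Recursive term.} For each $v$, the pair $(A_{(u,v)}\mid A(u)=v,\;B_{(u,v)})$ is still a product distribution, since $A$ and $B$ are independent. The argument of \cref{lem:nondet-recur} applies pointwise: the shortcut pair is connected whenever $(a,b)$ is connected and $b(u)\ne v$. Hence its distance from monochromatic is $\delta_v=\Pr(\bconn_N=0\mid A(u)=v)$, up to the mass where $B(u)=v$. We have $\Exp_v\delta_v\le\delta$ when $v$ is drawn from $A(u)$, and for a uniform $u$ we average this as well. To avoid a monotonicity or concavity issue in $\delta$, we use that $S'_{n-1,\delta}$ is affine in $\sqrt\delta$ with nonnegative slope. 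Combined with the Jensen bound $\Exp\sqrt{\delta_v}\le\sqrt{\delta}$, this means the inductive bound averages correctly and the recursive term is at most $S'_{n-1,\delta}$.

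\textbf{Residual entropy of $B$ (the main obstacle).} This is the analog of \cref{lem:nondet-bp,lem:nondet-W}. The set $\cW_{(x,y)}(b)$ must now be defined robustly: $w\in\cW_{(x,y)}(b)$ if the cyclic order $(w\ x\ y)$ holds along $b^{-1}a$ except with $A$-probability at most some threshold $\eta$. By the 3-cycle analysis in \cref{lem:nondet-bp}, any $b'$ colliding with $b$ through some $w\notin\cW$ produces a disconnected pair with probability at least $\eta\cdot\Pr(B=b')$. Summing over $b$, $u$ and $w$, the total mass of such "bad" collisions is at most $\delta n^2/\eta$. Bounding the residual label entropy by $\log(\card{\cW}+1)$ plus a correction term of order $\sqrt{\text{bad mass}}\cdot\log n$ gives the loss term. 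The square root comes from splitting on whether the bad conditional mass exceeds its square root, and with $\eta$ chosen near $\sqrt{\delta}$ the correction is at most $\sqrt{\delta}\,\poly(n)$.

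\textbf{Coloring step.} The coloring argument of \cref{lem:nondet-W} is then rerun for $C=b^{-1}A$ conditioned on $C$ being a cycle that respects all cyclic orders certified by the $\cW$'s. Each certified constraint fails with probability at most $\eta$, so a union bound over $n^3$ triples shows this conditioning costs only $\cO(\eta n^3\log n)=\cO(\sqrt{\delta}n^3\log n)$ entropy. With that, the recovery argument goes through unchanged and yields $\log n-\bbH(A)/2n+\cO(\log\log n+\sqrt\delta n^3\log n)$.

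Combining the three pieces as in the proof of \cref{lem:nondet-S} and rearranging with $\bbH(A)\ge\bbH(Q)/2$ gives $(1+1/4n)\bbH(Q)\le S'_{n-1,\delta}+2\log n+K(\log\log n+\sqrt\delta n^3\log n)$, which is the defining recurrence for $S'_{n,\delta}$. Balancing $\eta$ against $\delta/\eta$ is where the exponent $n^3$ is most likely to need adjustment.
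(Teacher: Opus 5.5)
Your outer structure matches the paper's proof. The steps that carry over unchanged are: the decomposition of $(A,B)$ into $A(u)$, $A_{(u,A(u))}$, $B_{(u,A(u))}$ and a residual label; the observation that the restricted pair is still a product distribution at distance $\delta_v$ with $\Exp_v\delta_v\le\delta$; and the Jensen step using that $S'_{n-1,\delta}$ is affine and nondecreasing in $\sqrt\delta$.

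The gap is in the residual-entropy step, which is the real content of the lemma. As written, your accounting cannot give the stated loss. A correction of order $\sqrt{\delta n^2/\eta}\cdot\log n$ with $\eta\approx\sqrt\delta$ equals $\delta^{1/4}n\log n$, which exceeds $K\sqrt\delta n^3\log n$ once $\delta<(Kn^2)^{-4}$. Balancing that square-root term against your coloring cost $\eta n^3\log n$ gives at best $\delta^{1/3}$ dependence for any $\eta$. So the recurrence defining $S'_{n,\delta}$ is not obtained, and your closing remark already signals that the parameters do not close.

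The mechanism is also aimed at the wrong object. The bound $\log(\card{\cW_{(u,z)}(b)}+1)$ is taken from $b$'s point of view, but the disconnection forced by a bad $3$-cycle collision can fall on either $(A,b)$ or $(A,b')$. A per-collision charge of $\eta\Pr(B=b')$ does not by itself bound the entropy of the class. Conditioning on $\overline{E}$ does not help either. The event $E$ depends jointly on $(A,B)$, so conditioning on it destroys the independence you use elsewhere. It also does not remove bad partners from the support of $B$ given $A(u)=v$ and $B_{(u,v)}$.

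\textbf{The missing step: filter on $B$ alone before counting.}
\begin{enumerate}
\item Let $\cB^\circ$ be the set of $b$ with $\Pr_A(\bconn_N(A,b)=0)\le\eta/2$. By Markov, $\Pr(B\notin\cB^\circ)\le2\delta/\eta$.
\item Conditioning on this indicator depends only on $B$, so it keeps $A,B$ independent. It costs only $1+(2\delta/\eta)\log n$, since every class has at most $n$ elements.
\item Suppose $b,b'\in\cB^\circ$ lie in a common class with $b'=b(u\ w\ z)$. A violated cyclic order (counting $a$ with $b^{-1}a$ not a cycle as violations) forces $(A,b)$ or $(A,b')$ to be disconnected. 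Hence $w$ lies in your robust $\cW_{(u,z)}(b)$ at threshold $\eta$.
\item For $b(u)\ne v$, the class restricted to $\cB^\circ$ therefore has at most $\card{\cW_{(u,z)}(b)}+2$ elements; the extra one is the transposition $b(u\ z)$. The case $b(u)=A(u)$ has probability at most $\eta/2$ on average over $u$ and costs $\log n$.
\end{enumerate}
This makes the correction linear in the filtered mass. Combined with your coloring step (conditioning $A$ on respecting all certified triples) and $\eta=\sqrt\delta$, it gives exactly $\cO(\log\log n+\sqrt\delta n^3\log n)$.

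\textbf{Comparison with the paper.} Even after this repair your route differs from the paper's. The paper filters $B$ in the same way ($\cB^*$ at threshold $\sqrt\delta$) but never robustifies $\cW$. Instead, for each $b$ it collects the at most $n^3$ collision neighbours $\cB^*(b)$ and restricts $A$ to the set $\cA^*(b)$ of $a$ connected to all of them. Then $\cA^*(b)\times\cB^*(b)$ is exactly $1$-monochromatic, so \cref{lem:nondet-bp,lem:nondet-W} apply verbatim. The factor $n^3$ comes from a union bound over neighbours rather than over triples.

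\textbf{Minor point.} Your reduction to $\sqrt\delta n^3<1$ is valid only after unrolling the recurrence. It is not true that the additive term alone exceeds $2\log(n!)$. The paper makes this reduction inside the residual bound, where it is immediate because that entropy is at most $\log n$.
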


Intuitively, on input $(A,B) \in \bconn_N^{-1}(1)$, the protocol should mostly produce transcripts that lead to output $1$ and that correspond to almost $1$-monochromatic rectangles.
We now prove \cref{thm:err}, assuming \cref{lem:err-S}.
We will prove \cref{lem:err-S} in \cref{sec:err-S}.

\begin{proof}[Proof of \cref{thm:err}]
With $S_{1,\delta}=S'_{1,\delta}=0$, a similar argument to the analysis in the proof of \cref{thm:nondet} solves the recurrence in \cref{lem:err-S} to
\begin{align}
S_{n,\delta} \le S'_{n,\delta} \le \frac{8}{5} n\log n + \cO\paren*{n\log\log n + \sqrt{\delta}n^4\log n}. \label{eqn:err-S}
\end{align}

First suppose for the sake of contradiction that there exists a randomized protocol with error probability $\varepsilon \le n^{-10}$ and communication $L=o(n\log n)$.
We will later see that the case of $\varepsilon > n^{-10}$ follows by repetition.
It suffices to consider independent and uniformly random input $A,B \sim \cP_n$.
We emphasize, however, that we do not derandomize the protocol and insist that it has error probability $\varepsilon$ on every possible input.

Let $\Gamma=(M,R)$ be the transcript of the protocol, where $M,R$ denote the message and public randomness, respectively.
Since $A$ and $B$ are still independent conditioned on $\Gamma$, with slight abuse of notation, we write $\Delta(\Gamma)=\Delta(A,B \mid \Gamma)$.
We also assume the output $O=O(\Gamma)$ is a function of $\Gamma$, as otherwise the receiver of the last message can send its output using one extra bit of communication.
Replacing $L$ with $L+1$ does not change its asymptotic behavior.
In addition, $\Gamma$ is said to be typical if $\Delta(\Gamma)$ is no more than $\delta=n^{-8}$.
Let $T=T(\Gamma)$ denote the indicator of $\Gamma$ being typical.
For notational convenience, also let $Z=\bconn_N(A,B)$, so $\Pr(Z=0 \mid \gamma) \le \delta$ for any $\gamma$ such that $T(\gamma)=1$.

Note that $(A,B)$ conditioned on $Z=1$ is just the uniform distribution\footnote{But it is not a product distribution.} on $\bconn_N^{-1}(1)$, which has size $n!(n-1)!$ as shown in \cref{sec:nondet}.
Therefore, we can get
\begin{align}
\bbH(A,B \mid Z=1) = 2n\log n - \cO(n), \label{eqn:err-H}
\end{align}
by Stirling's approximation.
In the meantime, by definition, observe that
\begin{align*}
\Exp_{\gamma \mid Z=1} O \cdot \Delta(\gamma)
& = \sum_{(a,b) \in \bconn_N^{-1}(1)} \Pr\paren*{(A,B)=(a,b) \mid Z=1} \cdot \Exp_{\gamma \mid (a,b)} O \cdot \Delta(\gamma) \\
& = n \cdot \sum_{(a,b) \in \bconn_N^{-1}(1)} \Pr\paren*{(A,B)=(a,b)} \cdot \Exp_{\gamma \mid (a,b)} O \cdot \Delta(\gamma) \\
& \le n \cdot \sum_{(a,b) \in \bconn_N^{-1}(1)} \Pr\paren*{(A,B)=(a,b)} \cdot \Exp_{\gamma \mid (a,b)} \Pr_{(a',b') \mid \gamma} \paren*{\bconn_N(a',b') \ne O} \\
& \le n \cdot \Exp_{a,b \sim \cP_n} \Exp_{\gamma \mid (a,b)} \Pr_{(a',b') \mid \gamma} \paren*{\bconn_N(a',b') \ne O} \\
& = n \cdot \Exp_\gamma \Exp_{(a,b) \mid \gamma} \Pr_{(a',b') \mid \gamma} \paren*{\bconn_N(a',b') \ne O} \\
& = n \cdot \Exp_\gamma \Pr_{(a',b') \mid \gamma} \paren*{\bconn_N(a',b') \ne O} \\
& = n \cdot \Pr_{a',b' \sim \cP_n} \Exp_{\gamma \mid (a',b')} \paren*{\bconn_N(a',b') \ne O} \\
& \le n \varepsilon.
\end{align*}
By Markov's inequality, it implies that \[\Pr(O=1,T=0 \mid Z=1) = \Pr_{\gamma  \mid Z=1}(O \cdot \Delta(\gamma) > \delta) \le \frac{\Exp_{\gamma \mid Z=1} O \cdot \Delta(\gamma)}{\delta} \le \frac{n\varepsilon}{\delta}.\]
Since both $O$ and $T$ are functions of $\Gamma$ only, $A$ and $B$ remain independent conditioned on any fixed $\gamma$ such that $O(\gamma)=1$ and $T(\gamma)=1$.
By averaging over all such $\gamma$, \cref{eqn:err-S} implies that \[\bbH(A,B \mid \Gamma,O=1,T=1) \le \frac{8}{5} n\log n + \cO(n\log\log n),\] as $\delta=n^{-8}$.
Furthermore, it must be that
\begin{align*}
& \bbH(A,B \mid \Gamma,O=1,T=1) \\
& \hspace{1cm} = \bbH(A,B,Z \mid \Gamma,O=1,T=1) \\
& \hspace{1cm} = \bbH(A,B \mid \Gamma,Z,O=1,T=1) + \bbH(Z \mid \Gamma,O=1,T=1) \tag{by chain rule of entropy (\cref{fact:chain-H})}\\
& \hspace{1cm} \ge \Pr(Z=1 \mid O=1,T=1) \cdot \bbH(A,B \mid \Gamma,Z=1,O=1,T=1) \\
& \hspace{1cm} \ge (1-\delta) \cdot \bbH(A,B \mid \Gamma,Z=1,O=1,T=1),
\end{align*}
implying that
\begin{align*}
\bbH(A,B \mid \Gamma,Z=1,O=1,T=1)
& \le \frac{1}{1-\delta} \cdot \bbH(A,B \mid \Gamma,O=1,T=1) \\
& \le \frac{8}{5} n\log n + \cO(n\log\log n),
\end{align*}
due to $\delta=n^{-8}$ as well.
Altogether, we can get
\begin{align}
& \bbH(A,B \mid \Gamma,Z=1) \nonumber\\
& \hspace{1cm}\le \bbH(A,B,O,T \mid \Gamma,Z=1) \nonumber\\
& \hspace{1cm} = \bbH(O,T \mid \Gamma,Z=1) + \bbH(A,B \mid \Gamma,O,T,Z=1) \tag{by chain rule of entropy (\cref{fact:chain-H})} \nonumber\\
& \hspace{1cm}\le 2 + \bbH(A,B \mid \Gamma,O=1,T=1,Z=1) \nonumber\\
& \hspace{2cm} + \Pr\paren*{(O,T) \ne (1,1) \mid Z=1} \cdot \bbH(A,B \mid \Gamma,(O,T) \ne (1,1),Z=1) \nonumber\\
& \hspace{1cm}\le \frac{8}{5} n\log n + \cO(n\log\log n) \nonumber\\
& \hspace{2cm} + \paren*{\Pr(O=0 \mid Z=1) + \Pr(O=1,T=0 \mid Z=1)} \cdot \bbH(A,B \mid Z=1) \nonumber\\
& \hspace{1cm}\le \frac{8}{5} n\log n + \cO(n\log\log n) + \paren*{\varepsilon + \frac{n\varepsilon}{\delta}} \cdot \cO(n\log n) \tag{by \cref{eqn:err-H}} \nonumber\\
& \hspace{1cm}\le \frac{8}{5} n\log n + \cO(n\log\log n). \label{eqn:err-H-gamma}
\end{align}
As the total length of all messages is at most $L$, combining \cref{eqn:err-H,eqn:err-H-gamma} yields
\begin{align*}
L
& \ge \bbH(M \mid R,Z=1) \\
& \ge \bbI(A,B;M \mid R,Z=1) \\
& = \bbI(A,B;M,R \mid Z=1) - \bbI(A,B;R \mid Z=1) \tag{by chain rule of mutual information (\cref{fact:chain-I})}\\
& = \bbI(A,B;\Gamma \mid Z=1) \\
& = \bbH(A,B \mid Z=1) - \bbH(A,B \mid \Gamma,Z=1) \\
& \ge \frac{2}{5} n\log n - \cO(n\log\log n),
\end{align*}
a contradiction.

Finally, we deal with the case of $\varepsilon > n^{-10}$.
Suppose there is a protocol with $\varepsilon \le 1/100$ and $L=o(n\log(1/\varepsilon))$.
Repeat the protocol independently for $t=\Theta(\log n / \log(1/\varepsilon))$ times in parallel.
By union bound, the majority is incorrect with probability at most \[\binom{t}{t/2} \cdot \varepsilon^{t/2} \le 2^t \cdot \varepsilon^{t/2} = (4\varepsilon)^{t/2} \le \varepsilon^{t/4} \le n^{-10}.\]
However, the total communication is only $Lt = o(n\log n)$, contradicting the case of $\varepsilon \le n^{-10}$.
The remaining case of $\varepsilon \in (1/100,1/3)$ also follows by repeating a constant number of times, concluding the proof.
\end{proof}

\subsection{Proof of \cref{lem:err-S}}
\label{sec:err-S}

This section constitutes the proof of \cref{lem:err-S}.
We take a similar approach to the proof of \cref{lem:nondet-S}.
For any $u,v \in [n]$, we extend the definition of $b_{(u,v)}$ such that it is simply the restriction of $b$ on $[n] \setminus \set{u}$ if $b(u)=v$.

The following \cref{lem:err-bp} is the analog of \cref{lem:nondet-bp,lem:nondet-W} in the randomized setting.

\begin{lemma}
\label{lem:err-bp}
For $\delta \in [0,1]$, and any product distribution $Q$ on $\cP_n \times \cP_n$ that is $\delta$-close to $1$-monochromatic, it holds that \[\Exp_{u \sim [n]} \bbH(B \mid A(u),B_{(u,A(u))}) \le \log n - \frac{\bbH(Q_\sA)}{2n} + \cO\paren*{\log\log n + \sqrt{\delta}n^3\log n},\] where $(A,B) \sim Q$, and $Q_\sA$ is the marginal of $A$.
\end{lemma}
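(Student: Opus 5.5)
We will mimic the nondeterministic argument, replacing ``every $a\in\cA$'' by a probabilistic notion of a fixed cyclic order. Write $Q=Q_\sA\times Q_\sB$ and, for each fixed $b$, let $\delta_b=\Pr_{A}(\bconn_N(A,b)=0)$, so $\Exp_b\delta_b\le\delta$. For $x\ne y$ and a fixed $b$, define a soft version $\cW_{(x,y)}(b)$: the set of $w\notin\{x,y\}$ such that
\[
\Pr_A\paren*{b^{-1}A\in\cC_n \text{ and } (w\ y\ x)\text{ is compatible with } b^{-1}A}\le \tau,
\]
where $\tau$ is a threshold (say $\tau=\sqrt{\delta}$, or $\sqrt{\delta_b}$). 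In other words, $(w\ x\ y)$ is the cyclic order for all but a $\tau$-fraction of $A$.

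\textbf{Step 1: soft analog of \cref{lem:nondet-bp}.} Fix $u$, $v=a(u)$ and $b$, and consider the set of $b'\in\supp(Q_\sB)$ with $b'_{(u,v)}=b_{(u,v)}$. The 6-cycle argument is purely combinatorial, so it still shows $b^{-1}b'=(u\ w\ z)$ with $z=b^{-1}(v)$. The only other cases are $b'=b$ and the degenerate case $b(u)=v$, which the extended definition of $b_{(u,v)}$ covers with $\cO(1)$ extra multiplicity.

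If $w\notin\cW_{(u,z)}(b)$, then with probability more than $\tau$ over $A$, the cycle $b^{-1}A$ has the order $(u\ w\ z)$. In that case $(b')^{-1}A$ splits into three cycles, so $\delta_{b'}>\tau$.

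We then bound $\bbH(B\mid A(u),B_{(u,A(u))})$ by splitting $B$ into two parts. The first is the indicator $E$ that $\delta_B\le\tau'$. The second is a label inside the fibre, which has at most $\card{\cW}+1$ \emph{good} elements plus some \emph{bad} ones (those with large $\delta_{b'}$). Bad $b'$ carry total $Q_\sB$-mass at most $\delta/\tau'$ by Markov. Their contribution to the conditional entropy is at most $\Pr(\text{bad})\cdot\log(n!)$ plus a binary entropy term. This is where the polynomial loss, of the form $\sqrt{\delta}\,\poly(n)\log n$, enters, with $\tau$ chosen around $\sqrt\delta$.

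A second issue is that conditioning on $A(u)=v$ and on $b^{-1}A$ being a cycle changes the relevant distribution of $A$. We handle it by paying a union bound over the $n^2$ pairs $(u,v)$ and $n$ choices of $w$, which costs factors of $n$ in front of $\sqrt\delta$. Collecting these should give exactly the $\sqrt{\delta}n^3\log n$ term.

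\textbf{Step 2: soft analog of \cref{lem:nondet-W}.} We run the same coloring argument with $C=b^{-1}A$, restricted to $b$ with small $\delta_b$ and conditioned on $C\in\cC_n$. The encoding via $\phi_u,\psi_u$ and the matching-uniqueness argument now only works when $C$ respects every fixed order $(w^{(i)}_j\ u\ v)$ used by the colors of its own edges. Each such order is violated with probability at most $\tau$, and there are at most $n\cdot\#\text{colors}=\cO(n^2)$ relevant constraints. So with probability $1-\cO(n^2\tau)$ the decoding succeeds. On failure we spend an extra flag bit and at most $\log n!$ bits.

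This yields
\[
\bbH(A)\le\sum_u\bbH(\phi_u)+\sum_u\bbH(\psi_u)+\cO(n^2\tau\cdot n\log n)+\cO(1)+\bbH\text{-loss from conditioning on } C\in\cC_n,
\]
and the last term is $\cO(\delta_b n\log n)$. Averaging over $u$ as in \cref{lem:nondet-W} then gives
\[
\Exp_u\Exp_A\log(\card{\cW}+1)\le\log n-\frac{\bbH(Q_\sA)}{2n}+\cO(\log\log n+\sqrt\delta\,n^2\log n).
\]

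\textbf{Step 3: combine.} We combine Steps 1 and 2 through the chain rule, averaging over $b\sim Q_\sB$.

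The main obstacle is Step 1. Soft ``compatibility for most $a$'' must be converted into a bound on multiplicity that holds in conditional entropy, with the bad $b'$ controlled even after conditioning on $A(u)$ and $B_{(u,A(u))}$. Conditioning can concentrate mass on bad fibres, and a careful Markov and union-bound accounting over $(u,v,w)$ is what I expect to produce (and need) the $\sqrt\delta n^3\log n$ budget. If a single threshold is insufficient, I would instead weight each $w$ by $\min(1,\Pr[\text{bad order}]/\tau)$ and use a soft counting bound on entropy in place of the hard cardinality bound.
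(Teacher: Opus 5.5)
Your route is correct but differs from the paper's. The paper never softens $\cW$. It reuses \cref{lem:nondet-bp,lem:nondet-W} as black boxes by building, for each $b$ in the good set $\cB^*=\set{b : \Delta(Q_\sA\times\set{b})\le\sqrt{\delta}}$, an exact $1$-monochromatic rectangle $\cA^*(b)\times\cB^*(b)$. Here $\cB^*(b)$ is the set of good $b'$ with $b'_{(u,v)}=b_{(u,v)}$ for some $(u,v)$, of which there are at most $n^3$, and $\cA^*(b)$ is the set of $a$ connected to all of them. A union bound over these $n^3$ neighbours discards $A$ with probability at most $\sqrt{\delta}n^3$. A separate accounting then charges the resulting entropy drop in $A$, at most $\sqrt{\delta}n^3\cdot n\log n$, which is where $\sqrt{\delta}n^3\log n$ comes from.

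You instead open up both lemmas with a thresholded $\cW^{\mathrm{soft}}$. In Step 1, the contrapositive of the cyclic-order argument shows that any fibre element $b(u\ w\ z)$ with $w\notin\cW^{\mathrm{soft}}_{(u,z)}(b)$ has $\delta_{b'}>\tau$, so Markov on $\delta_B$ handles the rest. In Step 2, decoding the coloring fails only if $C\notin\cC_n$ or one of $n(n-1)$ fixed order constraints is violated. This keeps $A$ at its original law, so $\bbH(Q_\sA)$ appears directly with no analogue of the $J$-conditioning. With $\tau=\tau'=\sqrt{\delta}$ it even gives the smaller loss $\cO(\log\log n+\sqrt{\delta}n^2\log n)$. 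The paper's version buys modularity: it reuses the nondeterministic lemmas essentially verbatim.

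A few details to tighten:
\begin{itemize}
\item The decoding constraints are not the orders on $C$'s own edges. Those hold automatically, since $C(u)$ immediately follows $u$. Uniqueness of the matching needs every pair $(u',v')\in\cU_{(i,j)}\times\cV_{(i,j)}$ of color $(i,j)$ to be forward, so union-bound over all $n(n-1)$ potential edges, each with its fixed $w^{(i)}_j$.
\item The threshold must be global, $\tau=\sqrt{\delta}$, with ``good'' meaning $\delta_{b'}\le\tau$. The choice $\tau=\sqrt{\delta_b}$ breaks the Markov step when $\delta_b=0$.
\item The case $B(u)=A(u)$ must be split off by an indicator of probability at most $\delta$. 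There $\cW_{(u,u)}$ is undefined and the fibre can have $n$ elements.
\item The extra union bound over $(u,v,w)$ for ``conditioning on $A(u)=v$'' is unnecessary. $A$ and $B$ are independent, and your $\cW^{\mathrm{soft}}$ refers to the unconditional law of $A$.
\item Fibres have size at most $n$, so the bad part costs only $\Pr(\text{bad})\log n$.
\end{itemize}
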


\begin{proof}
For any fixed $u,v \in [n]$ and $b \in \cP_n$, it holds that
\begin{align}
\card{\set{b' \in \cP_n \mid b'_{(u,v)}=b_{(u,v)}}} \le n. \label{eqn:err-bp}
\end{align}
Indeed, this can be shown by straightforward case analysis similar to the proof of \cref{lem:nondet-bp}.
Also note that $\bbH(Q_\sA)/(2n) \le \cO(\log n)$.
Consequently, we can assume without loss of generality that $\sqrt{\delta} n^3 < 1$, as otherwise the $\cO(\sqrt{\delta} n^3\log n)$ term already dominates the upper bound and is trivially larger than the expectation on LHS.

Let $Q_\sB$ be the marginal of $B$, and $\cB^*$ the collection of $b \in \cB=\supp(Q_\sB)$ such that \[\Delta(Q_\sA \times \set{b}) \le \sqrt{\delta}.\]
Since $Q$ is $\delta$-close to $1$-monochromatic, by Markov's inequality, we know that \[\Pr_{b \sim Q_\sB}(b \in \cB^*) \ge 1-\sqrt{\delta}.\]
Let $I$ denote the indicator of $B \in \cB^*$, so \[\Pr(I=0) \le \sqrt{\delta} < 1.\]
As a result, we can get
\begin{align}
\bbH(B \mid A(u),B_{(u,A(u))})
& \le \bbH(B,I \mid A(u),B_{(u,A(u))}) \nonumber\\
& = \bbH(I \mid A(u),B_{(u,A(u))}) + \bbH(B \mid A(u),B_{(u,A(u))},I) \tag{by chain rule of entropy (\cref{fact:chain-H})} \nonumber\\
& \le 1 + \Pr(I=0) \cdot \bbH(B \mid A(u),B_{(u,A(u))},I=0) \nonumber\\
& \hspace{1cm} + \Pr(I=1) \cdot \bbH(B \mid A(u),B_{(u,A(u))},I=1) \nonumber\\
& \le 1 + \sqrt{\delta} \cdot \log n + \bbH(B \mid A(u),B_{(u,A(u))},I=1), \label{eqn:err-bp-I}
\end{align}
due to \cref{eqn:err-bp}.

Let $Q^*_\sB$ be the distribution $Q_\sB$ conditioned on $I=1$, so it suffices to upper bound \[\Exp_{u \sim [n]} \bbH(B^* \mid A(u),B^*_{(u,A(u))}),\] where $(A,B^*) \sim (Q_\sA,Q^*_\sB)$.
Define the neighborhood of $b$ inside $\cB^*$ to be \[\cB^*(b) = \set{b' \in \cB^* \mid \exists u,v \in [n], b'_{(u,v)}=b_{(u,v)}}.\]
Note that $\cB^*(b)$ always contains $b$ itself and that its size is upper bounded by $n \cdot n^2 = n^3$ because of \cref{eqn:err-bp}.
Also define the neighborhood of $b$ inside $\cA=\supp(Q_\sA)$ to be \[\cA^*(b) = \set{a \in \cA \mid \forall b' \in \cB^*(b), (a,b') \in \bconn_N^{-1}(1)}.\]
Let $J$ denote the indicator of $A \in \cA^*(B^*)$.
By union bound, we have that \[\Pr(J=0 \mid B^*=b) \le \sqrt{\delta} \cdot \card{\cB^*(b)} \le \sqrt{\delta} n^3 < 1.\]
Then, it must hold that
\begin{align}
\bbH(B^* \mid A(u),B^*_{(u,A(u))})
& \le \bbH(B^*,J \mid A(u),B^*_{(u,A(u))}) \nonumber\\
& = \bbH(J \mid A(u),B^*_{(u,A(u))}) + \bbH(B^* \mid A(u),B^*_{(u,A(u))},J) \tag{by chain rule of entropy (\cref{fact:chain-H})} \nonumber\\
& \le 1 + \Pr(J=0) \cdot \bbH(B^* \mid A(u),B^*_{(u,A(u))},J=0) \nonumber\\
& \hspace{1cm} + \Pr(J=1) \cdot \bbH(B^* \mid A(u),B^*_{(u,A(u))},J=1) \nonumber\\
& \le 1 + \sqrt{\delta} n^3 \cdot \log n + \Pr(J=1) \cdot \bbH(B^* \mid A(u),B^*_{(u,A(u))},J=1), \label{eqn:err-bp-J}
\end{align}
again due to \cref{eqn:err-bp}.

Let $Q^*_\sA(b)$ be the distribution $Q_\sA$ conditioned on $B^*=b$ and $J=1$, so we can get
\begin{align}
\bbH(Q_\sA)
& = \bbH(Q_\sA \mid B^*) \nonumber\\
& \le \bbH(Q_\sA,J \mid B^*) \nonumber\\
& = \bbH(J \mid B^*) + \bbH(Q_\sA \mid B^*,J) \tag{by chain rule of entropy (\cref{fact:chain-H})} \nonumber\\
& \le 1 + \Pr(J=0) \cdot \bbH(Q_\sA \mid B^*,J=0) + \Pr(J=1) \cdot \bbH(Q_\sA \mid B^*,J=1) \nonumber\\
& \le 1 + \sqrt{\delta} n^3 \cdot n\log n + \Pr(J=1) \cdot \bbH(Q^*_\sA(B^*) \mid B^*,J=1). \label{eqn:err-bp-Q}
\end{align}
By definition, $\cA^*(b) \times \cB^*(b)$ is a $1$-monochromatic rectangle, meaning that \cref{lem:nondet-bp,lem:nondet-W} actually can be applied\footnote{Technically, \cref{lem:nondet-W} is stated for $A$ uniformly distributed on its support $\cA$. Nevertheless, it actually holds for any distribution of $A$. The same proof applies verbatim.}.
Therefore, we also have
\begin{align}
& \Exp_{u \in [n]} \bbH(B^* \mid A(u),B^*_{(u,A(u))},J=1) \nonumber\\
& \hspace{1cm} \le \Exp_{u \in [n]} \Exp_{(a,b) \sim (Q_\sA \times Q^*_\sB) \mid J=1} \log \card{\set{b' \in \cB^* \mid b'_{(u,a(u))}=b_{(u,a(u))}}} \nonumber\\
& \hspace{1cm} = \Exp_{b \sim Q^*_\sB \mid J=1} \Exp_{u \in [n]} \Exp_{a \sim Q^*_\sA(b)} \log \card{\set{b' \in \cB^*(b) \mid b'_{(u,a(u))}=b_{(u,a(u))}}} \nonumber\\
& \hspace{1cm} \le \Exp_{b \sim Q^*_\sB \mid J=1} \paren*{\log n - \frac{\bbH(Q^*_\sA(b))}{2n} + \cO(\log\log n)} \tag{by \cref{lem:nondet-bp,lem:nondet-W}} \nonumber\\
& \hspace{1cm} = \log n - \frac{\bbH(Q^*_\sA(B^*) \mid B^*,J=1)}{2n} + \cO(\log\log n). \label{eqn:err-bp-H}
\end{align}
Plugging \cref{eqn:err-bp-H} into \cref{eqn:err-bp-I,eqn:err-bp-J} and averaging over $u \sim [n]$, we have that
\begin{align*}
& \Exp_{u \sim [n]} \bbH(B \mid A(u),B_{(u,A(u))}) \\
& \hspace{1cm} \le \log n - \frac{\Pr(J=1) \cdot \bbH(Q^*_\sA(B^*) \mid B^*,J=1)}{2n} + \cO\paren*{\log\log n + \sqrt{\delta} n^3\log n} \\
& \hspace{1cm} \le \log n - \frac{\bbH(Q_\sA)}{2n} + \cO\paren*{\log\log n + \sqrt{\delta} n^3\log n},
\end{align*}
by \cref{eqn:err-bp-Q}, concluding the proof.
\end{proof}

Now, we are ready to prove \cref{lem:err-S}.

\begin{proof}[Proof of \cref{lem:err-S}]
The proof is by induction on $n$.
The base case of $n=1$ holds because \[S_{1,\delta}=S'_{1,\delta}=0.\]

For $n>1$, fix $Q$ to be the maximizer in the definition of $S_{n,\delta}$.
Consider two independent random variables $(A,B) \sim Q$, so \[\bbH(A)+\bbH(B)=\bbH(A,B)=S_{n,\delta},\] as $Q$ is a product distribution.
Without loss of generality, assume \[\bbH(A) \ge \frac{S_{n,\delta}}{2}.\]
For any $u \in [n]$, $(A,B)$ can be equivalently represented as \[(A(u),A_{(u,A(u))},B_{(u,A(u))},\beta_{(u,A(u))}(B)),\] where $\beta_{(u,A(u))}(B) \in [\card{\set{b' \in \cB \mid b'_{(u,A(u))}=B_{(u,A(u))}}}]$ is a unique label among all $b' \in \cB$ such that $b'_{(u,A(u))}=B_{(u,A(u))}$.
Also define\footnote{Technically, $A_{(u,v)},B_{(u,v)}$ map from $[n] \setminus \set{u}$ to $[n] \setminus \set{v}$. Renaming both domain and codomain to $[n-1]$ makes them valid input to $\bconn_{N-2}$.} \[\delta_{(u,v)} = \Pr\paren*{\bconn_{N-2}(A_{(u,v)},B_{(u,v)})=0 \mid A(u)=v}.\]
It must hold that \[\Exp_{v \sim A(u)} \delta_{(u,v)} \le \delta,\] as taking restriction always preserves connectivity.
Also observe that $A_{(u,v)}$ and $B_{(u,v)}$ remain independent conditioned on any $v=A(u)$.
Consequently, we can get
\begin{align*}
\bbH(A,B)
& = \bbH\paren*{A(u),A_{(u,A(u))},B_{(u,A(u))},\beta_{(u,A(u))}(B)} \\
& = \bbH\paren*{A(u)} + \bbH\paren*{A_{(u,A(u))},B_{(u,A(u))} \mid A(u)}+ \bbH\paren*{\beta_{(u,A(u))}(B) \mid A,B_{(u,A(u))}} \tag{by chain rule of entropy (\cref{fact:chain-H})}\\
& \le \bbH\paren*{A(u)} + \Exp_{v \sim A(u)} \bbH\paren*{A_{(u,v)},B_{(u,v)} \mid A(u)=v} + \bbH\paren*{B \mid A(u),B_{(u,A(u))}} \\
& \le \bbH\paren*{A(u)} + \Exp_{v \sim A(u)} S_{n-1,\delta_{(u,v)}} + \bbH\paren*{B \mid A(u),B_{(u,A(u))}} \\
& \le \bbH\paren*{A(u)} + \Exp_{v \sim A(u)} S'_{n-1,\delta_{(u,v)}} + \bbH\paren*{B \mid A(u),B_{(u,A(u))}} \tag{by inductive hypothesis}\\
& \le \log n + S'_{n-1,\delta} + \bbH(B \mid A(u),B_{(u,A(u))}), \tag{by Jensen's inequality}
\end{align*}
where we use the fact that $S'_{n-1,\delta}$ is both concave\footnote{This is exactly the reason for introducing $S'_{n,\delta}$. Concavity does not necessarily hold for $S_{n,\delta}$ as defined.} and nondecreasing in $\delta$.
By \cref{lem:err-bp} and averaging over $u \sim [n]$, it implies that \[\bbH(A,B) \le \log n + S'_{n-1,\delta} + \log n - \frac{\bbH(A)}{2n} + \cO\paren*{\log\log n + \sqrt{\delta}n^3\log n}.\]
Since $\bbH(A) \ge S_{n,\delta}/2$ by assumption, rearranging the terms yields \[\paren*{1+\frac{1}{4n}} \cdot S_{n,\delta} \le S'_{n-1,\delta} + 2\log n + \cO\paren*{\log\log n + \sqrt{\delta}n^3\log n},\] concluding the proof by picking a sufficiently large constant $K$.
\end{proof}

%% file: const.tex
\section{Lower Bound for Connectivity with Constant Error}
\label{sec:const}

We prove round-dependent randomized lower bounds for connectivity with constant error in this section.
As will be seen in \cref{thm:property}, the reduction can be easily adapted to prove similar lower bounds for deciding other fundamental graph properties.

\begin{theorem}
\label{thm:const}
For even $N>1$, and $r=o(\log N)$, any $r$-round randomized protocol for $\bconn_N$ with constant error probability less than $1/2$ requires at least $\bfR^{(r)}(\bconn_N) = \Omega(N/r \cdot \log N)$ communication.
\end{theorem}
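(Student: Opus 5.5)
We would prove \cref{thm:const} by reduction from $\prob_{n,r'}$, using \cref{thm:prob-informal} (formally \cref{thm:prob,thm:prob-sublinear}) as a black box. That result says $r'$-round protocols for $\prob_{n,r'}$ need $\Omega(n\log n)$ communication. The reduction builds a layered bipartite instance of $\bconn_N$, in the style of \cite{AN21,RRV26}, in which the two players' perfect matchings simulate the permutation layers $A_1,\dots,A_{r'},B_1,\dots,B_{r'}$.

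The graph has $\Theta(r')$ layers of $\Theta(n)$ vertices each, so $N=\Theta(nr')$. Taking $r'=\Theta(r)$ and $n=\Theta(N/r)$, the bound $\Omega(n\log n)$ becomes $\Omega(N/r\cdot\log(N/r))=\Omega(N/r\cdot\log N)$, since $r=o(\log N)$.

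\textbf{The gadget.} We use two parallel ``tracks'' of layers, one carrying the bits of $X$ through $A_1,\dots,A_{r'}$ and one carrying $Y$ through $B_1,\dots,B_{r'}$.
\begin{enumerate}
\item Each coordinate $j$ is represented by a pair of vertices in every layer. The bit $X_j$ (held by Bob) is encoded as a straight or crossed matching on the first layer.
\item The permutation $A_i$ is realized as a matching between consecutive layers, supplied by whichever player owns $A_i$. Ownership alternates between odd and even $i$, as in the definition of $\prob$.
\item Since Alice's and Bob's edges must each form a perfect matching in $\bconn_N$, each layer transition is split into two half-steps. The owner of $A_i$ supplies the permuting matching, and the other player supplies an identity matching. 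Paths padded to odd length keep the graph bipartite and simple, as noted in the footnote on $\bconn_N$.
\item At the last layer, position $j$ of track $X$ is joined to position $j$ of track $Y$ by a $2\times 2$ gadget. In effect, the resulting structure is a collection of cycles whose parities combine into $\ang{A_{r'}\cdots A_1X,\,B_{r'}\cdots B_1Y}$.
\end{enumerate}
Inner product is not directly connectivity, so we need a further step. One option is the standard parity-to-cycle-count encoding: a single Hamiltonian cycle forms if and only if the XOR of twists is $1$. Another is to add a final chain through all $n$ coordinates, so the whole graph is one cycle exactly when the total parity of crossings equals a fixed value. This gives a Möbius-strip-style construction: a long cycle visiting coordinate loops in sequence, where each coordinate's crossing flips between two strands. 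The final result is one cycle or two cycles depending on $\ang{\cdot,\cdot}$.

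\textbf{Round accounting.} Message $i$ of a protocol for $\bconn_N$ corresponds to message $i$ of a protocol for $\prob_{n,r'}$, because the edges each player supplies depend only on that player's $\prob$-input. An $r$-round protocol for $\bconn_N$ therefore yields an $r$-round protocol for $\prob_{n,r'}$ with the same communication, after adjusting for the convention on who speaks first and who outputs. Choosing $r'$ appropriately, for example $r'=r$ or $r+1$, we invoke the lower bound for $\prob_{n,r'}$ with $r$ rounds, and $r'\ll n$ holds comfortably.

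\textbf{Main obstacle.} We expect the hardest step to be the gadget design. The inner product is an XOR of $n$ AND-terms $u_j\wedge w_j$, where $u=A_{r'}\cdots A_1X$ and $w=B_{r'}\cdots B_1Y$. We need each AND-term to flip the cycle parity, using only matchings each player can write down alone. Our first attempt is a per-coordinate $4$-vertex gadget whose crossing is determined by both $u_j$ and $w_j$. Concretely, track $X$ routes strand $j$ crossed if $u_j=1$, and track $Y$ routes it crossed again if $w_j=1$. This alone would compute XOR rather than AND, so we would instead use a $3$-state routing in which a twist occurs only when both crossings are present. If no such local gadget works, we fall back to encoding AND through a small constant-size permutation gadget, in the spirit of Barrington-style width-$5$ commutator constructions. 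Their constant blow-up in vertices affects only constants.
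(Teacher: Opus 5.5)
There is a genuine gap, and it is exactly the step you flag as the main obstacle. The gadget that turns $\langle A_r\cdots A_1X,\,B_r\cdots B_1Y\rangle$ into a difference in cycle structure is the whole content of the reduction, and the candidates you put first cannot work.

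No party knows $u_j$ or $w_j$. These bits exist only as twists that Bob places at the start, carried along by the permutation matchings. So any gadget at the far end is fixed, and the local effect on a coordinate is just the composition of the twist permutations met along the way. Suppose every strand meets the $X$-twist once and the $Y$-twist once. Then the local net permutation is $P_3h^{y}P_2g^{x}P_1$, with $P_1,P_2,P_3$ independent of $x,y$. Requiring it to take one value whenever $x\wedge y=0$ already forces $g=h=e$: compare $(x,y)=(0,0)$ with $(1,0)$ and with $(0,1)$. The value at $(1,1)$ is then the same as well. So no single-pass ``3-state routing'' computes AND. With two strands the twist group is $\mathbb{Z}_2$, so after your chaining the parity is just $\bigoplus_j X_j\oplus\bigoplus_j Y_j$, which Bob knows on his own.

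The Barrington-style fallback is the right idea, but you do not say how to realize it, and it is not a local change to your two-track picture. A commutator $g^{-x}h^{-y}g^{x}h^{y}$ needs every strand to meet the $X$-twist of coordinate $j$ and the $Y$-twist of the matching coordinate twice each, separated by forward and inverse passes through the permutations.

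The paper does this with four blocks of $2r$ permutation layers, realizing $C,C^{-1},C,C^{-1}$ on the coordinate index, where $C=(B_r\cdots B_1)^{-1}A_r\cdots A_1$. The first block is precisely your two tracks glued at the far end. Each block begins with a product layer applying $\sigma_X^{X_j}$, $\sigma_Y^{Y_j}$, $\sigma_X^{X_j}$, $\sigma_Y^{Y_j}$ respectively, to a width-$4$ local index. Here $\sigma_X=(1\ 2)(3\ 4)$ and $\sigma_Y=(2\ 3)$ are involutions, so they serve as their own inverses. The strand starting at $(0,j,k)$ then returns to coordinate $j$ with local effect $\tau^{X_jY_{C(j)}}$, where $\tau=(1\ 4)(2\ 3)$, and $\sum_j X_jY_{C(j)}$ is the desired inner product.

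Because every strand returns to its own coordinate, the paper chains $j\to j+1$ by identity matchings and closes with $\pi=(1\ 2\ 3\ 4)$. This gives a single $N$-cycle, or three cycles since $\pi\tau=(2\ 4)$. This width-$4$ closure replaces your two-strand one. The inverse layers of $A_i,B_i$ belong to the owner of $A_i,B_i$, and the product and fixed layers belong to Bob. With that construction your round and communication accounting goes through, after inserting $O(1)$ identity layers so that ownership alternates as $\bconn_N$ requires.
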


\begin{theorem}
\label{thm:property}
For $N \ge 1$, and $r=o(\log N)$, any $r$-round randomized protocol requires at least $\Omega(N/r \cdot \log N)$ communication to decide with constant error probability less than $1/2$ if an $N$-vertex graph is 
\begin{inparaenum}[(1)]
    \item bipartite;
    \item cycle-free;
    \item Eulerian;
    \item Hamiltonian;
    \item planar.
\end{inparaenum}
\end{theorem}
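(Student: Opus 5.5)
The plan is to reduce from the bipartite instance $\bconn_{N'}$ with $N'=\Theta(N)$ whenever we can, and to use \cref{thm:const} as the lower bound. In such an instance the graph $G$ is the union of the two perfect matchings $a,b$. So $G$ is $2$-regular and is a disjoint union of even cycles, and it is connected iff it is a single Hamiltonian cycle. Each reduction below is a local, input-independent transformation that each party applies to its own edges, together with a fixed set of extra vertices or edges that one party adds. It therefore uses no extra communication and no extra rounds. The blow-up in vertex count is $\cO(1)$ factor, so an $r$-round protocol for the property with $o(N/r\cdot\log N)$ communication would contradict \cref{thm:const}.

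\textbf{Hamiltonian and Eulerian.} Here the instance itself works.
\begin{itemize}
\item A $2$-regular graph is Hamiltonian iff it is connected.
\item All degrees are even, so the graph is Eulerian iff it is connected; there are no isolated vertices.
\end{itemize}

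\textbf{Cycle-free.} Alice deletes her matching edge incident to $1_\sL$. This destroys exactly one cycle of $G$, namely the cycle through $1_\sL$. Hence the resulting graph is acyclic iff $G$ had exactly one cycle, i.e.\ iff $G$ was connected.

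\textbf{Planar.} Bob adds two new apex vertices $x,y$ and joins both to all $2n$ original vertices. If $G$ is a single cycle, the result is planar: draw the cycle as a circle, with $x$ inside and $y$ outside. If $G$ has at least two cycles $C_1,C_2$, the subgraph on $C_1\cup\{x,y\}$ is $3$-connected and so has an essentially unique embedding. Every face of that embedding is a triangle containing only one of $x,y$. But $C_2$ must lie inside a single face while being adjacent to both $x$ and $y$, which is impossible. Concretely, I would exhibit a $K_{3,3}$ subdivision on $\{x,y,c_2\}\times\{c_1,c_1',c_1''\}$, with $c_2\in C_2$ and three vertices of $C_1$ joined to $c_2$ through paths via $x,y$. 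The remaining check is a routine verification that this gives a genuine subdivision.

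\textbf{Bipartite (expected main obstacle).} Whether $G$ is bipartite is a parity property of the cycle lengths. Subdividing edges in an input-independent way cannot make every cycle except one fixed cycle odd, so a local gadget on $\bconn$ seems insufficient. I would instead go back to the layered reduction from $\prob_{n,r}$ (\cref{thm:prob}) that underlies \cref{thm:const}. In that construction the components of the graph track the chased coordinates of $A_r\cdots A_1X$ and $B_r\cdots B_1Y$. The final inner-product layer, owned by Bob, joins the two chains. My plan is to realize that layer with one subdivided (odd) versus unsubdivided (even) connecting gadget according to the bits. Then each closed component has odd length exactly when the corresponding product bit is $1$, and bipartiteness encodes the AND/OR of these contributions. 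Getting a parity-style predicate that matches the inner product, rather than its OR, may need a further XOR gadget. The fallback I would try first is a path-concatenation layer that makes the whole graph a single cycle whose length parity equals $\langle A_r\cdots A_1X,B_r\cdots B_1Y\rangle$. The constant-factor blow-up and the $r$-round correspondence would then be checked exactly as in the proof of \cref{thm:const}.
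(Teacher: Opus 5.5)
Your Hamiltonian and Eulerian cases match the paper. Your cycle-freeness and planarity gadgets are valid alternatives to the paper's, which delete, respectively attach a $K_5$-minus-an-edge gadget at, two designated fixed edges. The genuine gap is bipartiteness, which you leave open.

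The fallback you would try first cannot work. If the reduced graph is a single cycle, its length equals the edge count $\card{E_\sA}+\card{E_\sB}$. The parity of that count is the XOR of a bit Alice computes and a bit Bob computes. So it could equal the inner product only if $\prob_{n,r}$ were solvable with one bit, contradicting \cref{thm:prob}. More generally, the number of odd cycles always has the parity of $\card{E_\sA}+\card{E_\sB}$. Hence bipartiteness must be encoded in how the graph splits into cycles, not in total length.

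Your first idea has a related flaw. Subdividing according to the bits adds $X_j$ and $Y_{C(j)}$ additively to a length, so it cannot realize the product $X_jY_{C(j)}$. The paper's reduction also has no ``final inner-product layer''. The product comes from the commutator identity $\sigma_{\sY}^{-y}\sigma_{\sX}^{-x}\sigma_{\sY}^{y}\sigma_{\sX}^{x}=\tau^{xy}$ acting on the last index of four parallel strands, and the output $O$ is encoded in the cycle structure of $\pi\tau^{O}$.

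The missing idea is to use \cref{lem:reduce} directly, in the bipartite form built in the proof of \cref{thm:const}. There the graph is one even cycle when $O=0$, and three even cycles of lengths $N/2,N/4,N/4$ when $O=1$. Since $\pi\tau=(2\ 4)$ fixes $1$ and $3$, the input-independent edges $((4\ell,n,4),(0,1,1))$ and $((4\ell,n,2),(0,1,3))$ lie on the two distinct $N/4$-cycles when $O=1$. Subdividing each of them once gives an even cycle of length $N+2$ if $O=0$, and two odd cycles if $O=1$.

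There are two smaller issues with your black-box use of $\bconn_{N'}$:
\begin{enumerate}
\item Its inputs may have parallel edges, and then your gadgets fail on the underlying simple graph. For $a=b$ that graph is a perfect matching, so it is acyclic, and it stays planar after joining $x,y$ to all vertices. You should restrict to the instances produced in the proof of \cref{thm:const}, which have no parallel edges.
\item Your $K_{3,3}$ is not a subdivision, because the paths from $c_2$ pass through the branch vertices $x,y$. The face argument is correct when $\card{C_1}\ge 3$. An explicit witness is a $K_5$ subdivision with branch vertices $x$, $y$ and three vertices of $C_1$, routing the $x$--$y$ path through a vertex of $C_2$.
\end{enumerate}
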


Another direct consequence of \cref{thm:const} is a multi-pass streaming lower bound for connectivity, as shown in \cref{cor:stream}.
Since global minimum cuts can only be harder, it also fully resolves the optimality of the two-pass algorithm in \cite{AD21}, even if any constant number of passes is allowed.

\begin{corollary}
\label{cor:stream}
For $N \ge 1$, and $p=o(\log N)$, any $p$-pass randomized streaming algorithm that decides connectivity in $N$-vertex graphs with constant error probability less than $1/2$, requires at least $\Omega(N/p^2 \cdot \log N)$ space.
\end{corollary}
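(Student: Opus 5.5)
The plan is to derive \cref{cor:stream} from \cref{thm:const} through the standard simulation of streaming algorithms by communication protocols. Fix a $p$-pass streaming algorithm with space $s$ that decides connectivity of $N$-vertex graphs with constant error. Given an instance $(a,b)$ of $\bconn_{N'}$ with $N' = \Theta(N)$ (adjusting parity and padding if necessary), Alice streams her matching edges $E_\sA$ and Bob streams his edges $E_\sB$. In each pass, Alice runs the algorithm on her portion of the stream and sends the $s$-bit memory state to Bob. Bob continues the pass on his edges and then sends the state back to Alice so she can start the next pass. Any public randomness of the algorithm is shared. This produces a protocol with $2p-1$ messages, which is $r = \cO(p)$ rounds, and total communication $\cO(ps)$. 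If the last message is Bob's state at the end of pass $p$, Bob can evaluate the output himself, in keeping with our convention that only the receiver of the last message outputs. The protocol inherits the algorithm's error probability. The bipartite multigraph issue from the footnote on $\bconn_N$ is handled, if needed, by the vertex-splitting trick, which changes $N$ only by a constant factor.

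Since $p = o(\log N)$, we also have $r = \cO(p) = o(\log N)$, so \cref{thm:const} applies and gives $\cO(ps) \ge \Omega(N/r \cdot \log N) = \Omega(N/p \cdot \log N)$. Therefore $s = \Omega(N/p^2 \cdot \log N)$. Because $\bconn$ instances are ordinary graphs, a general connectivity streaming algorithm in particular solves them, and the bound transfers.

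The only points needing care are bookkeeping. First, the round count should be exactly $2p-1$ rather than $2p$, although any $\cO(p)$ would do. Second, $\Theta(N)$ vertices have to be fit into an $N$-vertex graph, which is possible by using $N' = 2\lfloor N/c \rfloor$ for a suitable constant $c$ and adding isolated-free padding, for example by attaching the extra vertices as a path to one vertex so that connectivity is preserved. No step here is a real obstacle. The remark about global minimum cuts follows because connectivity is equivalent to the min-cut value being at least $1$.
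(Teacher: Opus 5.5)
Your proposal is correct and is essentially the paper's proof: simulate the $p$-pass, $s$-space algorithm by a $(2p-1)$-round protocol with $\cO(ps)$ communication, then apply \cref{thm:const} with $r=2p-1=o(\log N)$ to get $ps=\Omega(N/p\cdot\log N)$. One small wording fix: in this simulation the last message is Alice's state partway through pass $p$, after which Bob finishes the pass on his own edges and outputs, so Bob never sends a final message.
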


\begin{proof}
By standard reductions, any $p$-pass $S$-space streaming algorithm can be simulated by a $(2p-1)$-round communication protocol with $O(pS)$ communication.
Consequently, the corollary follows from \cref{thm:const}.
\end{proof}

\begin{remark}
We remark that the $2$-round $\Omega(N\log N)$ communication lower bound for connectivity in \cite{RRV26} alone does not automatically imply optimality of \cite{AD21}, as the standard simulation of $2$-pass streaming algorithms uses $2 \cdot 2 - 1 = 3$ communication rounds.
\end{remark}

Proofs of \cref{thm:const,thm:property} utilize the following communication lower bound for $\prob$, which will be proved in \cref{sec:prob}.

\begin{restatable*}{theorem}{thmprob}
\label{thm:prob}
For $n \ge 1$, and $r=o(\log n)$, any $r$-round protocol for $\prob_{n,r}$ with constant error probability less than $1/2$ requires at least $\bfR^{(r)}(\prob_{n,r}) = \Omega(n \log n)$ communication.
\end{restatable*}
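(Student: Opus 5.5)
We prove \cref{thm:prob} by tracking, round by round, how far the ``current pair'' of vectors is from uniform. Fix a protocol $\Pi$ with $r$ rounds and total communication $L=o(n\log n)$, and run it on uniform inputs. That is, $X,Y\sim\bits^n$ and $A_1,\dots,A_r,B_1,\dots,B_r\sim\cP_n$, all independent. For $i\in[0,r]$ write $X_i=A_i\cdots A_1X$ and $Y_i=B_i\cdots B_1Y$.

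The output $\ang{X_r,Y_r}$ is a bounded function of $(X_r,Y_r)$. It therefore suffices to show the following: conditioned on the transcript $M_{\le r}$, the joint distribution of $(X_r,Y_r)$ is typically close enough to a product of two near-uniform distributions that the inner product is unpredictable. We will use the standard fact that if $\bbH_2(X_r\mid\cdot)+\bbH_2(Y_r\mid\cdot)\ge n+\omega(1)$ with the two vectors conditionally independent, then $\ang{X_r,Y_r}$ has advantage $o(1)$ (Lindsey's lemma). Accordingly, the invariant we maintain is stated in terms of collision entropy and min-entropy, not Shannon entropy.

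The inductive step has the following structure. Suppose after round $i$ Bob holds $(X_i,Y_i)$ implicitly. By the rectangle property, conditioned on the transcript these are independent of the next permutations $A_{i+1},B_{i+1}$, which are held by the other player and are still fresh uniform except for information leaked in earlier messages. Applying a uniformly random permutation to a vector of Hamming weight $k$ yields a uniform vector of weight $k$. Hence the only information about $X_{i+1}$ is its weight, together with whatever the messages have revealed about the joint pair (pointer permutation, vector).

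The key quantitative claim is as follows. Let the message of round $i+1$ have length $\ell_{i+1}$, and suppose its sender holds $A_{i+1},B_{i+1}$ but not $X_i,Y_i$. Then the message can reduce the collision entropy of $(X_{i+1},Y_{i+1})$, beyond what is already lost, by at most $\cO(\ell_{i+1}/\log n)$ plus a small additive term. The intuition is that to learn one bit of $X_{i+1}$ the receiver must know the exact pointer $A_{i+1}^{-1}(j)$, which costs about $\log n$ bits. We would formalize this as an average over coordinates $j$: the information about $A_{i+1}$ carried by an $\ell$-bit message fixes at most $\cO(\ell/\log n)$ pointers sharply, and partial information on the remaining pointers is washed out. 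We would then bound the deviation $\bbD_2$ via a \holder-type estimate over random sets of pointers, aiming for a per-round loss of $\cO(\ell_{i+1}/\log n)+\cO(1)$.

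Summing over all $r$ rounds gives a total loss of $\cO(L/\log n)+\cO(r)$. This is $o(n)$ provided $L=o(n\log n)$ and $r=o(\log n)$; the second condition is slack enough to absorb the accumulated additive errors, which are of the form $n^{-\Omega(1)}$ or constants after taking logarithms. The remaining $n-o(n)$ bits of entropy then make the inner product nearly balanced, so the success probability is $1/2+o(1)$, a contradiction.

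The main obstacle is the per-round claim. Messages depend jointly on many permutations, and later messages may correlate with the current vectors. Shannon-entropy bookkeeping would lose $\poly(r,\log n)$ factors when converting to the collision-entropy statement needed at the end. First I would try to keep both $\bbH_\infty$ of the permutations (discarding low-probability transcripts via a Markov/typicality argument at each round) and $\bbH_2$ of the vectors. I would bound the collision probability of $X_{i+1}$ by averaging, over pairs of independent copies of $X_i$, the probability that a min-entropy-conditioned permutation maps them to equal vectors.
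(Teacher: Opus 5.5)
There are two genuine gaps: the per-round charging, and the endgame.

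\textbf{Per-round charging.} Your claim charges round $i+1$ only for $\ell_{i+1}$. But $(A_{i+1},B_{i+1})$ loses entropy to \emph{every} earlier message of its owner; for example, Alice may spend round $1$ describing $A_3$. So the min-entropy-plus-Markov bookkeeping you propose can only bound $\bbD_\infty(A_{i+1},B_{i+1}\mid\cdot)$ by the cumulative length $L_{i+1}=\sum_{j\le i+1}\ell_j$. Min-entropy has no chain rule that would let these charges amortize across rounds. Summed over rounds the loss is $rL/\log n$, which gives only the paper's weaker $\Omega(n\log n/r)$ bound.

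The missing idea is the virtual message $\wM_i=\floor{-\log\Pr(A_i,B_i\mid\gamma_{i-1},m_i)}$. Revealing it costs $\cO(\log n)$ bits per round and preserves the rectangle property. It also flattens the conditional law so that $\bbD_\infty\le\bbD_1+1$. The chain rule for mutual information then gives $\Exp\sum_i\bbD_1(A_i,B_i\mid\gamma_{i-1},m'_i)\le 2L$, and a single Markov step over the whole sum suffices.

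Your additive accounting is also unsupported. The estimate $\Exp_{w,z}2^{\bbD_2(Q^{(w,z)})}\le 2^{(\bbD_\infty(Q)+\cO(n))/\log n}$ genuinely needs its $\cO(n)$ term. A point mass $Q$ gives $2^{2n}$ on the left, while $2\log(n!)/\log n=2n-\Theta(n/\log n)$. So each round costs $\Theta(n/\log n)$, not $\cO(1)$. This accumulated loss is exactly where the hypothesis $r=o(\log n)$ is used, and your accounting leaves that hypothesis unexplained.

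\textbf{Endgame.} $X_r$ and $Y_r$ are not conditionally independent, either given the transcript or given the receiver's view. Both are driven by $(A_r,B_r)$, which the last sender holds jointly and whose conditional law need not be a product. In addition, the output $O$ depends on the receiver's input, which is correlated with $(X_{r-1},Y_{r-1})$.

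So Lindsey's lemma is unavailable, and high collision entropy of the pair alone does not help. The uniform distribution on $\set{(x,y):\ang{x,y}=0}$ has $\bbH_2\ge 2n-1$ but fixes the answer.

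The paper instead fixes everything except the receiver's $(A_{r-1},B_{r-1})$ and the sender's $(A_r,B_r)$. It writes the advantage as $\sum_{w,z}f(w,z)g(w,z)$, where $f$ is the sign-weighted law of $(X_{r-1},Y_{r-1})$ and $g(w,z)=\Exp(-1)^{\ang{w,A_r^{-1}B_rz}}$. It then applies Cauchy--Schwarz:
\begin{itemize}
\item $\sum f^2\le 2^{-2n+\Phi_{r-1}}$ comes from the collision-entropy invariant;
\item $\sum g^2=2^n\Exp_{c,c'}2^{\cyc(c^{-1}c')}$ with $c=A_r^{-1}B_r$ is controlled by the min-entropy of $(A_r,B_r)$.
\end{itemize}
The unpredictability comes from the random bilinear form $A_r^{-1}B_r$, not from a product structure on $(X_r,Y_r)$.
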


In \cref{sec:const-conn}, we present the reduction from $\prob$.
\cref{thm:property} is shown in \cref{sec:const-other}.

\subsection{Proof of \cref{thm:const}}
\label{sec:const-conn}

This section constitutes the proof of \cref{thm:const}.
Fix sufficiently large even $N$, and $r=o(\log N)$.
We first prove the lower bound for $\conn_N$.

At a high level, the input permutations $A_i,B_i$ of $\prob_{n,r}$ are encoded by perfect matchings between consecutive layers in the constructed graph instance.
The two bit vectors $X,Y$ decide the structure of interleaved special layers so that the number of cycles is determined by the output of $\prob_{n,r}$.
Similar approaches have been taken to provide reductions from other pointer-chasing-type communication problems; see for example \cite{AN21,RRV26}.

The graph instance $G=(V,E)$ is constructed as follows. 

\paragraph{Vertices.}

$V$ consists of $4\ell+1$ layers $V_0,\dots,V_{4\ell}$, each with $4n$ vertices, where $\ell=2r+1$.
The total number of vertices is $N = (4\ell+1) \cdot 4n = \Theta(rn)$.
Each vertex is indexed by $(i,j,k) \in [0,4\ell] \times [n] \times [4]$, so vertices in the same layer are partitioned into groups of size $4$.

\begin{figure}[tp]
    \centering
    \input{fig/reduce}
    \caption{An illustration of the first $\ell+1$ layers of permutation and product edges in $G$, with $n=4,r=3$ and thus $\ell=7$. Each yellow group represents a group of $4$ vertices in the same layer. Red edges correspond to permutations $A_1=(1\ 2\ 3\ 4)$, $A_2=(1\ 2)$, $A_3=(2\ 4\ 3)$. Blue edges correspond to inverses of permutations $B_1=(1\ 3\ 2\ 4)$, $B_2=(2\ 4)$, $B_3=(1\ 2)(3\ 4)$ in reverse order. Green edges are product edges.}
    \label{fig:reduce}
\end{figure}

\paragraph{Edges.}

$E$ consists of different types of edges.
All edges are between two consecutive layers, or $V_{4\ell}$ and $V_0$.
The type of an edge is determined solely by the layers that it connects.
All edges connecting the same two layers are owned by the same party.

First consider \emph{permutation edges}.
For $i \in [r]$ and $j \in [n]$, add the identity matching (i.e., vertices with the same last index are connected) between each of the following groups:
\begin{enumerate}
    \item $(i,j,[4])$ and $(i+1,A_i(j),[4])$;
    \item $(r+i,j,[4])$ and $(r+i+1,B_{r-i+1}^{-1}(j),[4])$;
    \item $(\ell+i,j,[4])$ and $(\ell+i+1,B_i(j),[4])$;
    \item $(\ell+r+i,j,[4])$ and $(\ell+r+i+1,A^{-1}_{r-i+1}(j),[4])$;
    \item $(2\ell+i,j,[4])$ and $(2\ell+i+1,A_i(j),[4])$;
    \item $(2\ell+r+i,j,[4])$ and $(2\ell+r+i+1,B_{r-i+1}^{-1}(j),[4])$;
    \item $(3\ell+i,j,[4])$ and $(3\ell+i+1,B_i(j),[4])$;
    \item $(3\ell+r+i,j,[4])$ and $(3\ell+r+i+1,A^{-1}_{r-i+1}(j),[4])$.
\end{enumerate}
Essentially, the $4\ell$ layers of edges are equally partitioned into $4$ blocks.
In each block, all but the first layer of edges are permutation edges, with each layer corresponding to one of the $2r$ permutations.
The first and the third blocks encode the permutations in the order of $A_1,\dots,A_r,B_r^{-1},\dots,B_1^{-1}$, while the second and the fourth blocks encode the inverse permutations in the reverse order, namely $B_1,\dots,B_r,A_r^{-1},\dots,A_1^{-1}$.
Therefore, while preserving the last index, the net effect of permutation edges on the middle index is simply permutations $C,C^{-1},C,C^{-1}$ in the four blocks, respectively, where
\begin{align}
C = (B_r \cdots B_1)^{-1} A_r \cdots A_1. \label{eqn:reduce-perm}
\end{align}

The first layer in each block is reserved for \emph{product edges}.
Specifically, let $\sigma_\sX=(1\ 2)(3\ 4)$ and $\sigma_\sY=(2\ 3)$ be $4$-permutations.
For $j \in [n]$, add the following product edges:
\begin{enumerate}
    \item Edges corresponding to $(\sigma_\sX)^{X_j}$ between $(0,j,[4])$ and $(1,j,[4])$;
    \item Edges corresponding to $(\sigma_\sY)^{Y_j}$ between $(\ell,j,[4])$ and $(\ell+1,j,[4])$;
    \item Edges corresponding to $(\sigma_\sX)^{-X_j}=(\sigma_\sX)^{X_j}$ between $(2\ell,j,[4])$ and $(2\ell+1,j,[4])$;
    \item Edges corresponding to $(\sigma_\sY)^{-Y_j}=(\sigma_\sY)^{Y_j}$ between $(3\ell,j,[4])$ and $(3\ell+1,j,[4])$.
\end{enumerate}
Observe that product edges always preserve the middle index.
More important, for any bits $x,y \in \bits$, the net effect of product edges on the last index is exactly
\begin{align}
(\sigma_\sY)^{-y}(\sigma_\sX)^{-x}(\sigma_\sY)^y(\sigma_\sX)^x = \tau^{x \cdot y}, \label{eqn:reduce-product}
\end{align}
where $\tau=(1\ 4)(2\ 3)$.
Indeed, if either $x$ or $y$ is $0$, then it trivially evaluates to the identity permutation.
Otherwise, $\sigma_\sY \sigma_\sX \sigma_\sY \sigma_\sX = \tau$.
Note that $\tau^2$ is simply the identity permutation.

Lastly, there are also \emph{fixed edges}.
For now, the fixed edges only connect $V_{4\ell}$ and $V_0$.
We will insert a few more layers of fixed edges for $\bconn_N$.
Concretely, for $j \in [n-1]$, add the identity matching between $(4\ell,j,[4])$ and $(0,j+1,[4])$.
In the meantime, add the edges corresponding to the cyclic $4$-permutation $\pi=(1\ 2\ 3\ 4)$ between $(4\ell,n,[4])$ and $(0,1,[4])$.

This completes the construction of $G$.
See \cref{fig:reduce} for an illustration of (part of) permutation and product edges.
Note that it only depicts the first block of permutation edges and two adjacent layers of product edges.
It does not contain the fixed edges.

\cref{lem:reduce} provides an exact characterization of the structure of $G$.

\begin{lemma}
\label{lem:reduce}
$G$ consists of a single $N$-cycle if $\ang{A_r \cdots A_1 X, B_r \cdots B_1 Y} = 0$, and of three disjoint cycles of length $N/2,N/4,N/4$ if $\ang{A_r \cdots A_1 X, B_r \cdots B_1 Y} = 1$.
\end{lemma}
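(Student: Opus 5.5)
The plan is to exploit the fact that $G$ is $2$-regular with a very rigid layered structure, and to reduce the cycle structure of $G$ to the cycle structure of a single $4$-permutation acting on the last index.

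\textbf{Step 1: reduce to a return map on $V_0$.} Every vertex of $V_i$ has exactly one edge to $V_{i-1}$ and exactly one edge to $V_{i+1}$, with indices taken cyclically so that $V_{4\ell}$ is adjacent to $V_0$. This holds because every layer of edges is a perfect matching between consecutive layers: permutation, product and fixed edges all are. Hence $G$ is a disjoint union of cycles, and each cycle winds monotonically around the layers. Following a cycle forward from $V_0$ through $V_1,\dots,V_{4\ell}$ and back to $V_0$ defines a bijection $\Phi$ on $V_0 \cong [n]\times[4]$. The cycles of $G$ correspond bijectively to the cycles of $\Phi$, and a $\Phi$-cycle of length $m$ yields a $G$-cycle of length $m(4\ell+1)$. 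So it suffices to show two things: $\Phi$ is a single $4n$-cycle when the inner product is $0$, and $\Phi$ has cycles of lengths $2n,n,n$ when it is $1$.

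\textbf{Step 2: compute one pass through the four blocks.} Start at $(0,j,k)$ and trace through the blocks.
\begin{itemize}
\item Block one: the product edges apply $\sigma_\sX^{X_j}$ to $k$, and the permutation edges send $j\mapsto C(j)$ by \cref{eqn:reduce-perm}.
\item Block two: $\sigma_\sY^{Y_{C(j)}}$ is applied, and then $C^{-1}$ returns the middle index to $j$.
\item Blocks three and four: these repeat the pattern with $\sigma_\sX^{X_j}$ and $\sigma_\sY^{Y_{C(j)}}$.
\end{itemize}
By \cref{eqn:reduce-product}, arriving at $V_{4\ell}$ the vertex is $(4\ell, j, \tau^{t_j}(k))$ with $t_j = X_jY_{C(j)}$. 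Next I check that $\sum_j t_j \equiv \ang{A_r\cdots A_1X,B_r\cdots B_1Y} \pmod 2$. With $P=A_r\cdots A_1$ and $Q=B_r\cdots B_1$, the matrix convention gives $(PX)_{P(j)}=X_j$, so
\[\ang{PX,QY}=\sum_j X_j (QY)_{P(j)}=\sum_j X_j Y_{Q^{-1}P(j)}=\sum_j X_jY_{C(j)}.\]

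\textbf{Step 3: include the fixed edges and compute the cycle type.} The fixed edges send group $j$ to group $j+1$ for $j<n$, and send group $n$ to group $1$ through $\pi$. Thus $\Phi$ cycles through the groups $1,2,\dots,n$. Consequently $\Phi^n$ restricted to group $1$ acts on the last index as $\pi\tau^{t_n}\cdots\tau^{t_1}=\pi\tau^{s}$, where $s$ is the inner product; the powers of $\tau$ commute and $\tau^2=\mathrm{id}$. Every cycle of $\Phi$ meets group $1$, and a cycle of $\pi\tau^s$ of length $m$ gives a $\Phi$-cycle of length $mn$.

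If $s=0$, the permutation is $\pi=(1\ 2\ 3\ 4)$, a single $4$-cycle, so $G$ is a single $N$-cycle. If $s=1$, a direct computation gives $\pi\tau=(2\ 4)$ with fixed points $1$ and $3$. The opposite composition order gives a conjugate permutation, so the cycle type is the same either way. The resulting cycle lengths are $N/2,N/4,N/4$.

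\textbf{Main obstacle.} The main risk is bookkeeping: making sure the direction conventions are consistent. In particular I must check that the edges between $(r+i,\cdot)$ and $(r+i+1,\cdot)$ labelled $B^{-1}_{r-i+1}$ really compose to $Q^{-1}$ when traversed forward, and that the product layers in blocks three and four really contribute the inverses required in \cref{eqn:reduce-product}. I would verify these carefully layer by layer. The cycle-type conclusion itself is robust to conjugation, so these checks affect only the intermediate expressions, not the final answer.
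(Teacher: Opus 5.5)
Your proposal is correct and follows essentially the paper's argument: you trace each $(0,j,k)$ through the four blocks to get net effect $\tau^{X_jY_{C(j)}}$ on the last index, identify $\sum_j X_jY_{C(j)}$ with the inner product, and chain the groups through the fixed edges so that the closing permutation is $\pi\tau^{s}$, with $\pi\tau=(2\ 4)$. Your return map $\Phi$ on $V_0$ only makes explicit the cycle-length bookkeeping that the paper leaves implicit when it describes four disjoint paths closed up by $\pi$.
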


\begin{proof}
The observations above, especially \cref{eqn:reduce-perm,eqn:reduce-product}, establish that every vertex $(0,j,k) \in V_0$ is connected via a unique and disjoint path to
\begin{enumerate}
    \item $(\ell,C(j),(\sigma_\sX)^{X_j}(k)) \in V_\ell$ by the first block;
    \item $(2\ell,j,(\sigma_\sY)^{Y_{C(j)}}(\sigma_\sX)^{X_j}(k)) \in V_{2\ell}$ by the second block;
    \item $(3\ell,C(j),(\sigma_\sX)^{-X_j}(\sigma_\sY)^{Y_{C(j)}}(\sigma_\sX)^{X_j}(k)) \in V_{3\ell}$ by the third block;
    \item $(4\ell,j,(\sigma_\sY)^{-Y_{C(j)}}(\sigma_\sX)^{-X_j}(\sigma_\sY)^{Y_{C(j)}}(\sigma_\sX)^{X_j}(k)) \in V_{4\ell}$ by the fourth block.
\end{enumerate}
By \cref{eqn:reduce-product}, the net effect is exactly $\tau^{X_j \cdot Y_{C(j)}}$ on the last index.

Furthermore, observe that the identity matchings added as fixed edges essentially compose the net effect for every $j \in [n]$ in sequence.
In more detail, excluding the fixed edges corresponding to $\pi$, $G$ is the union of four disjoint paths, each with $N/4$ vertices, from $(0,1,k)$ to $(4\ell,n,\tau^O(k))$ for $k \in [4]$, where \[O=\sum_{j \in [n]} X_j \cdot Y_{C(j)} = \ang{A_r \cdots A_1 X, B_r \cdots B_1 Y} \pmod{2}.\]
To see the last step, note that for $j' \in [n]$, the $j'$-th bit of $A_r \cdots A_1 X$ is the $(A_r \cdots A_1)^{-1}(j')$-th bit of $X$, while the $j'$-th bit of $B_r \cdots B_1 Y$ is the $(B_r \cdots B_1)^{-1}(j')$-th bit of $Y$.
Plugging in $j=(A_r \cdots A_1)^{-1}(j')$ yields the desired equality by \cref{eqn:reduce-perm}.

Finally, the proof is completed by observing $\pi\tau=(2\ 4)$.
\end{proof}

Now, we are ready to prove \cref{thm:const}.

\begin{proof}[Proof of \cref{thm:const}]
Given any $\prob_{n,r}$ input $(X,Y,A_{\le r},B_{\le r})$, all edges of the graph instance $G$ can be constructed by either Alice or Bob.
Indeed, permutation edges corresponding to $A_i,B_i$ (or $A_i^{-1},B_i^{-1}$) are owned by Alice for odd i and are owned by Bob for even $i$.
Bob also owns both product edges and fixed edges.
Therefore, \cref{lem:reduce} implies that any protocol for $\conn_N$ also solves $\prob_{n,r}$ with the same number of rounds, error probability, and communication.
The lower bound for $\conn_N$ then follows from the lower bound for $\prob_{n,r}$ (\cref{thm:prob}) as $n=\Theta(N/r)$ and thus $r=o(\log n)$.

For $\bconn_N$, we additionally insert a constant number of layers of fixed edges so that the graph instance $G$ is bipartite and that Alice and Bob own alternating layers of edges.
Concretely, the current construction already has alternating ownership except that:
\begin{enumerate}
    \item Layers of edges corresponding to $A_r,B^{-1}_r$ (or $B_r,A_r^{-1}$) are owned by the same party.
        To fix this, simply insert an additional layer of fixed edges in between, corresponding to the identity permutation, that is owned by the opposite party.
    \item Bob owns both edges between $V_0,V_1$ and the ones between $V_{4\ell},V_0$.
        This can also be fixed by inserting an additional layer of fixed edges after $V_0$, corresponding to the identity permutation, that is owned by Alice.
\end{enumerate}
Then, the modified graph has a balanced bipartition such that both Alice and Bob own a perfect matching across the bipartition.
This concludes the proof\footnote{Technically, we only construct $G$ for $N$ of the form $(4\ell+1+5) \cdot 4n$. For any other even $N$, simply let $n = \floor{N/(16\ell+24)}$. Splitting the fixed edge $((4\ell,n,1),(0,1,2))$ into a path of length $N - (16\ell+24) \cdot n + 1$ ensures exactly $N$ vertices. Assign the subdivided edges to the two parties alternately. No analysis is affected.} for $\bconn_N$ since the number of layers increases only by a constant (exactly $4+1=5$, to be precise).
\end{proof}

\subsection{Lower Bounds for Other Graph Properties}
\label{sec:const-other}

\cref{thm:property} can be shown via standard modifications to the reduction in \cref{sec:const-conn}.

\begin{proof}[Proof of \cref{thm:property}]
The proof of \cref{thm:const} actually constructs a bipartite graph $G$ consisting of either a single even cycle or three disjoint even cycles such that the two fixed edges $((4\ell,n,4),(0,1,1))$, $((4\ell,n,2),(0,1,3))$ are contained in different cycles.

The lower bound for bipartiteness follows by subdividing each edge above into a path of length $2$.
Indeed, if there is a single even cycle, $G$ remains bipartite, as the length of the cycle increases by $2$.
Otherwise, $G$ contains two odd cycles and is not bipartite anymore.

For cycle-freeness, remove the above edges.
In the case of a single cycle, removing the above edges breaks the cycle.
Otherwise, $G$ still contains one cycle.

A graph is Eulerian if and only if it is connected and every vertex has an even degree.
Since every vertex has degree exactly $2$ in $G$, it is Eulerian if and only if it is connected.

Similarly, $G$ is Hamiltonian if and only if it is connected.

As to planarity, add three new vertices $u_1,u_2,u_3$, and construct a $5$-clique among $u_1,u_2,u_3$ and $(0,1,1),(0,1,3)$, except for the edge between $(0,1,1)$ and $(0,1,3)$.
If $G$ is a single cycle, then the cycle completes a $K_5$ minor and thus the resulting graph is not planar.
Otherwise, since $(0,1,1)$ and $(0,1,3)$ are in two disconnected cycles, the resulting graph remains planar.

This concludes the proof as the lower bound for each of the above graph properties follows from a similar reduction from $\prob_{n,r}$.
\end{proof}

%% file: fig/reduce.tex
\begin{tikzpicture}

\tikzset{
    layer/.style={rectangle,rounded corners=3pt,draw,gray!50,line width=1pt,fill=yellow!10,inner sep=4pt},
    vertex/.style={circle,fill,minimum size=5pt,inner sep=0pt}
}

\tikzmath{
    \n=4;\r=3;\l=2*\r+1;\lp=\l+1;
    \a{1,1}=2;\a{1,2}=3;\a{1,3}=4;\a{1,4}=1;
    \a{2,1}=2;\a{2,2}=1;\a{2,3}=3;\a{2,4}=4;
    \a{3,1}=1;\a{3,2}=4;\a{3,3}=2;\a{3,4}=3;
    \b{1,1}=3;\b{1,2}=4;\b{1,3}=2;\b{1,4}=1;
    \b{2,1}=1;\b{2,2}=4;\b{2,3}=3;\b{2,4}=2;
    \b{3,1}=2;\b{3,2}=1;\b{3,3}=4;\b{3,4}=3;
    \x{1}=2;\x{2}=1;\x{3}=4;\x{4}=3;
    \y{1}=1;\y{2}=3;\y{3}=2;\y{4}=4;
}

\begin{scope}[local bounding box=b01]
    \node[vertex] (v011) {};
    \foreach \k in {2,...,4} {
        \pgfmathtruncatemacro{\kp}{\k-1}
        \node[vertex] (v01\k) [below=10pt of v01\kp] {};
    }
\end{scope}

\foreach \j in {2,...,\n} {
    \pgfmathtruncatemacro{\jp}{\j-1}
    \begin{scope}[local bounding box=b0\j]
        \node[vertex] (v0\j1) [below=15pt of v0\jp4] {};
        \foreach \k in {2,...,4} {
            \pgfmathtruncatemacro{\kp}{\k-1}
            \node[vertex] (v0\j\k) [below=10pt of v0\j\kp] {};
        }
    \end{scope}
}

\foreach \j in {1,...,\n} {
    \begin{scope}[on background layer]
        \draw[layer] ($(b0\j.south west) - (5pt,5pt)$) rectangle ($(b0\j.north east) + (5pt,5pt)$);
    \end{scope}
}

\foreach \i in {1,...,\lp} {
    \pgfmathtruncatemacro{\ip}{\i-1}
    \begin{scope}[local bounding box=b\i1]
        \node[vertex] (v\i11) [right=50pt of v\ip11] {};
        \foreach \k in {2,...,4} {
            \pgfmathtruncatemacro{\kp}{\k-1}
            \node[vertex] (v\i1\k) [below=10pt of v\i1\kp] {};
        }
    \end{scope}

    \foreach \j in {2,...,\n} {
        \pgfmathtruncatemacro{\jp}{\j-1}
        \begin{scope}[local bounding box=b\i\j]
            \node[vertex] (v\i\j1) [below=15pt of v\i\jp4] {};
            \foreach \k in {2,...,4} {
                \pgfmathtruncatemacro{\kp}{\k-1}
                \node[vertex] (v\i\j\k) [below=10pt of v\i\j\kp] {};
            }
        \end{scope}
    }

    \foreach \j in {1,...,\n} {
        \begin{scope}[on background layer]
            \draw[layer] ($(b\i\j.south west) - (5pt,5pt)$) rectangle ($(b\i\j.north east) + (5pt,5pt)$);
        \end{scope}
    }
}

\foreach \i in {1,...,\r} {
    \pgfmathtruncatemacro{\in}{\i+1}
    \pgfmathtruncatemacro{\ii}{\l-\i+1}
    \pgfmathtruncatemacro{\iip}{\ii-1}
    \foreach \j in {1,...,\n} {
        \foreach \k in {1,...,4} {
            \draw[thick,red] (v\i\j\k) -- (v\in\a{\i,\j}\k);
            \draw[thick,blue] (v\ii\j\k) -- (v\iip\b{\i,\j}\k);
        }
    }
}

\pgfmathtruncatemacro{\i}{\l}
\pgfmathtruncatemacro{\in}{\lp}
\foreach \j in {1,...,\n} {
    \foreach \k in {1,...,4} {
        \draw[thick,green] (v0\j\k) -- (v1\j\x{\k});
        \draw[thick,green] (v\i\j\k) -- (v\in\j\y{\k});
    }
}

\end{tikzpicture}

%% file: prob.tex
\section{Lower Bound for $\prob$}
\label{sec:prob}

In this section, we prove the communication lower bound for $\prob$, as promised in \cref{sec:const}.

\thmprob

\begin{remark}
The requirement of $r=o(\log n)$ can actually be relaxed to many more rounds.
We remark that \cref{thm:prob} already suffices to prove the lower bound for connectivity in \cref{sec:const}, and larger $r$ does not help obtain a stronger lower bound.
Nevertheless, we extend the lower bound in \cref{sec:prob-sublinear}, as it may be of independent interest.
\end{remark}

The rest of this section constitutes the proof of \cref{thm:prob}.
We first prove a weaker version of \cref{thm:prob} in \cref{sec:prob-weak}, showing that $\bfR^{(r)}(\prob_{n,r}) = \Omega(n/r \cdot \log n)$.
It already demonstrates many of the crucial ideas for the full proof.
As to graph connectivity, by the same reduction as in \cref{sec:const}, the weaker version is sufficient for proving the optimal lower bound of $\bfR^{(r)}(\conn_N) = \Omega(N \log N)$ for constant rounds.
However, it yields a worse $1/r^2$ dependency on $r$ for superconstant rounds, as compared to \cref{thm:const}.

In \cref{sec:prob-full}, we give a tighter analysis of how much randomness in the input permutations can actually be lost during the protocol.
This completes the full proof of \cref{thm:prob}.

\subsection{A Weaker Lower Bound}
\label{sec:prob-weak}

Fix sufficiently large $n$, and $r=o(\log n)$.
Suppose that there exists an $r$-round protocol with constant error probability $\varepsilon < 1/2$ and communication $L=o(n/r \cdot \log n)$.
We can actually assume $\varepsilon \le 1/3$ by repetition.
We further assume that the protocol is deterministic and has error probability $\varepsilon$ on uniformly random input.

The high level plan is to show that the pair of initially independent and uniformly random bit vectors $(X,Y)$ retains almost full entropy through the protocol, and eventually, after taking inner product, leads to one random output bit with hardly any bias.
More specifically, let $(W_0,Z_0)=(X,Y)$ and recursively define $(W_i,Z_i)=(A_i W_{i-1}, B_i Z_{i-1})$ for $i \in [r]$.
In round $i$, the $i$-th pair of permutations $(A_i,B_i)$ should still remain close to random because the total communication is limited.
Then, intuitively, $(W_i,Z_i)$ should also have enough randomness, provided that $(W_{i-1},Z_{i-1})$ is sufficiently random to start with.

\cref{lem:prob-D} formalizes the above intuition in the ideal case where $(W_{i-1},Z_{i-1})$ were uniformly random.
With $Q$ being the current distribution of $(A_i,B_i)$, it essentially upper bounds the ``distance'' of $(W_i,Z_i)$ from being uniformly random, in terms of the ``distance'' of $(A_i,B_i)$ from being uniformly random (although the ``distances'' are measured by \renyi{} divergence of different orders).

\begin{lemma}
\label{lem:prob-D}
For any distribution $Q$ on $\cP_n \times \cP_n$, it holds that \[\Exp_{w,z \sim \bits^n} 2^{\bbD_2(Q^{(w,z)})} \le 2^{\frac{\bbD_\infty(Q)+\cO(n)}{\log n}},\]
where $Q^{(w,z)}$ is the distribution of $(aw,bz) \in \bits^{2n}$ when $(a,b) \sim Q$.
\end{lemma}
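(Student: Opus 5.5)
The plan is to compute the left-hand side exactly as a collision probability, reduce it to a cycle-count statistic of permutations, and then bound that statistic with \holder{}'s inequality against the min-entropy of $Q$.

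\textbf{Step 1: exact formula.} Write $Q^{(w,z)}$ for the distribution of $(aw,bz)$ as in the statement. Since $2^{\bbD_2(X)} = \card{\cU} \cdot \sum_x \Pr(X=x)^2$, we have
\[2^{\bbD_2(Q^{(w,z)})} = 2^{2n}\Pr_{(a,b),(a',b')\sim Q}\paren*{aw=a'w,\ bz=b'z},\]
where $(a,b)$ and $(a',b')$ are independent copies. Averaging over uniform $w,z$ and exchanging expectations gives
\[2^{2n}\,\Exp_{(a,b),(a',b')}\Pr_w(a^{-1}a'w=w)\cdot\Pr_z(b^{-1}b'z=z).\]
A permutation $\pi$ fixes a bit vector $w$ exactly when $w$ is constant on every cycle of $\pi$. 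Hence $\Pr_w(\pi w=w)=2^{\cyc(\pi)-n}$, and
\[\Exp_{w,z} 2^{\bbD_2(Q^{(w,z)})} = \Exp_{(a,b),(a',b')\sim Q} 2^{\cyc(a^{-1}a')+\cyc(b^{-1}b')}.\]

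\textbf{Step 2: use the min-entropy.} Let $D=\bbD_\infty(Q)$, so $Q(a',b')\le 2^{D}/(n!)^2$ pointwise. Fix $(a,b)$ and set $f(a',b') = 2^{\cyc(a^{-1}a')+\cyc(b^{-1}b')}$. Write $U$ for the uniform distribution on $\cP_n\times\cP_n$ and $h=dQ/dU\le 2^D$. Apply \holder{} with exponents $p$ and $q=p/(p-1)$:
\[\Exp_Q f=\Exp_U fh\le (\Exp_U f^p)^{1/p}(\Exp_U h^q)^{1/q}.\]
Since $h\le 2^D$ and $\Exp_U h=1$, we have $\Exp_U h^q\le 2^{D(q-1)}$, so $(\Exp_U h^q)^{1/q}\le 2^{D/p}$. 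Under $U$, both $a^{-1}a'$ and $b^{-1}b'$ are independent uniform permutations, so $\Exp_U f^p=\paren*{\Exp_{\sigma\sim\cP_n}2^{p\cdot\cyc(\sigma)}}^2$.

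\textbf{Step 3: the moment computation.} This is the main technical point. Use the classical generating function
\[\Exp_\sigma x^{\cyc(\sigma)}=\frac{x(x+1)\cdots(x+n-1)}{n!}=\prod_{k=0}^{n-1}\frac{x+k}{k+1}.\]
Choosing $p=\log n$ gives $x=n$, and the product becomes $\binom{2n-1}{n}\le 4^n$. Therefore $(\Exp_U f^p)^{1/p}\le 2^{4n/\log n}$.

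\textbf{Step 4: combine.} The bound from Steps 2 and 3 holds for every fixed $(a,b)$, so averaging over $(a,b)\sim Q$ gives
\[\Exp_{w,z} 2^{\bbD_2(Q^{(w,z)})} \le 2^{(D+4n)/\log n},\]
which is the claimed bound.

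The only delicate choice is the \holder{} exponent. Taking $p\approx\log n$ is what makes the cycle-moment $\Exp 2^{p\cdot\cyc}$ only exponential in $n$ (rather than growing like $2^{\Theta(n\log n)}$), while still dividing $D$ by $\log n$. Any edge cases for small $n$ are absorbed into the $\cO(n)$ term.
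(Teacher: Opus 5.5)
Your proof is correct. Step 1 is the same reduction the paper uses, to $\Exp_{(a,b),(a',b')\sim Q} 2^{\cyc(a^{-1}a')+\cyc(b^{-1}b')}$, but from there you take a genuinely different route.

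\textbf{How the paper bounds this quantity.} The paper uses \cref{clm:cyc}, which is a direct counting argument. It thresholds the total cycle count at $t=(\bbD_\infty(Q)+Kmn)/\log n$ and charges at most $2^t$ to tuples below the threshold. It then uses an insertion encoding of cycle decompositions to show that at most $2^{mn}n^{mn-t}$ tuples lie above the threshold, each with probability at most $2^{-\bbH_\infty(Q)}$.

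\textbf{How your argument relates to the paper.} Your \holder{}-against-min-entropy argument is the template the paper reserves for \cref{lem:prob-D-sublinear}. There, however, the cycle moment is bounded crudely by $\exp(\cO(2^q\ln n))$. Taking $q=\log n$ in that bound would give only $2^{\bbD_\infty(Q)/\log n+\cO(n)}$, an additive $\cO(n)$ rather than $\cO(n/\log n)$ in the exponent. So the appendix lemma, as written, does not recover the present statement. Your exact evaluation via the rising-factorial generating function, $\Exp_{\sigma} n^{\cyc(\sigma)}=\binom{2n-1}{n}\le 4^n$, is precisely what makes the exponent $p=\log n$ work.

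\textbf{What each approach buys.} Your single \holder{} argument gives this lemma with the explicit bound $2^{(\bbD_\infty(Q)+4n)/\log n}$. It also gives the appendix variant by choosing $p=\zeta\log n$. It extends verbatim to any number $m$ of permutation coordinates, for instance $m=1$ as needed in \cref{lem:prob-O}, yielding $2^{(\bbD_\infty+2mn)/\log n}$. The paper's counting proof is fully elementary and avoids the Stirling-number identity.

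The only edge case is that \holder{} needs $p=\log n>1$, i.e.\ $n\ge 3$; smaller $n$ are trivial, as you note.
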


The proof of \cref{lem:prob-D} uses the following technical lemma.

\begin{claim}
\label{clm:cyc}
For $m \ge 1$, permutations $\sigma_1,\dots,\sigma_m \in \cP_n$, and any distribution $Q$ on $(\cP_n)^m$, it holds that \[\Exp_{(\sigma'_1,\dots,\sigma'_m) \sim Q} 2^{\sum_{i \in [m]} \cyc(\sigma_i^{-1}\sigma'_i)} \le 2^{\frac{\bbD_\infty(Q)+\cO(mn)}{\log n}}.\]
\end{claim}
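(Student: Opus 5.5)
The plan is to compare $Q$ with the uniform distribution $U$ on $(\cP_n)^m$ through its min-entropy, and then to evaluate a high moment of $2^{\cyc(\cdot)}$ exactly under $U$. Write $f(\sigma'_1,\dots,\sigma'_m)=2^{\sum_{i\in[m]}\cyc(\sigma_i^{-1}\sigma'_i)}$ and $D=\bbD_\infty(Q)$. By definition, $D=m\log(n!)-\bbH_\infty(Q)$, so
\[Q(\sigma')\le 2^{D}\cdot U(\sigma') \qquad \text{for every } \sigma'\in(\cP_n)^m.\]
Consequently, for any nonnegative $g$ we have $\Exp_Q g\le 2^D\,\Exp_U g$. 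Applying this directly to $f$ would cost a full factor $2^D$. The point is instead to apply it to $f^p$ with $p=\log n$, and then take a $p$-th root to divide the exponent by $\log n$.

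\emph{Step 1 (moment/Jensen).} For $p\ge 1$ the map $t\mapsto t^{1/p}$ is concave on $[0,\infty)$. Jensen's inequality, followed by the pointwise domination above, gives
\[\Exp_Q f\le\paren*{\Exp_Q f^p}^{1/p}\le\paren*{2^D\,\Exp_U f^p}^{1/p}.\]

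\emph{Step 2 (factorize under $U$).} Under $U$ the coordinates $\sigma'_i$ are independent and uniform. Moreover, $\tau_i=\sigma_i^{-1}\sigma'_i$ is again uniform on $\cP_n$ because left multiplication is a bijection. Therefore
\[\Exp_U f^p=\paren*{\Exp_{\tau\sim\cP_n} x^{\cyc(\tau)}}^m \qquad \text{with } x=2^p.\]

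\emph{Step 3 (exact cycle generating function).} The standard identity $\sum_{\tau\in\cP_n}x^{\cyc(\tau)}=x(x+1)\cdots(x+n-1)$ (unsigned Stirling numbers of the first kind) gives
\[\Exp_\tau x^{\cyc(\tau)}=\binom{x+n-1}{n}.\]
Taking $p=\log n$, so that $x=n$, yields $\binom{2n-1}{n}\le 4^n$.

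Combining the three steps,
\[\Exp_Q f\le\paren*{2^{D}\cdot 4^{mn}}^{1/\log n}=2^{\frac{\bbD_\infty(Q)+2mn}{\log n}},\]
which is the claimed bound with the $\cO(mn)$ term equal to $2mn$. This requires $p=\log n\ge 1$, i.e.\ $n\ge 2$; the case $n=1$ is trivial, since then $\cyc\equiv 1$ and the bound is absorbed into the $\cO(mn)$ term by interpreting the exponent appropriately or by treating $n=1$ separately.

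The only step with real content is choosing the moment order: the exponent $p$ must be large enough that $2^D$ is divided by $\log n$, yet small enough that $\Exp_U 2^{p\,\cyc}$ stays $2^{\cO(n)}$ per coordinate. The exact rising-factorial formula shows that $x=n$ is precisely the threshold where this still holds, since $\binom{2n-1}{n}\le 4^n$.
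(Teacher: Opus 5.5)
Your proof is correct, and it takes a different route from the paper's proof of \cref{clm:cyc}.

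\emph{The paper's route.} The paper first relabels so that every $\sigma_i$ is the identity. It then splits at the threshold $t=(\bbD_\infty(Q)+Kmn)/\log n$. Tuples with $\sum_i\cyc(\sigma'_i)\le t$ contribute at most $2^t$ in total. Tuples above the threshold are counted directly: writing the cycle notation by inserting elements in increasing order shows there are at most $2^{mn}n^{mn-t}$ of them. Each has $Q$-probability at most $2^{-\bbH_\infty(Q)}$ and value at most $2^{mn}$, so by Stirling's approximation they contribute $o(1)$ once $K$ is large.

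\emph{Your route and the trade-off.} You instead pass to a high moment and change measure via $Q\le 2^{\bbD_\infty(Q)}\cdot U$. After that, the rising-factorial identity evaluates $\Exp_U f^p$ exactly. Your version gets rid of the $\sigma_i$ for free, since the reference measure is uniform. It needs no Stirling estimate and no unspecified constant $K$, and it gives the explicit additive term $2mn$ for every $n\ge 2$. The paper's version is purely elementary counting and shows directly that tuples with many cycles are rare.

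\emph{Connection to the appendix.} Your Jensen-plus-change-of-measure step produces exactly the bound $2^{\bbD_\infty(Q)/q}(\Exp_U f^q)^{1/q}$ that the \holder{} argument in \cref{lem:prob-D-sublinear} produces. There the moment is estimated through $1+y\le e^y$ as $\exp(\cO(2^q\ln n))$. That suffices for $q=\zeta\log n$ with $\zeta<1$, but at $q=\log n$ it would give an additive $\cO(n)$ in the exponent instead of the $\cO(n/\log n)$ required here. Your exact evaluation $\binom{2n-1}{n}\le 4^n$ avoids this $\log n$ loss. So a single moment argument, with $q$ as a parameter, covers both \cref{clm:cyc} and \cref{lem:prob-D-sublinear}.
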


\begin{proof}
Observe that $\sigma'_i$ can always be decomposed as $\sigma'_i=\sigma_i\sigma''_i$ for a unique $\sigma''_i \in \cP_n$.
It then holds that $\sigma_i^{-1}\sigma'_i=\sigma''_i$.
Moreover, $\sigma''_i$ jointly follow the same distribution $Q$, up to renaming induced by $\sigma_i$.
Thus, we can assume without loss of generality that all $\sigma_i$ are simply the identity permutation.

Let $t=(\bbD_\infty(Q)+Kmn)/\log n$, for some sufficiently large constant $K$.
The contribution to the expectation is at most $2^t$, for any $\sigma'_1,\dots,\sigma'_m \in \cP_n$ such that $\sum_{i \in [m]} \cyc(\sigma'_i) \le t$.
In the meantime, we claim that the total number of possible combinations of $\sigma'_1,\dots,\sigma'_m \in \cP_n$ such that $\sum_{i \in [m]} \cyc(\sigma'_i) > t$ is at most $2^{mn}n^{mn-t}$.
Assume the claim for now.
Note that $\sum_{i \in [m]} \cyc(\sigma'_i) \le mn$ always holds by definition.
As a result, by definition of min-entropy, the total contribution in this case is at most
\[2^{mn} \cdot 2^{mn}n^{mn-t} \cdot 2^{-\bbH_\infty(Q)} = 2^{2mn + mn\log n - t\log n - m\log(n!) + \bbD_\infty(Q)} = o(1),\]
where we use Stirling's approximation.
The proof concludes once we show the claim.

To see the claim, consider constructing the unique cycle decomposition representation of $\sigma'_i$ by writing its elements in increasing order.
If the current element $j \in [n]$ will be the smallest one in the cycle containing it, then it can only be appended to the end of the current partial representation, as a $1$-cycle for now.
Otherwise, it can be inserted into existing cycles by being appended to one of the $j-1$ previous elements.
Overall, there are at most $mn-\sum_{i \in [m]} \cyc(\sigma'_i) < mn-t$ such elements in total across all $\sigma'_i$, each with at most $n$ options, whereas all other elements have only a single option.
This concludes the proof as there are no more than $2^{mn}$ ways to pick such elements.
\end{proof}

Now, we are ready to prove \cref{lem:prob-D}.

\begin{proof}[Proof of \cref{lem:prob-D}]
By definition of collision entropy,
\begin{align*}
\Exp_{w,z \sim \bits^n} 2^{\bbD_2(Q^{(w,z)})}
& = 2^{2n} \cdot \Exp_{w,z \sim \bits^n} 2^{-\bbH_2(Q^{(w,z)})} \\
& = 2^{2n} \cdot \Exp_{w,z \sim \bits^n} \Exp_{(u,v),(u',v') \sim Q^{(w,z)}} \1\paren*{(u,v)=(u',v')} \\
& = 2^{2n} \cdot \Exp_{(a,b),(a',b') \sim Q} \Exp_{w,z \sim \bits^n} \1\paren*{(aw,bz)=(a'w,b'z)}.
\end{align*}
Observe that the condition $aw=a'w$ is equivalent to $a^{-1}a'w=w$.
Since $a^{-1}a'$ is also a permutation, this happens if and only if $w$ is constant when restricted to every cycle in the decomposition of $a^{-1}a'$.
For uniformly random $w$, such an event occurs with probability exactly $2^{\cyc(a^{-1}a')-n}$.
A similar argument applies to $bz=b'z$.
As the two parts are independent for fixed $(a,b),(a',b')$,  we can get \[\Exp_{w,z \sim \bits^n} \1\paren*{(aw,bz)=(a'w,b'z)} = 2^{\cyc(a^{-1}a')+\cyc(b^{-1}b')-2n}.\]
Continuing, we have that
\begin{align*}
\Exp_{w,z \sim \bits^n} 2^{\bbD_2(Q^{(w,z)})}
& = \Exp_{(a,b),(a',b') \sim Q} 2^{\cyc(a^{-1}a')+\cyc(b^{-1}b')} \\
& \le \max_{a,b \in \cP_n} \Exp_{(a',b') \sim Q} 2^{\cyc(a^{-1}a')+\cyc(b^{-1}b')}.
\end{align*}
The lemma then follows from \cref{clm:cyc} for $m=2$.
\end{proof}

In reality, we can only expect $(W_{i-1},Z_{i-1})$ to be close to random.
To this end, for $i \in [r]$, let $M_i$ denote the $i$-th message.
Define\footnote{Assume each message is of fixed length. The full proof in \cref{sec:prob-full} does not assume this. Specifically, any prefix-free message of varying length is allowed so long as the total message length is bounded.} $L_i = \sum_{j \in [i]} \card{M_j}$ and $\Phi_i = \sum_{j \in [i]} (L_j+Kn)/\log n$, for some sufficiently large constant $K$.
Also let $\Gamma_i=(X,Y,A_{<i},B_{<i},M_{\le i})$ denote the partial transcript up to round $i$, together with the input variables that we are going to additionally reveal.
We emphasize that $\Gamma_i$ is only for analysis purposes and is not revealed to the parties during execution of the protocol.
Also note that it includes the $i$-th message, but not the $i$-th pair of permutations.
In particular, $(w_{i-1},z_{i-1})$ is fixed once $\Gamma_i=\gamma_i$ is revealed, whereas $(W_i,Z_i)$ is not.
For convenience, define\footnote{$\Gamma_0$ excludes $(X,Y)$ due to the definition of $\cE_1,\cE'_1$ below. This will be useful in later proofs.} $L_0=\Phi_0=0$ and $\Gamma_0=\varnothing$.

Given\footnote{Note that $\Gamma_{i-1}$ excludes $A_{i-1},B_{i-1}$ (or $X,Y$ for $i=1$), so it does not fix $(W_{i-1},Z_{i-1})$.} $(\Gamma_{i-1},W_{i-1},Z_{i-1},M_i)=(\gamma_{i-1},w_{i-1},z_{i-1},m_i)$, first define $\cE_i$ to be the event that $\bbD_2(W_i,Z_i \mid \gamma_{i-1},w_{i-1},z_{i-1},m_i) > \Phi_i$.
For convenience, define $\cE_0=\varnothing$ to be a never happening event.
We want to show that $\cE_i$ rarely happens.
That is, $(W_i,Z_i)=(A_i w_{i-1}, B_i z_{i-1})$ should still be sufficiently random.

In order to prove this, also define $\cE'_i$ to be the event that $\bbD_\infty(A_i,B_i \mid \gamma_{i-1},m_i) > L_i+n$.
We first show in \cref{lem:prob-Ep} that with extremely high probability, a typical partial transcript of total length $L_i$ should not cause $\cE'_i$ to happen.

\begin{lemma}
\label{lem:prob-Ep}
For $i \in [r]$, it holds that \[\Pr(\cE'_i) \le 2^{-n}.\]
\end{lemma}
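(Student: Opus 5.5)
We prove \cref{lem:prob-Ep} with a standard counting (atypical-message) argument based on min-entropy. The key observation is that the protocol is deterministic and the input is uniform, so $(A_i,B_i)$ starts out uniform on $\cP_n \times \cP_n$ and independent of everything in $\Gamma_{i-1}$ other than the messages. Concretely, conditioned on $\gamma_{i-1}=(x,y,a_{<i-1},b_{<i-1},m_{\le i-1})$ and $m_i$, the set of inputs consistent with this partial view is a rectangle-like set. Within it, every consistent value of the remaining variables has the same prior probability. Hence the conditional probability of a particular $(a_i,b_i)$ is proportional to the number of completions $(a_{i-1},b_{i-1},a_{>i},b_{>i})$ consistent with $(\gamma_{i-1},a_i,b_i,m_i)$.

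\textbf{Step 1: Reduce to a deviation event.} We work with $\Gamma'=(\Gamma_{i-1},M_i)$ and the random variable $(A_i,B_i)$, whose prior conditioned on $\Gamma_{i-1}$ alone (before the messages) is uniform. Indeed, $\Gamma_{i-1}$ contains $(X,Y)$, $A_{<i-1},B_{<i-1}$ and messages; the permutations $A_{i-1},B_{i-1},A_i,B_i,\dots$ are untouched by it except through the messages. More precisely, $\Pr(a_i,b_i \mid \gamma_{i-1},m_i)=\Pr(a_i,b_i)\cdot \Pr(\gamma_{i-1},m_i\mid a_i,b_i)/\Pr(\gamma_{i-1},m_i)$, and the non-message parts of $\gamma_{i-1}$ are independent of $(a_i,b_i)$. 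Bounding $\Pr(\gamma_{i-1},m_i\mid a_i,b_i)\le \Pr(x,y,a_{<i-1},b_{<i-1})$ (dropping the message constraint) gives
\[
2^{\bbD_\infty(A_i,B_i\mid \gamma_{i-1},m_i)} \le \frac{1}{\Pr(M_{\le i}=m_{\le i}\mid x,y,a_{<i-1},b_{<i-1})}.
\]
So $\cE'_i$ implies that the message string $m_{\le i}$ has conditional probability below $2^{-(L_i+n)}$ given the revealed non-message inputs.

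\textbf{Step 2: Union bound over messages.} For each fixed value of the revealed inputs, the messages $m_{\le i}$ have total length $L_i$ (fixed-length messages, as assumed in this subsection), so there are at most $2^{L_i}$ possible strings. The total probability mass of strings each having probability below $2^{-(L_i+n)}$ is therefore at most $2^{L_i}\cdot 2^{-(L_i+n)}=2^{-n}$. Averaging over the revealed inputs gives $\Pr(\cE'_i)\le 2^{-n}$.

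The main obstacle is Step 1: justifying carefully that conditioning on $\gamma_{i-1}$ and $m_i$ affects $(A_i,B_i)$ only through the likelihood of the messages. This requires Bayes' rule together with independence of the uniform inputs, and ensuring that the revealed permutations $A_{<i-1},B_{<i-1}$ (rather than $A_{<i}$) do not include $(A_i,B_i)$. If messages had variable length, Step 2 would be replaced by Kraft's inequality for prefix-free codes.
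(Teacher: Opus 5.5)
Your proof is correct. It shares the paper's skeleton: for each value of the revealed inputs there are at most $2^{L_i}$ transcripts. Atypical transcripts carry total mass at most $2^{L_i}\cdot 2^{-(L_i+n)}=2^{-n}$. On typical ones, the point mass of $(A_i,B_i)$ exceeds $(n!)^{-2}$ by a factor of at most $2^{L_i+n}$.

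The difference is in how you derive that inflation bound.
\begin{itemize}
\item The paper, say for odd $i$, uses the rectangle property: Alice's remaining input $\{(A_{i'},B_{i'})\}_{\text{odd } i'\ge i}$ is uniform on its conditional support. It calls $(\gamma_{i-1},m_i)$ typical when this support has size at least $(n!)^{2t}2^{-L_i-n}$. It then bounds $\Pr(a_i,b_i\mid\gamma_{i-1},m_i)$ by the number $(n!)^{2(t-1)}$ of compatible completions divided by the support size.
\item You instead apply Bayes' rule, using only that the non-message part $\rho$ of $\gamma_{i-1}$ is independent of $(A_i,B_i)$. This gives $\Pr(a_i,b_i\mid\gamma_{i-1},m_i)\le \bigl((n!)^2\Pr(m_{\le i}\mid\rho)\bigr)^{-1}$. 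You call a transcript typical when $\Pr(m_{\le i}\mid\rho)\ge 2^{-(L_i+n)}$.
\end{itemize}

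Your route is more elementary and more general. It needs neither the rectangle property nor uniformity on Alice's side, only that independence and the small number of possible message strings. The mass bound on atypical transcripts then becomes a one-line union bound, with Kraft's inequality covering variable-length prefix-free messages, as you note. The paper's counting charges only the shrinkage of Alice's support. Its pointwise bound is therefore sharper by the relative shrinkage of Bob's side, but this makes no difference for the lemma.

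One small point: for $i=1$ the paper sets $\Gamma_0=\varnothing$, so $\rho$ is empty rather than $(x,y)$. Your Bayes step goes through verbatim in that case.
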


\begin{proof}
We prove the lemma in the case of odd $i$.
The case of even $i$ is similar.

At the beginning of the protocol, all permutations are independent and uniformly random.
In particular, revealing $A_{<(i-1)},B_{<(i-1)}$ (and $X,Y$ for $i>1$) does not change the fact that Alice's remaining input permutations, namely $A_{i'},B_{i'}$ for odd $i' \ge i$, are still independent and uniformly random.
Let $t=\ceil{r/2}-\ceil{(i-1)/2}$ be the number of remaining pairs of input permutations for Alice.
Because of the rectangle property of deterministic communication protocols, each message $M_j$ for $j \in [i]$ either belongs to Alice, subdividing Alice's possible input into no more than $2^{\card{M_j}}$ parts, or belongs to Bob, keeping Alice's possible input the same.
Overall, the average support size of $\set{(A_{i'},B_{i'})}_{\text{odd } i' \ge i}$ after conditioning is at least $S=(n!)^{2t}2^{-L_i}$.
Moreover, the support size is smaller than $S \cdot 2^{-n}$, with probability at most $2^{-n}$ over $(\gamma_{i-1},m_i) \sim (\Gamma_{i-1},M_i)$.
To see this, the total support size of $\set{(A_{i'},B_{i'})}_{\text{odd } i' \ge i}$ for all such $(\gamma_{i-1},m_i)$ is at most \[S \cdot 2^{-n} \cdot 2^{L_i} = (n!)^{2t} \cdot 2^{-n}.\]
By the rectangle property, this leads to the $2^{-n}$ probability bound because Alice always holds a uniform distribution on the remaining support of $\set{(A_{i'},B_{i'})}_{\text{odd } i' \ge i}$.

Fix any $(\Gamma_{i-1},M_i)=(\gamma_{i-1},m_i)$ such that \[\card{\supp\paren*{\set{(A_{i'},B_{i'})}_{\text{odd } i' \ge i} \mid \gamma_{i-1},m_i}} \ge (n!)^{2t}2^{-L_i-n}.\]
For any fixed $(A_i,B_i)=(a_i,b_i)$, the total number of compatible $\set{(A_{i'},B_{i'})}_{\text{odd } i' \ge i}$ is at most $(n!)^{2(t-1)}$.
Due to uniformity, we can upper bound its marginal probability as follows: \[\Pr\paren*{(A_i,B_i)=(a_i,b_i) \mid \gamma_{i-1},m_i} \le \frac{(n!)^{2(t-1)}}{(n!)^{2t}2^{-L_i-n}} = \frac{2^{L_i+n}}{(n!)^2}.\]
Consequently, by definition of min-entropy, \[\bbD_\infty(A_i,B_i \mid \gamma_{i-1},m_i) = 2\log(n!)-\bbH_\infty(A_i,B_i \mid \gamma_{i-1},m_i) \le 2\log(n!) + \log\frac{2^{L_i+n}}{(n!)^2} = L_i+n,\]
or equivalently, $\cE'_i$ does not happen.
This concludes the proof as such $(\gamma_{i-1},m_i)$ occurs with probability at least $1-2^{-n}$, as shown above.
\end{proof}

Next, we upper bound the probability of event $\cE_i$ in \cref{lem:prob-E}.

\begin{lemma}
\label{lem:prob-E}
For $i \in [r]$, it holds that \[\Pr(\cE_i \mid \oE_{i-1},\oE'_i) \le 2^{-\frac{n}{\log n}},\] where $\oE_{i-1},\oE'_i$ are the complement events of $\cE_{i-1,}\cE'_i$, respectively.
\end{lemma}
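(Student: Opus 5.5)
We would prove the bound pointwise: for every fixed conditioning $(\gamma_{i-1},m_i)$ under which neither $\cE_{i-1}$ nor $\cE'_i$ happens, show that $\cE_i$ has conditional probability at most $2^{-n/\log n}$ over $(w_{i-1},z_{i-1})$, and then average. This is legitimate for the following reason. The event $\cE'_i$ is determined by $(\gamma_{i-1},m_i)$. The event $\cE_{i-1}$ is determined by $(\gamma_{i-2},w_{i-2},z_{i-2},m_{i-1})$, which is a function of $\gamma_{i-1}$, because $\gamma_{i-1}$ contains $X,Y,A_{<i-1},B_{<i-1}$ and hence $(w_{i-2},z_{i-2})$. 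For $i=1$ we use that $\cE_0$ never happens.

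\textbf{Step 1: independence and the form of $(W_i,Z_i)$.} Fix such $(\gamma_{i-1},m_i)$. The pair $(A_{i-1},B_{i-1})$ is held by one party and $(A_i,B_i)$ by the other; for $i=1$, the pair $(X,Y)$ is Bob's and $(A_1,B_1)$ is Alice's. By the rectangle property, $(W_{i-1},Z_{i-1})$ is therefore independent of $(A_i,B_i)$ given $(\gamma_{i-1},m_i)$. Let $Q$ be the law of $(A_i,B_i)$ given $(\gamma_{i-1},m_i)$. Then, conditioned additionally on $(w_{i-1},z_{i-1})$, the pair $(W_i,Z_i)$ is distributed exactly as $Q^{(w_{i-1},z_{i-1})}$ in the notation of \cref{lem:prob-D}. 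Since $\cE'_i$ fails, we have $\bbD_\infty(Q)\le L_i+n$, and \cref{lem:prob-D} gives
\[\Exp_{w,z\sim\bits^n}2^{\bbD_2(Q^{(w,z)})}\le 2^{\frac{L_i+\cO(n)}{\log n}}.\]
Markov's inequality then bounds the uniform measure of the bad set $S=\set{(w,z):\bbD_2(Q^{(w,z)})>\Phi_i}$ by $2^{(L_i+Cn)/\log n-\Phi_i}$ for some absolute constant $C$. Using $\Phi_i=\Phi_{i-1}+(L_i+Kn)/\log n$, this equals $2^{-\Phi_{i-1}-(K-C)n/\log n}$.

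\textbf{Step 2: transfer from uniform to the actual distribution $P$ of $(W_{i-1},Z_{i-1})$.} We must check that $P$, the law of $(W_{i-1},Z_{i-1})$ given $(\gamma_{i-1},m_i)$, coincides with the law given $(\gamma_{i-2},w_{i-2},z_{i-2},m_{i-1})$, which is the conditioning in the definition of $\cE_{i-1}$. Conditioning on $m_i$ does not matter: $m_i$ is sent by the holder of $A_i$, so it is independent of the other party's input given the earlier transcript. Conditioning on $A_{i-2},B_{i-2}$ does not matter either: $W_{i-1}=A_{i-1}w_{i-2}$ depends only on $A_{i-1}$, which belongs to the other party. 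For $i=1$, $P$ is uniform and $\Phi_0=0$.

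Given this, $\oE_{i-1}$ yields $\sum_{w,z}P(w,z)^2\le 2^{\Phi_{i-1}-2n}$. By Cauchy--Schwarz,
\[P(S)\le\sqrt{\card{S}\cdot\textstyle\sum P^2}=\sqrt{2^{\Phi_{i-1}}\cdot\Pr_{\mathrm{unif}}(S)}\le 2^{-\frac{(K-C)n}{2\log n}}\le 2^{-\frac{n}{\log n}}\]
once $K\ge C+2$. Since $\cE_i$ is exactly the event $(w_{i-1},z_{i-1})\in S$, averaging over the good $(\gamma_{i-1},m_i)$ concludes the proof.

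\textbf{Main obstacle.} The delicate point is Step 2. We must justify carefully that conditioning on the extra information in $(\gamma_{i-1},m_i)$ does not alter the law of $(W_{i-1},Z_{i-1})$ relative to the conditioning used in $\cE_{i-1}$. Everything else is the Markov-plus-Cauchy--Schwarz mechanism, which explains why $\Phi_i$ is defined to accumulate additively: the square root halves the $Kn/\log n$ slack but exactly cancels the $\Phi_{i-1}$ term.
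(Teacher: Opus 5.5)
Your proposal is correct and is essentially the paper's proof. Markov's inequality on the bound of \cref{lem:prob-D} (valid under $\oE'_i$) controls the uniform measure of the bad set of $(w,z)$. Cauchy--Schwarz with the collision-entropy guarantee from $\oE_{i-1}$, transferred to the conditioning $(\gamma_{i-1},m_i)$ via the rectangle property, then gives the $2^{-n/\log n}$ bound, and the $\Phi_{i-1}$ terms cancel exactly as you describe. The only cosmetic difference is that the paper treats $i=1$ separately by a direct Markov argument (no square root is needed, since $(X,Y)$ is uniform given $m_1$), whereas you fold that case into the general one with $\Phi_0=0$.
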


\begin{proof}
First consider $i=1$.
$\oE_0$ trivially happens by definition.
The rectangle property of deterministic communication protocols implies that $(W_0,Z_0)=(X,Y)$ is uniformly random even conditioned on any $M_1=m_1$.
This is because $M_1$ depends only on Alice's input while $X,Y$ are owned by Bob.
Therefore, for any $m_1$ satisfying $\oE'_1$ (and $\gamma_0=\varnothing$), \cref{lem:prob-D} implies that \[\Exp_{(w_0,z_0) \mid m_1} 2^{\bbD_2(W_1,Z_1 \mid w_0,z_0,m_1)} = \Exp_{w,z \sim \bits^n} 2^{\bbD_2((A_1,B_1)^{(w,z)} \mid m_1)} \le 2^{\frac{L_1+Kn/2}{\log n}},\] for sufficiently large constant $K$.
By Markov's inequality, we can get \[\Pr_{(w_0,z_0) \mid m_1}\paren*{\cE_1} \le 2^{\frac{L_1+Kn/2}{\log n}} \cdot 2^{-\Phi_1} = 2^{-\frac{Kn/2}{\log n}} \le 2^{-\frac{n}{\log n}}.\]
Averaging over $m_1$ satisfying $\oE'_1$ yields the desired lemma statement.

For $i>1$, $\oE'_i$ implies that \[\Exp_{w,z \sim \bits^n} 2^{\bbD_2((A_i,B_i)^{(w,z)} \mid \gamma_{i-1},m_i)} \le 2^{\frac{L_i+Kn/2}{\log n}},\] for sufficiently large constant $K$.
By Markov's inequality, the total number of possible combinations of $w,z \in \bits^n$ such that $\bbD_2((A_i,B_i)^{(w,z)} \mid \gamma_{i-1},m_i) > \Phi_i$ is at most \[2^{2n} \cdot 2^{\frac{L_i+Kn/2}{\log n}} \cdot 2^{-\Phi_i} = 2^{2n-\Phi_{i-1}-\frac{Kn/2}{\log n}}.\]
At the same time, $\oE_{i-1}$ means that \[\bbD_2(W_{i-1},Z_{i-1} \mid \gamma_{i-2},w_{i-2},z_{i-2},m_{i-1}) \le \Phi_{i-1},\] or equivalently, \[\bbH_2(W_{i-1},Z_{i-1} \mid \gamma_{i-2},w_{i-2},z_{i-2},m_{i-1}) \ge 2n-\Phi_{i-1}.\]
In fact, it further implies that \[\bbH_2(W_{i-1},Z_{i-1} \mid \gamma_{i-1},m_i) \ge 2n-\Phi_{i-1}.\]
To see this, for even $i$, after revealing $(\Gamma_{i-2},W_{i-2},Z_{i-2},M_{i-1})$, both $A_{i-2},B_{i-2}$ (or $X,Y$ for $i=2$) and $M_i$ are functions of Bob's input only, while $(W_{i-1},Z_{i-1})$ depends only on Alice's input.
As a result, additional conditioning does not change its distribution, due to the rectangle property of deterministic communication protocols.
A similar argument applies to odd $i$.
Furthermore, the distribution of $(A_i,B_i)$ is conditionally independent of $(W_{i-1},Z_{i-1})$, again because of the rectangle property.
Altogether, for any $(\gamma_{i-1},m_i)$ satisfying  both $\oE_{i-1}$ and $\oE'_i$, we have that
\begin{align*}
& \Pr_{(w_{i-1},z_{i-1}) \mid (\gamma_{i-1},m_i)}(\cE_i) \\
& \hspace{1cm} = \Exp_{(w_{i-1},z_{i-1}) \mid (\gamma_{i-1},m_i)} \1\paren*{\bbD_2\paren*{(A_i,B_i)^{(w_{i-1},z_{i-1})} \mid \gamma_{i-1},m_i}>\Phi_i} \\
& \hspace{1cm} = \sum_{w,z \in \bits^n} \Pr\paren{(W_{i-1},Z_{i-1})=(w,z) \mid \gamma_{i-1},m_i} \cdot \1\paren*{\bbD_2\paren*{(A_i,B_i)^{(w,z)} \mid \gamma_{i-1},m_i}>\Phi_i} \\
& \hspace{1cm} \le \paren*{2^{-\bbH_2(W_{i-1},Z_{i-1} \mid \gamma_{i-1},m_i)} \cdot 2^{2n-\Phi_{i-1}-\frac{Kn/2}{\log n}}}^{\frac{1}{2}} \\
& \hspace{1cm} \le 2^{-\frac{Kn/4}{\log n}} \\
& \hspace{1cm} \le 2^{-\frac{n}{\log n}},
\end{align*}
by Cauchy-Schwarz inequality and definition of collision entropy.
The lemma then follows by averaging over $(\gamma_{i-1},m_i)$ satisfying both $\oE_{i-1}$ and $\oE'_i$.
\end{proof}

In the last round of the protocol, after revealing $(\Gamma_{r-1},M_r)$, which contains all messages and input permutations except $(A_{r-1},B_{r-1})$ and $(A_r,B_r)$, we can expect $(W_r,Z_r)$ to be still close to random.
In fact, the sender of $M_r$ owns $(A_r,B_r)$, uniformly distributed on its remaining support.
Meanwhile, the receiver of $M_r$ owns $(A_{r-1},B_{r-1})$, also uniformly distributed on its remaining support.
More importantly, $(A_{r-1},B_{r-1})$ and $(A_r,B_r)$ are conditionally independent, due to the rectangle property of deterministic communication protocols.
Since the receiver has to output based solely on $(A_{r-1},B_{r-1})$, it cannot predict $\ang{W_r,Z_r}$ well.

This is formalized in \cref{lem:prob-O}, where $O$ denotes the output of the protocol.

\begin{lemma}
\label{lem:prob-O}
It holds that \[\Pr(O=\ang{W_r,Z_r} \mid \oE_{r-1},\oE'_r) \le \frac{1}{2} + 2^{-\Omega(n)}.\]
\end{lemma}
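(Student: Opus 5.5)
Fix any $(\gamma_{r-1},m_r)$ consistent with $\oE_{r-1}$ and $\oE'_r$; the lemma follows by averaging over such partial transcripts. By the rectangle property, after revealing $(\Gamma_{r-1},M_r)$, the pair $(A_{r-1},B_{r-1})$ (held by the receiver of $M_r$) and $(A_r,B_r)$ (held by the sender) are conditionally independent. The output $O$ is a function of the receiver's view, i.e.\ of $(A_{r-1},B_{r-1})$ together with the revealed transcript. Also $(W_{r-1},Z_{r-1})$ is a function of $(A_{r-1},B_{r-1})$ and the revealed $(w_{r-2},z_{r-2})$. So conditioned on $(\gamma_{r-1},m_r)$ and on $(A_{r-1},B_{r-1})$, both $O$ and $(w,z)=(w_{r-1},z_{r-1})$ are fixed. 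Meanwhile $(W_r,Z_r)=(A_rw,B_rz)$ with $(A_r,B_r)$ drawn from its conditional distribution $Q$, and this distribution does not depend on $(A_{r-1},B_{r-1})$. Hence
\[\Pr(O=\ang{W_r,Z_r}\mid \gamma_{r-1},m_r) \le \frac12+\frac12\Exp_{(w,z)}\abs*{\Exp_{(a,b)\sim Q}(-1)^{\ang{aw,bz}}},\]
where $(w,z)$ follows the conditional law of $(W_{r-1},Z_{r-1})$ given $(\gamma_{r-1},m_r)$.

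Next I bound the bias of the inner product under $Q^{(w,z)}$ by its collision entropy. The function $(u,v)\mapsto(-1)^{\ang{u,v}}$ on $\bits^{2n}$ has $\ell_2$ norm $2^{n}$ and is orthogonal to constants up to $2^{-n}$. By Cauchy--Schwarz,
\[\abs*{\Exp_{(u,v)\sim Q^{(w,z)}}(-1)^{\ang{u,v}}}\le 2^{2n}\,\|Q^{(w,z)}\|_2\cdot 2^{-n}=2^{\bbD_2(Q^{(w,z)})/2}\cdot 2^{-n/2}\cdot\text{(const)}.\]
A cleaner way to say this: the bias is at most $\sqrt{2^{2n}\|Q^{(w,z)}\|_2^2}\cdot 2^{-n/2}$, which is the standard inner-product extractor bound.

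Then I average over $(w,z)$ and use Cauchy--Schwarz a second time, exactly as in the proof of \cref{lem:prob-E}. This gives
\[\Exp_{(w,z)}2^{\bbD_2(Q^{(w,z)})/2}\le\Bigl(2^{-\bbH_2(W_{r-1},Z_{r-1}\mid\gamma_{r-1},m_r)}\cdot 2^{2n}\cdot\Exp_{w,z\sim\bits^n}2^{\bbD_2(Q^{(w,z)})}\Bigr)^{1/2}.\]
Here $\oE_{r-1}$ gives $\bbH_2(W_{r-1},Z_{r-1}\mid\gamma_{r-1},m_r)\ge 2n-\Phi_{r-1}$; the same argument as in \cref{lem:prob-E} shows that conditioning does not change this distribution. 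Also $\oE'_r$ combined with \cref{lem:prob-D} gives $\Exp_{w,z}2^{\bbD_2(Q^{(w,z)})}\le 2^{(L_r+\cO(n))/\log n}$. The bias is therefore at most $2^{-n/2}\cdot 2^{\cO(\Phi_r)}$. Since $L=o(n\log n/r)$ and $r=o(\log n)$, we have $\Phi_r\le r(L+Kn)/\log n=o(n)$, so the bias is $2^{-\Omega(n)}$.

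The main obstacle is justifying the conditional independence and fixedness claims. Specifically, I must check that $(W_{r-1},Z_{r-1})$ and $O$ are determined by the receiver's side given the revealed information, and that the law of $(A_r,B_r)$ given $(\gamma_{r-1},m_r)$ is unaffected by further conditioning on the receiver's input. A secondary issue is getting the inner-product bias bound to lose only a factor $2^{-n/2}$ against $2^{\bbD_2/2}$. If it instead loses a full $2^{\bbD_2}$, the $\Phi_r=o(n)$ slack still suffices.
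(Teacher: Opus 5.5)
Your overall structure matches the paper: you condition on $(\gamma_{r-1},m_r)$ and the receiver's input, reduce the advantage to $\Exp_{(w,z)}\abs{g(w,z)}$ with $g(w,z)=\Exp_{(a,b)\sim Q}(-1)^{\ang{aw,bz}}$, and finish with Cauchy--Schwarz against $2^{-\bbH_2(W_{r-1},Z_{r-1}\mid\gamma_{r-1},m_r)}\le 2^{-2n+\Phi_{r-1}}$. The conditional-independence facts you worry about follow from the rectangle property exactly as in \cref{lem:prob-E}.

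The genuine gap is the pointwise step $\abs{g(w,z)}\le 2^{-n/2}\cdot 2^{\bbD_2(Q^{(w,z)})/2}$. The inner-product extractor bound needs its two arguments to be independent. Here $A_rw$ and $B_rz$ are both functions of $(A_r,B_r)$, which the same party holds and which can be arbitrarily correlated under $Q$. For a general joint distribution on $\bits^{2n}$, Cauchy--Schwarz gives only $2^n\|Q^{(w,z)}\|_2=2^{\bbD_2(Q^{(w,z)})/2}\ge 1$; the extra $2^{-n/2}$ in your computation is an arithmetic slip. In fact no such bound can hold. Take $B_r=A_r\pi$ for a fixed $\pi$ with $A_r$ uniform. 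Then $\ang{A_rw,B_rz}=\ang{w,\pi z}$ is deterministic, so $\abs{g(w,z)}=1$ for every $(w,z)$. Yet $\bbD_2(Q^{(w,z)})=\cO(\log n)$ for typical $(w,z)$, and $\Exp_{w,z\sim\bits^n}2^{\bbD_2(Q^{(w,z)})}=\binom{n+3}{3}$. So \cref{lem:prob-D} controls the wrong quantity here. Your fallback of losing a full $2^{\bbD_2}$ yields a trivial bound that no $\Phi_r=o(n)$ slack can rescue.

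The missing idea is to bound $\sum_{w,z}g(w,z)^2$ directly rather than $g$ pointwise. Expand the square and write $\ang{aw,bz}=\ang{w,a^{-1}bz}$. The sum over $w$ is then a character sum that vanishes unless $(c-c')z=0$, so
\[\sum_{w,z}g(w,z)^2=2^n\cdot\Exp_{c,c'\sim Q'}2^{\cyc(c^{-1}c')},\]
where $Q'$ is the law of $A_r^{-1}B_r$. The factor $2^n$ (instead of $2^{2n}$) comes from averaging over $w$, which no per-$(w,z)$ entropy bound can capture. Since at most $n!$ pairs $(a,b)$ map to each $c$, $\bbD_\infty(Q')\le\bbD_\infty(Q)\le L_r+n$ under $\oE'_r$. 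Then \cref{clm:cyc} with $m=1$ gives $\sum_{w,z}g^2\le 2^{n+(L_r+\cO(n))/\log n}$, and combining with the bound on the $(W_{r-1},Z_{r-1})$ side yields advantage at most $(2^{-n+\Phi_r})^{1/2}$. In the example above, $\Exp 2^{\cyc(c^{-1}c')}=2^n$, correctly detecting the full bias. That $Q$ violates $\oE'_r$, which is precisely what makes the lemma true.
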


\begin{proof}
First note that it suffices to upper bound the advantage \[\Pr(O=\ang{W_r,Z_r} \mid \oE_{r-1},\oE'_r)-\Pr(O \ne \ang{W_r,Z_r} \mid \oE_{r-1},\oE'_r)\] by $2^{-\Omega(n)}$.
Fix any $(\Gamma_{r-1},M_r)=(\gamma_{r-1},m_r)$.
We know that\footnote{More precisely, $(X,Y)$ should be used in place of $(A_0,B_0)$ throughout the proof if $r=1$.} $(A_{r-1},B_{r-1})$ and $(A_r,B_r)$ are independent due to the rectangle property, as discussed above.
Since the receiver of $M_r=m_r$ has to output $O$ based only on its own input and the full transcript, we also know that $O=O(A_{r-1},B_{r-1})$ is now a function of $(A_{r-1},B_{r-1})$ only and is independent of $(A_r,B_r)$, while $(W_r,Z_r)$ is fully determined by $(A_r,B_r)$ and is independent of $(A_{r-1},B_{r-1})$ conditioned on $(W_{r-1},Z_{r-1})$.
Thus, we can get
\begin{align}
& \Pr(O=\ang{W_r,Z_r} \mid \gamma_{r-1},m_r)-\Pr(O \ne \ang{W_r,Z_r} \mid \gamma_{r-1},m_r) \nonumber\\
& \hspace{1cm} = \Exp_{(w_r,z_r,o) \mid (\gamma_{r-1},m_r)} (-1)^{o+\ang{w_r,z_r}} \nonumber\\
& \hspace{1cm} = \sum_{w,z \in \bits^n} f(w,z) g(w,z) \nonumber\\
& \hspace{1cm} \le \paren*{\sum_{w,z \in \bits^n} f(w,z)^2}^{\frac{1}{2}} \paren*{\sum_{w,z \in \bits^n} g(w,z)^2}^{\frac{1}{2}}, \label{eqn:prob-O}
\end{align}
by Cauchy-Schwarz inequality, where \[f(w,z) = \Exp_{(a_{r-1},b_{r-1}) \mid \gamma_{r-1},m_r} \1\paren*{(W_{r-1},Z_{r-1})=(w,z)} \cdot (-1)^{O(a_{r-1},b_{r-1})},\] and \[g(w,z) = \Exp_{(w_r,z_r) \sim (A_r,B_r)^{(w,z)} \mid (\gamma_{r-1},m_r)} (-1)^{\ang{w_r,z_r}}.\]

Observe that\footnote{For convenience, we use the convention $m_0=\varnothing$. Similarly for any variable with negative subscript.}
\begin{align*}
\abs{f(w,z)}
& \le \Pr\paren*{(W_{r-1},Z_{r-1})=(w,z) \mid \gamma_{r-1},m_r} \\
& = \Pr\paren*{(W_{r-1},Z_{r-1})=(w,z) \mid \gamma_{r-2},w_{r-2},z_{r-2},m_{r-1}},
\end{align*}
because $(W_{r-1},Z_{r-1})$ is conditionally independent of $(A_{r-2},B_{r-2})$ and $M_r$, again by the rectangle property.
Consequently, by definition of collision entropy, we have that
\begin{align}
\sum_{w,z \in \bits^n} f(w,z)^2
& \le 2^{-\bbH_2(W_{r-1},Z_{r-1} \mid \gamma_{r-2},w_{r-2},z_{r-2},m_{r-1})} \nonumber\\
& = 2^{-2n+\bbD_2(W_{r-1},Z_{r-1} \mid \gamma_{r-2},w_{r-2},z_{r-2},m_{r-1})} \nonumber\\
& \le 2^{-2n+\Phi_{r-1}}, \label{eqn:prob-O-f}
\end{align}
for any $(\gamma_{r-2},w_{r-2},z_{r-2},m_{r-1})$ satisfying $\oE_{r-1}$.

Furthermore, let $Q$ denote the distribution of $(A_r,B_r)$ conditioned on $(\Gamma_{r-1},M_r)=(\gamma_{r-1},m_r)$, for notational convenience.
Define $Q'$ to be the distribution of $a^{-1}b$ for $(a,b) \sim Q$.
By definition of min-entropy, we have that \[2^{\bbD_\infty(Q')} = n! \cdot 2^{-\bbH_\infty(Q')} \le n! \cdot n! \cdot 2^{-\bbH_\infty(Q)} = 2^{\bbD_\infty(Q)},\] because at most $n!$ pairs of $(a,b)$ can map to the same $a^{-1}b$.
Equivalently, it must be that $\bbD_\infty(Q') \le \bbD_\infty(Q)$.
Then, we can get
\begin{align}
\sum_{w,z \in \bits^n} g(w,z)^2
& = \sum_{w,z \in \bits^n} \paren*{\Exp_{(a,b) \sim Q} (-1)^{\ang{aw,bz}}} \paren*{\Exp_{(a',b') \sim Q} (-1)^{\ang{a'w,b'z}}} \nonumber\\
& = \sum_{w,z \in \bits^n} \Exp_{(a,b),(a',b') \sim Q} (-1)^{\ang{aw,bz}+\ang{a'w,b'z}} \nonumber\\
& = \sum_{w,z \in \bits^n} \Exp_{(a,b),(a',b') \sim Q} (-1)^{\ang{w,a^{-1}bz}-\ang{w,(a')^{-1}b'z}} \nonumber\\
& = \Exp_{c,c' \sim Q'} \sum_{w,z \in \bits^n} (-1)^{\ang{w,(c-c')z}} \nonumber\\
& = \Exp_{c,c' \sim Q'} \sum_{z \in \bits^n} 2^n \cdot \1\paren*{(c-c')z=\bf0} \nonumber\\
& = 2^n \cdot \Exp_{c,c' \sim Q'} \sum_{z \in \bits^n} \1\paren*{c^{-1}c'z=z} \nonumber\\
& = 2^n \cdot \Exp_{c,c' \sim Q'} 2^{\cyc(c^{-1}c')}, \label{eqn:prob-O-g}
\end{align}
where in the last step, we use an argument similar to the one in the proof of \cref{lem:prob-D}.

Plugging \cref{eqn:prob-O-f,eqn:prob-O-g} into \cref{eqn:prob-O}, and by \cref{clm:cyc} for $m=1$, we have that
\begin{align*}
& \Pr(O=\ang{W_r,Z_r} \mid \gamma_{r-1},m_r)-\Pr(O \ne \ang{W_r,Z_r} \mid \gamma_{r-1},m_r) \\
& \hspace{1cm} \le \paren*{2^{-2n+\Phi_{r-1}} \cdot 2^n \cdot 2^{\frac{\bbD_\infty(A_r,B_r \mid \gamma_{r-1},m_r)+\cO(n)}{\log n}}}^{\frac{1}{2}} \\
& \hspace{1cm} \le \paren*{2^{-n+\Phi_{r-1}} \cdot 2^{\frac{L_r+Kn}{\log n}}}^{\frac{1}{2}} \\
& \hspace{1cm} = \paren*{2^{-n+\Phi_r}}^{\frac{1}{2}},
\end{align*}
for any $(\gamma_{r-1},m_r)$ satisfying  both $\oE_{r-1}$ and $\oE'_r$, and sufficiently large constant $K$.
This concludes the proof by averaging over $(\gamma_{r-1},m_r)$,
as \[\Phi_r \le \frac{rL+Krn}{\log n} = o(n),\] for $L=o(n/r \cdot \log n)$ and $r=o(\log n)$.
\end{proof}

Combining \cref{lem:prob-Ep,lem:prob-E,lem:prob-O} yields the desired $\Omega(n/r \cdot \log n)$ lower bound.
Indeed, we have that
\begin{align*}
& \Pr(O=\ang{W_r,Z_r}) \\
& \hspace{1cm} \le \Pr(O=\ang{W_r,Z_r} \mid \oE_{r-1},\oE'_r) + \Pr(\cE'_r) + \Pr(\cE_{r-1}) \\
& \hspace{1cm} \le \frac{1}{2} + 2^{-\Omega(n)} + 2^{-n} + \Pr(\cE_{r-1}) \tag{by \cref{lem:prob-Ep,lem:prob-O}}\\
& \hspace{1cm} \le \frac{1}{2} + 2^{-\Omega(n)} + 2^{-n} + \Pr(\cE_{r-1} \mid \oE_{r-2},\oE'_{r-1}) + \Pr(\cE'_{r-1}) + \Pr(\cE_{r-2}) \\
& \hspace{1cm} \le \frac{1}{2} + 2^{-\Omega(n)} + 2^{-n} + 2^{-\frac{n}{\log n}} + 2^{-n} + \Pr(\cE_{r-2}) \tag{by \cref{lem:prob-Ep,lem:prob-E}}\\
& \hspace{1cm} \le \frac{1}{2} + 2^{-\Omega(n)} + 2^{-n} + 2^{-\frac{n}{\log n}} + 2^{-n} + \Pr(\cE_{r-2} \mid \oE_{r-3},\oE'_{r-2}) + \Pr(\cE'_{r-2}) + \Pr(\cE_{r-3}) \\
& \hspace{2cm} \vdots \\
& \hspace{1cm} \le \frac{1}{2} + 2^{-\Omega(n)} + 2^{-n} + (r-1) \cdot \paren*{2^{-\frac{n}{\log n}} + 2^{-n}} + \Pr(\cE_0) \\
& \hspace{1cm} = \frac{1}{2} + o(1),
\end{align*}
contradicting the error probability $\varepsilon \le 1/3$.

\subsection{Proof of \cref{thm:prob}}
\label{sec:prob-full}

We now consider any protocol with communication $L=n/1000 \cdot \log n$.
The major bottleneck of the analysis in \cref{sec:prob-weak} is that the min-entropy loss in $(A_i,B_i)$ is upper bounded only in terms of the total length $L_i$ of all previous messages, which could be as large as $L$.
Then, a total loss of $rL/\log n$ in $\Phi_r$ would fail the advantage bound in \cref{lem:prob-O}.

Intuitively, however, at most $L_i$ bits of information in total have been revealed across all pairs of permutations $(A_j,B_j)$ for $j \in [i]$.
In other words, $(A_i,B_i)$ should only be charged for the actual amount of revealed information, instead of the maximum possible in the worst case.
After a closer look at the proof in \cref{sec:prob-weak}, we can realize that the full version of \cref{thm:prob} would follow if the upper bound in \cref{lem:prob-D} were in terms of the standard Shannon entropy loss $\bbD_1(Q)$ rather than min-entropy loss $\bbD_\infty(Q)$.
Indeed, although $\bbD_1(A_i,B_i \mid \gamma_{i-1},m_i)$ could be as large as $L$ individually for each $i$, its sum over all $i$ should be at most $L$ in expectation, by chain rule of Shannon entropy (\cref{fact:chain-H}).
This would yield a total loss of only $L/\log n$ in $\Phi_r$.

Unfortunately, the difference between Shannon entropy and min-entropy can be unbounded in general, and may be at least $\Omega(\log(n!))=\Omega(n\log n)$ in our case.
Charging it in every round would immediately destroy all the analysis.

Nevertheless, it turns out we can actually guarantee that Shannon entropy and min-entropy are almost the same, by associating each round with a ``virtual message'' $\wM_i$.
In order to properly define $\wM_i$, we need to change the definition of $\Gamma_i$ at the same time.
Concretely, starting with $\Gamma_0=\varnothing$, recursively define
\[\wM_i=\floor{-\log \Pr(A_i,B_i \mid \gamma_{i-1},m_i)} \qquad \text{and} \qquad \Gamma_i=(X,Y,A_{<i},B_{<i},M'_{\le i}),\]
where $M'_i=(M_i,\wM_i)$ is the new ``combined message''.
Again, we emphasize that $\wM_i$ is only for analysis purposes and is not communicated by the parties throughout the protocol.
At the same time, it is also worth noting that the owner of $(A_i,B_i)$ can actually compute $\wM_i$ based only on its own input and the partial transcript, including virtual messages, so far.
This is because the rectangle property of deterministic communication protocols ensures that the inputs of the two parties are always independent even conditioned on any partial transcript.
Consequently, $\wM_i$ can be treated just like the actual message $M_i$ in the analysis, and in particular preserves the rectangle property itself.
Also, no assumption on fixed message lengths is required.

The following \cref{lem:prob-wM} shows that this virtual message $\wM_i$ has negligible impact on the communication complexity.

\begin{lemma}
\label{lem:prob-wM}
For $i \in [r]$, it holds that \[\log(\wM_i+1)=\cO(\log n).\]
\end{lemma}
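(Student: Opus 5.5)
The plan is a direct range bound on $\wM_i$; no entropy argument is needed. By definition, $\wM_i=\floor{-\log \Pr(A_i,B_i \mid \gamma_{i-1},m_i)}$, where the probability is that of the realized outcome $(a_i,b_i)$ under the conditional distribution given $(\Gamma_{i-1},M_i)=(\gamma_{i-1},m_i)$. Since the realized outcome always lies in the support, this probability is strictly positive. It is also at most $1$, so $\wM_i\ge 0$ and $\log(\wM_i+1)$ is well-defined.

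For the upper bound, note that $(A_i,B_i)$ given $(\gamma_{i-1},m_i)$ is governed by the rectangle structure. The owner of $(A_i,B_i)$ holds a uniform distribution over the remaining support of its unrevealed permutations, now including the earlier virtual messages as part of the conditioning. The marginal probability of $(a_i,b_i)$ is therefore (number of compatible completions)/(support size). Both quantities are integers, and the support has size at most $(n!)^{2t}$ with $t\le r$. So the probability is at least $1/(n!)^{2t}$. Strictly speaking, only the conditioning on real and virtual messages matters, and it is always a uniform distribution on a finite subset of $(\cP_n)^{2t}$.

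It follows that $\wM_i\le 2t\log(n!)\le 2r\,n\log n$, and hence $\log(\wM_i+1)\le \log(2rn\log n+1)=\cO(\log n)$ because $r\le n$ (indeed $r=o(\log n)$). An even simpler bound is available when the probability is taken over the marginal of $(A_i,B_i)$ alone, with everything else fixed. In that case it is a ratio of counts in the product space, giving $\wM_i\le 2\log(n!)+\cO(\cdot)$, but the coarse bound above already suffices.

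The only point needing care is the uniformity claim: after conditioning on earlier virtual messages, the owner's input must still be uniform on a set rather than on a general distribution. This holds because each $\wM_j$ is a deterministic function of its owner's input and the prior transcript, so conditioning on it just restricts the support. If I wanted to avoid this, I would instead argue that any atom of a distribution on a finite set of size $(n!)^{2r}$ obtained by conditioning a uniform distribution on events defined by the protocol has probability at least $(n!)^{-2r}$. This holds because conditional probabilities of atoms in a uniform space are ratios of counts.
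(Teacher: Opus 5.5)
Your argument is correct and is essentially the paper's proof. The rectangle property, with each virtual message treated as an owner-side message, makes the owner's remaining permutations uniform on a subset of $(\cP_n)^{2t}$. So the realized $(a_i,b_i)$ has conditional probability at least $(n!)^{-2r}$, which gives $\wM_i \le 2r\log(n!)$ and hence $\log(\wM_i+1)=\cO(\log n)$.

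One caution: your aside about an ``even simpler'' bound $\wM_i \le 2\log(n!)+\cO(\cdot)$ does not hold for $\wM_i$ as defined. Conditioning on $(\gamma_{i-1},m_i)$ does not fix the owner's later permutations, so the marginal of $(a_i,b_i)$ can be far below $(n!)^{-2}$. You do not rely on it, so the proof is unaffected.
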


\begin{proof}
We prove the lemma in the case of odd $i$.
The case of even $i$ is similar.

Due to the rectangle property of deterministic communication protocols, Alice always holds a uniform distribution on $\supp(\set{(A_{i'},B_{i'})}_{\text{odd } i' \ge i} \mid \gamma_{i-1},m_i)$ after conditioning on any $(\Gamma_{i-1},M_i)=(\gamma_{i-1},m_i)$.
As a result, any possible $(A_i,B_i)$ has marginal probability at least $(n!)^{-2r}$, as the universe is of size at most $(n!)^{2r}$.
This concludes the proof as \[\log\log(n!)^{2r}=\cO(\log(rn\log n))=\cO(\log n).\]
\end{proof}

By \cref{lem:prob-wM}, the extra length caused by $\wM_i$ is at most $\cO(\log(\wM_i+1))=\cO(\log n)$.
More importantly, revealing $\wM_i$ partitions $\supp(A_i,B_i \mid \gamma_{i-1},m_i)$ into different scales, and all probabilities within the same scale differ by at most a factor of $2$.
Specifically, we have that
\begin{align}
\bbD_\infty(A_i,B_i \mid \gamma_{i-1},m'_i) \le \bbD_1(A_i,B_i \mid \gamma_{i-1},m'_i)+1. \label{eqn:prob-D}
\end{align}

The full proof of \cref{thm:prob} essentially follows by replacing $M_i$ with the combined message $M'_i$ in the analysis in \cref{sec:prob-weak}.
Indeed, change the definition of $\Phi_i$ to be \[\Phi_i=\sum_{j \in [i]} \frac{\bbD_1(A_j,B_j \mid \gamma_{j-1},m'_j) + Kn}{\log n},\] which is now a random variable depending on $(\gamma_{i-1},m'_i)$.
The event $\cE_i$ is now defined with respect to the updated threshold $\Phi_i$.
In the meantime, we no longer use the events $\cE'_i$.
Instead, define a single new event $\cE'$ to be \[\sum_{i \in [r]} \bbD_1(A_i,B_i \mid \gamma_{i-1},m'_i) > 20L.\]

With the updated definitions, analogs to \cref{lem:prob-Ep,lem:prob-E,lem:prob-O} can be shown using quite similar arguments.
We do not repeat all the proofs and only sketch a few required adaptations below.
The new statements are summarized in \cref{lem:prob-full}.

\begin{lemma}
\label{lem:prob-full}
It holds that
\begin{enumerate}
    \item\label{item:prob-full-Ep} \[\Pr(\cE') \le \frac{1}{10}.\]
    \item\label{item:prob-full-E} For $i \in [r]$, \[\Pr(\cE_i \mid \oE_{i-1}) \le 2^{-\frac{n}{\log n}}.\]
    \item\label{item:prob-full-O} \[\Pr(O=\ang{W_r,Z_r} \mid \oE_{r-1},\oE') \le \frac{1}{2} + 2^{-\Omega(n)}.\]
\end{enumerate}
\end{lemma}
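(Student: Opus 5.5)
I will prove the three items separately, re-running the arguments of \cref{lem:prob-Ep,lem:prob-E,lem:prob-O} with the combined messages $M'_i=(M_i,\wM_i)$ in place of $M_i$. Throughout I use that $\wM_i$ is computable by the owner of $(A_i,B_i)$ from its own input and the partial transcript. Hence conditioning on $\Gamma_i$ still preserves the rectangle property: the two parties' remaining inputs stay independent, and each party's input is uniform on its current support.

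\textbf{Item~\ref{item:prob-full-Ep}.} I bound the expectation and apply Markov's inequality. By the chain rule of Shannon entropy (\cref{fact:chain-H}),
\[\sum_{i\in[r]}\Exp\bigl[\bbD_1(A_i,B_i\mid \gamma_{i-1},m'_i)\bigr]\le \bbI(A_{\le r},B_{\le r};M'_{\le r})\le \Exp\sum_{i}\bigl(\card{M_i}+\card{\wM_i}\bigr).\]
Revealing $(X,Y)$, earlier permutations, and the other party's pieces removes no entropy from $(A_i,B_i)$ beyond what the messages reveal, because of independence and the rectangle property. Equivalently, the total divergence telescopes into the entropy loss of all $2r$ permutations, which is at most the entropy of the transcript. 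Since the protocol is prefix-free, the actual messages contribute at most $L$. By \cref{lem:prob-wM}, each virtual message can be encoded with $\cO(\log n)$ bits, contributing $\cO(r\log n)=o(L)$ in total. So the expectation is at most $2L$, and Markov's inequality gives $\Pr(\cE')\le 1/10$.

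\textbf{Item~\ref{item:prob-full-E}.} I reproduce the proof of \cref{lem:prob-E}. By \cref{eqn:prob-D}, for every $(\gamma_{i-1},m'_i)$,
\[\bbD_\infty(A_i,B_i\mid\gamma_{i-1},m'_i)\le \bbD_1(A_i,B_i\mid\gamma_{i-1},m'_i)+1.\]
Plugging this into \cref{lem:prob-D} gives
\[\Exp_{w,z}2^{\bbD_2}\le 2^{(\bbD_1+\cO(n))/\log n}\le 2^{\Phi_i-\Phi_{i-1}-Kn/(2\log n)}\]
once $K$ is large, with no exceptional event needed. This is exactly why the event $\cE'_i$ disappears. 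The Markov counting step and the Cauchy--Schwarz step against $\bbH_2(W_{i-1},Z_{i-1}\mid\gamma_{i-1},m'_i)\ge 2n-\Phi_{i-1}$ then go through verbatim. The rectangle-property transfer of the collision-entropy bound from the earlier conditioning also applies unchanged, since $M'_i$ behaves like a message of its owner. I note one subtlety: $\Phi_{i-1}$ is now determined by $(\gamma_{i-2},m'_{i-1})$, so it is fixed under the same conditioning. The bound $2^{-Kn/(4\log n)}$ holds pointwise and averages to the claim.

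\textbf{Item~\ref{item:prob-full-O}.} I follow \cref{lem:prob-O}. The bound \cref{eqn:prob-O-f} gives $2^{-2n+\Phi_{r-1}}$ under $\oE_{r-1}$. For the $g$-term, \cref{eqn:prob-O-g} and \cref{clm:cyc} with \cref{eqn:prob-D} give $2^{n}\cdot2^{(\bbD_1+1+\cO(n))/\log n}$. The advantage for fixed $(\gamma_{r-1},m'_r)$ is then at most $2^{(-n+\Phi_r)/2}$. On $\oE'$,
\[\Phi_r\le \frac{20L+Krn}{\log n}\le \frac{n}{50}+o(n),\]
using $L=n\log n/1000$ and $r=o(\log n)$. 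This makes the advantage $2^{-\Omega(n)}$.

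The main obstacle is that $\oE'$ is a global event depending on all rounds, including $m'_r$ and earlier transcripts. Conditioning on it could in principle bias the conditional distributions used in the Cauchy--Schwarz step. I would handle this by observing that $\oE'$ is determined by $(\gamma_{r-1},m'_r)$, since each $\bbD_1$ term is a function of its own conditioning. The pointwise advantage bound therefore holds for every such fixed value satisfying both $\oE_{r-1}$ and $\oE'$, and averaging over these values is legitimate.
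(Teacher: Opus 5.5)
Your proposal follows the paper's proof. Like the paper, you re-run \cref{lem:prob-Ep,lem:prob-E,lem:prob-O} with $M'_i$ in place of $M_i$ and apply Markov's inequality to the expected total divergence for \cref{item:prob-full-Ep}. For \cref{item:prob-full-E,item:prob-full-O}, you feed \cref{eqn:prob-D} into \cref{lem:prob-D} and \cref{clm:cyc}, which removes the need for the events $\cE'_i$. You also observe explicitly that $\oE_{r-1}$ and $\oE'$ are functions of $(\gamma_{r-1},m'_r)$, so the pointwise advantage bound may be averaged. The paper leaves this implicit, and your observation is correct.

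One displayed step in \cref{item:prob-full-Ep} is false as written: $\sum_i \Exp\,\bbD_1(A_i,B_i \mid \gamma_{i-1},m'_i) \le \bbI(A_{\le r},B_{\le r};M'_{\le r})$. For $i \ge 2$, the $i$-th term equals $\bbI(A_i,B_i;M'_{\le i} \mid X,Y,A_{<(i-1)},B_{<(i-1)})$. Because $\Gamma_{i-1}$ contains $X,Y$, which are independent of the permutations, this conditioning can increase mutual information.

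For a concrete failure, let Alice send nothing informative in round $1$, and let Bob's round-$2$ message be the binary encoding of $A_2(1)$ XORed with the first $\ceil{\log n}$ bits of $X$. Then $\wM_1,\wM_2$ are constant and the $i=2$ term equals $\log n$. Yet $M'_{\le 2}$ is independent of $(A_{\le 2},B_{\le 2})$, so the right-hand side is $0$.

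The correct intermediate quantity is $\bbI(X,Y,A_{\le r},B_{\le r};M'_{\le r})$. The telescoping needs one step your "removes no entropy beyond what the messages reveal" does not supply. Since $(A_i,B_i)$ is independent of $(A_{i-1},B_{i-1})$, \cref{fact:I-inc} upgrades the conditioning from $A_{<(i-1)},B_{<(i-1)}$ to $A_{<i},B_{<i}$. Then enlarge $M'_{\le i}$ to $M'_{\le r}$ and sum using the chain rule (\cref{fact:chain-I}). This quantity is still at most $\bbH(M'_{\le r}) \le L + \cO(r\log n) \le 2L$. With the intermediate step corrected this way, your final bound and the Markov step stand.
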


\begin{proof}
Since the input is uniformly random at the beginning of the protocol, we have that
\begin{align*}
& \Exp_{x,y,a_{\le r},b_{\le r},m'_{\le r}} \sum_{i \in [r]} \bbD_1(A_i,B_i \mid \gamma_{i-1},m'_i) \\
& \hspace{1cm} = \sum_{i \in [r]} \bbI(A_i,B_i;\Gamma_{i-1},M'_i) \\
& \hspace{1cm} = \bbI(A_1,B_1;M'_1) \\
& \hspace{2cm} + \sum_{i \in [2,r]} \paren*{\bbI(A_i,B_i;X,Y,A_{<(i-1)},B_{<(i-1)}) + \bbI(A_i,B_i;M'_{\le i} \mid X,Y,A_{<(i-1)},B_{<(i-1)})} \tag{by chain rule of mutual information (\cref{fact:chain-I})}\\
& \hspace{1cm} = \bbI(A_1,B_1;M'_1) + \sum_{i \in [2,r]} \bbI(A_i,B_i;M'_{\le i} \mid X,Y,A_{<(i-1)},B_{<(i-1)}) \\
& \hspace{1cm} \le \bbI(A_1,B_1;M'_1) + \sum_{i \in [2,r]} \bbI(A_i,B_i;M'_{\le i} \mid X,Y,A_{<i},B_{<i}) \tag{by \cref{fact:I-inc}}\\
& \hspace{1cm} \le \bbI(X,Y,A_1,B_1;M'_{\le r}) + \sum_{i \in [2,r]} \bbI(A_i,B_i;M'_{\le r} \mid X,Y,A_{<i},B_{<i}) \\
& \hspace{1cm} = \bbI(X,Y,A_{\le r},B_{\le r};M'_{\le r}) \tag{by chain rule of mutual information (\cref{fact:chain-I})}\\
& \hspace{1cm} \le \bbH_1(M'_{\le r}) \\
& \hspace{1cm} \le \bbH_1(M_{\le r}) + \bbH_1(\wM_{\le r}) \\
& \hspace{1cm} \le L + \cO(r\log n) \tag{by \cref{lem:prob-wM}}\\
& \hspace{1cm} \le 2L.
\end{align*}
\cref{item:prob-full-Ep} then follows from Markov's inequality.

For \cref{item:prob-full-E}, almost the same proof as for \cref{lem:prob-E} applies, except that the upper bound on $\bbD_2((A_i,B_i)^{(w,z)} \mid \gamma_{i-1},m'_i)$ now comes from \cref{lem:prob-D,eqn:prob-D}, instead of being guaranteed by $\oE'_i$.

\cref{item:prob-full-O} is shown by a similar argument to the one for \cref{lem:prob-O}, where $\oE'$ now ensures \[\Phi_r \le \frac{20L+Krn}{\log n} \le \frac{n}{10},\] for $r=o(\log n)$.
\end{proof}

Finally, we are ready to prove the full version of \cref{thm:prob}.

\begin{proof}[Proof of \cref{thm:prob}]
By \cref{lem:prob-full}, we have that
\begin{align*}
\Pr(O=\ang{W_r,Z_r})
& \le \Pr(O=\ang{W_r,Z_r} \mid \oE_{r-1},\oE') + \Pr(\cE') + \Pr(\cE_{r-1}) \\
& \le \frac{1}{2} + 2^{-\Omega(n)} + \frac{1}{10} + \Pr(\cE_{r-1}) \tag{by \cref{item:prob-full-Ep,item:prob-full-O}}\\
& \le \frac{3}{5} + 2^{-\Omega(n)} + \Pr(\cE_{r-1} \mid \oE_{r-2}) + \Pr(\cE_{r-2}) \\
& \le \frac{3}{5} + 2^{-\Omega(n)} + 2^{-\frac{n}{\log n}} + \Pr(\cE_{r-2}) \tag{by \cref{item:prob-full-E}}\\
& \le \frac{3}{5} + 2^{-\Omega(n)} + 2^{-\frac{n}{\log n}} + \Pr(\cE_{r-2} \mid \oE_{r-3}) + \Pr(\cE_{r-3}) \\
& \hspace{1cm} \vdots \\
& \le \frac{3}{5} + 2^{-\Omega(n)} + (r-1) \cdot 2^{-\frac{n}{\log n}} + \Pr(\cE_0) \\
& = \frac{3}{5} + o(1),
\end{align*}
a contradiction to the assumption $\epsilon \le 1/3$.
This concludes the proof.
\end{proof}

%% file: info.tex
\section{Basic Tools from Information Theory}
\label{sec:info}

We introduce some definitions from information theory that we use in this paper.
Interested readers are referred to the excellent textbook by Cover and Thomas~\cite{CT06} for an introduction to information theory.
For notational convenience, we may identify a random variable with its distribution.

The \emph{Shannon entropy} of a random variable $X$, with distribution $\mu$ supported on universe $\cU$, is defined to be \[\bbH(X) = \sum_{x \in \cU} \mu(x) \cdot \log \frac{1}{\mu(x)}.\]
The \emph{conditional entropy} of $X$ conditioned on $Y$ is defined as \[\bbH(X \mid Y) = \Exp_{y \sim Y} \bbH(X \mid Y=y).\]
For two distributions $\mu,\nu$ supported on the same universe $\cU$, the \emph{Kullback-Leibler (KL) divergence} of $\mu$ from $\nu$ is defined to be \[\kl(\mu \Vert \nu) = \sum_{x \in \cU} \mu(x) \cdot \log\frac{\mu(x)}{\nu(x)}.\]

For $\alpha \in [0,1) \cup (1,\infty)$, the \emph{\renyi{} entropy} of order $\alpha$ of a random variable $X$, with distribution $\mu$ supported on universe $\cU$, is defined to be \[\bbH_\alpha(X) = \frac{1}{1-\alpha} \cdot \log\sum_{x \in \cU} \mu(x)^\alpha.\]
For $\alpha=1$, taking the limit recovers Shannon entropy: \[\bbH_1(X) = \lim_{\alpha \to 1} \bbH_\alpha(X) = \bbH(X).\]
The \emph{min-entropy} of $X$ is defined to be \[\bbH_\infty(X) = \lim_{\alpha \to \infty} \bbH_\alpha(X) = - \log \max_{x \in \cU} \mu(x).\]
The case of $\alpha=2$ is also known as \emph{collision entropy}: \[\bbH_2(X) = - \log \sum_{x \in \cU} \mu(x)^2,\] as the sum is exactly the probability of collision for two independently drawn samples.

For two distributions $\mu,\nu$ supported on the same universe $\cU$, the \emph{\renyi{} divergence} of order $\alpha$ of $\mu$ from $\nu$ is defined to be \[\bbD_\alpha(\mu \Vert \nu) = \frac{1}{\alpha-1} \cdot \log \sum_{x \in \cU} \mu(x)^\alpha \nu(x)^{1-\alpha},\] for $\alpha \in (0,1) \cup (1,\infty)$, and extended to $\alpha=1,\infty$ by limits.
$\bbD_1(\cdot \Vert \cdot)$ is exactly KL divergence.
If $\nu$ is the uniform distribution on $\cU$, we simply write \[\bbD_\alpha(\mu) = \bbD_\alpha(\mu \Vert \nu) = \log\card{\cU} - \bbH_\alpha(\mu).\]

The \emph{mutual information} between two random variables $X,Y$ is defined to be \[\bbI(X;Y)=\bbH(X)-\bbH(X \mid Y)=\bbH(Y)-\bbH(Y \mid X),\]
and the \emph{conditional mutual information} between $X,Y$ conditioned on $Z$ is defined as \[\bbI(X;Y \mid Z)=\Exp_{z \sim Z} \bbI(X;Y \mid Z=z).\]

Throughout, we use the following basic properties.

\begin{fact}[Chain rule of Shannon entropy]
\label{fact:chain-H}
For random variables $A,B,C$, it holds that \[\bbH(A,B \mid C) = \bbH(A \mid B,C) + \bbH(B \mid C).\]
\end{fact}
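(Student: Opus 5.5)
This is a standard identity, and I would derive it directly from the definitions in \cref{sec:info}, by induction on nothing more than a single conditioning step. First I handle the unconditional case $\bbH(A,B)=\bbH(A\mid B)+\bbH(B)$. Write $\mu(a,b)=\Pr(A=a,B=b)$ and $\mu(b)=\Pr(B=b)$. By Bayes' rule, $\mu(a,b)=\mu(b)\cdot\mu(a\mid b)$, so
\begin{align*}
\bbH(A,B) &= \sum_{a,b}\mu(a,b)\log\frac{1}{\mu(b)\,\mu(a\mid b)} \\
&= \sum_{b}\mu(b)\log\frac{1}{\mu(b)} + \sum_b \mu(b)\sum_a \mu(a\mid b)\log\frac{1}{\mu(a\mid b)}.
\end{align*}
In the first sum I have used $\sum_a \mu(a,b)=\mu(b)$. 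The first term is $\bbH(B)$. The inner sum in the second term is $\bbH(A\mid B=b)$, so the second term equals $\Exp_{b\sim B}\bbH(A\mid B=b)=\bbH(A\mid B)$ by the definition of conditional entropy.

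Next, the conditional version follows by applying the unconditional identity to the distribution of $(A,B)$ conditioned on $C=c$, for each fixed $c$ in the support of $C$:
\[
\bbH(A,B\mid C=c)=\bbH(A\mid B,C=c)+\bbH(B\mid C=c).
\]
Here $\bbH(A\mid B,C=c)=\Exp_{b\sim B\mid C=c}\bbH(A\mid B=b,C=c)$. Taking the expectation over $c\sim C$ and using linearity gives $\bbH(A,B\mid C)$ on the left and $\bbH(B\mid C)$ on the right. For the remaining term, the tower property gives
\[
\Exp_{c}\Exp_{b\mid c}\bbH(A\mid B=b,C=c)=\Exp_{(b,c)}\bbH(A\mid B=b,C=c)=\bbH(A\mid B,C),
\]
since the distribution of the pair $(B,C)$ factors as $\Pr(C=c)\Pr(B=b\mid C=c)$.

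There is no real obstacle. The only care needed is bookkeeping: restrict all sums to the support, using the convention $0\log(1/0)=0$, so that the conditional probabilities are well defined. I would also note that the universes are finite throughout the paper, so no convergence issues arise.
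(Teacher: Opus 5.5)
Your proof is correct. The paper gives no proof of this fact and simply cites it as standard background from Cover and Thomas; your argument is the standard textbook one: split $\log\frac{1}{\mu(a,b)}$ using $\mu(a,b)=\mu(b)\,\mu(a\mid b)$, then average the unconditional identity over $c$ using the paper's definition of conditional entropy as an expectation.
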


\begin{fact}[Chain rule of mutual information]
\label{fact:chain-I}
For random variables $A,B,C,D$, it holds that \[\bbI(A,B;C \mid D) = \bbI(A;C \mid B,D) + \bbI(B;C \mid D).\]
\end{fact}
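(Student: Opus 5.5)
The plan is to reduce everything to conditional entropies and then apply the chain rule of Shannon entropy (\cref{fact:chain-H}) twice. Since $\bbI(X;Y \mid Z)$ is defined as an average over $z \sim Z$ of unconditional mutual informations, the first step is to rewrite it in entropy form:
\[\bbI(X;Y \mid Z) = \bbH(X \mid Z) - \bbH(X \mid Y,Z).\]

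To get this identity, fix $z$ and write $\bbI(X;Y \mid Z=z) = \bbH(X \mid Z=z) - \bbH(X \mid Y,Z=z)$, which is just the unconditional definition applied to the conditional distribution given $Z=z$. Then take the expectation over $z \sim Z$, using $\bbH(X \mid Z) = \Exp_{z \sim Z} \bbH(X \mid Z=z)$. For the second term, I will verify that $\Exp_{z} \bbH(X \mid Y, Z=z) = \bbH(X \mid Y,Z)$, which is an iterated average over $(y,z)$. This is the only step with any content: checking that the averaging definitions line up. It is routine bookkeeping with the definitions in \cref{sec:info}.

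With the identity in hand, apply it with $X=(A,B)$, $Y=C$, $Z=D$:
\[\bbI(A,B;C \mid D) = \bbH(A,B \mid D) - \bbH(A,B \mid C,D).\]
Expand each term by \cref{fact:chain-H}:
\[\bbH(A,B \mid D) = \bbH(B \mid D) + \bbH(A \mid B,D),\]
\[\bbH(A,B \mid C,D) = \bbH(B \mid C,D) + \bbH(A \mid B,C,D).\]
Regrouping gives
\[\bbI(A,B;C \mid D) = \bigl[\bbH(A \mid B,D) - \bbH(A \mid B,C,D)\bigr] + \bigl[\bbH(B \mid D) - \bbH(B \mid C,D)\bigr].\]
By the same identity, the first bracket is $\bbI(A;C \mid B,D)$ and the second is $\bbI(B;C \mid D)$, which is the claim.

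I expect no real obstacle. The one point needing care is that all terms are finite, which holds because the universes are finite, so the subtractions are legitimate.
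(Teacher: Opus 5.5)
Your proposal is correct. You rewrite $\bbI(X;Y \mid Z)$ as $\bbH(X \mid Z)-\bbH(X \mid Y,Z)$, expand both entropies of $(A,B)$ by the entropy chain rule, and regroup, which is the standard textbook derivation; the paper gives no proof of this fact, listing it among basic properties with a pointer to Cover and Thomas, where it is proved in essentially this way.
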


\begin{fact}
\label{fact:I-inc}
For random variables $A,B,C,D$ such that $A \perp D \mid C$, it holds that \[\bbI(A;B \mid C) \le \bbI(A;B \mid C,D).\]
\end{fact}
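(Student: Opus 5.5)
We will expand the mutual information $\bbI(A;B,D \mid C)$ in two different orders using the chain rule of mutual information (\cref{fact:chain-I}) and compare the two expansions. Throughout we use that conditional mutual information is nonnegative, since it is an average of KL divergences.

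First, we peel off $D$ before $B$:
\[\bbI(A;B,D \mid C) = \bbI(A;D \mid C) + \bbI(A;B \mid C,D).\]
The hypothesis $A \perp D \mid C$ says that for every outcome $c$, the conditional joint distribution of $(A,D)$ given $C=c$ is the product of its marginals. Hence $\bbI(A;D \mid C=c)=0$ for every $c$, and averaging over $c$ gives $\bbI(A;D \mid C)=0$. Therefore
\[\bbI(A;B,D \mid C) = \bbI(A;B \mid C,D).\]

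Second, we peel off $B$ before $D$:
\[\bbI(A;B,D \mid C) = \bbI(A;B \mid C) + \bbI(A;D \mid B,C).\]
The last term is nonnegative, so $\bbI(A;B,D \mid C) \ge \bbI(A;B \mid C)$. Combining this with the first expansion yields $\bbI(A;B \mid C) \le \bbI(A;B \mid C,D)$, as claimed.

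There is no real obstacle here. The only point that needs care is to check that \cref{fact:chain-I} is applied with the variables in the correct slots. In the paper's formulation, the pair being split appears on the left of the semicolon, so we use the symmetry $\bbI(X;Y \mid Z)=\bbI(Y;X \mid Z)$ to move $(B,D)$ into that position before applying it.
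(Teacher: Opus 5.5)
Your proof is correct. The paper records this as a standard fact without proof, deferring to Cover and Thomas. Expanding $\bbI(A;B,D \mid C)$ by the chain rule in both orders, killing $\bbI(A;D \mid C)$ via the independence hypothesis, and dropping the nonnegative term $\bbI(A;D \mid B,C)$ is exactly the standard textbook argument.
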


\begin{fact}
\label{fact:I-dec}
For random variables $A,B,C,D$ such that $A \perp D \mid B,C$, it holds that \[\bbI(A;B \mid C) \ge \bbI(A;B \mid C,D).\]
\end{fact}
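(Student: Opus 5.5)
The plan is to expand the mutual information $\bbI(A;B,D \mid C)$ in two different ways with the chain rule of mutual information (\cref{fact:chain-I}). Then we compare the two expansions, using the conditional independence hypothesis to kill one term and nonnegativity of mutual information to drop another. This is the same argument that proves \cref{fact:I-inc}, with the roles of $B$ and $D$ swapped.

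First, peeling off $D$ first gives
\[\bbI(A;B,D \mid C) = \bbI(A;D \mid C) + \bbI(A;B \mid C,D).\]
Peeling off $B$ first instead gives
\[\bbI(A;B,D \mid C) = \bbI(A;B \mid C) + \bbI(A;D \mid B,C).\]
The hypothesis $A \perp D \mid B,C$ says that for every fixing $(b,c)$ the conditional joint distribution of $(A,D)$ is a product distribution. Hence $\bbI(A;D \mid B=b,C=c)=0$ for each such fixing, and averaging gives $\bbI(A;D \mid B,C)=0$.

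Equating the two expansions yields
\[\bbI(A;B \mid C) = \bbI(A;D \mid C) + \bbI(A;B \mid C,D) \ge \bbI(A;B \mid C,D).\]
The inequality uses $\bbI(A;D \mid C) \ge 0$. This holds because conditional mutual information is an average of (unconditional) mutual informations, each of which is a KL divergence and hence nonnegative.

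There is no real obstacle here. The only points that need care are applying \cref{fact:chain-I} with the correct grouping of variables, and justifying that the conditional independence makes the conditional mutual information vanish pointwise before averaging. All quantities are finite since every random variable in the paper has finite support.
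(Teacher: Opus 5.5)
Your proof is correct: it is the standard two-way chain-rule expansion of $\bbI(A;B,D \mid C)$, and the paper gives no proof of its own, listing this among the basic information-theoretic facts with a pointer to Cover and Thomas. One aside is imprecise: you describe this as the argument for \cref{fact:I-inc} ``with $B$ and $D$ swapped,'' but it is really the same pair of expansions with the roles of the cross terms $\bbI(A;D \mid C)$ and $\bbI(A;D \mid B,C)$ exchanged, and the aside plays no role in your argument.
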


%% file: pcp.tex
\section{Round-Efficient Protocols for $\pcp$}
\label{sec:pcp}

We present round-efficient protocols for $\pcp$ and $\bpcp$ in this section.
For simplicity, only the receiver of the last message is required to output for protocols in this section.
It only changes the round complexity by $1$.

\begin{theorem}
\label{thm:pcp}
For $n,r \ge 1$, and $k \le r^2$, there exists an $r$-round deterministic protocol, starting with Bob, for $\pcp_{n,k}$ with $\cO(nr^2\log\log n)$ communication.

Moreover, for $k \le r^2+r$, there exists an $r$-round deterministic protocol, starting with Bob, for $\bpcp_{n,k}$ with $\cO(nr^2\log\log n)$ communication.
\end{theorem}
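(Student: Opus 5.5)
The plan is to turn the idea sketched in \cref{sec:overview-const} into an explicit protocol. Write $f_1,\dots,f_k$ for the input permutations and $g=\ceil{\log n}$. Split the $k \le r^2$ layers into consecutive intervals $I_1,\dots,I_r$, where $I_j$ has length $m_j=2j-1$, so that $\sum_j m_j=r^2$. Interval $I_j$ will be \emph{chased} in round $j$ and \emph{preprocessed} during rounds $1,\dots,j-1$; the preprocessing of all intervals runs in parallel.

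\emph{Preprocessing.} Suppose an interval consists of layers $s+1,\dots,s+m$. The goal is partitions $P^{(t)}=(P^{(t)}_1,\dots,P^{(t)}_g)$ of $[n]$ for $t=s,\dots,s+m$ such that
\[f_{t}\paren*{P^{(t-1)}_c}=P^{(t)}_c \quad \text{for all } c \in [g].\]
Every part should have size $\ceil{n/g}$ or $\floor{n/g}$, and both parties should know all the partitions.
\begin{enumerate}
\item Fix an arbitrary balanced partition at a middle layer index $t_0$, known to both parties without communication.
\item Propagate it forward by $P^{(t)}=f_t(P^{(t-1)})$ and backward by $P^{(t-1)}=f_t^{-1}(P^{(t)})$. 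Only the owner of $f_t$ can take either step, and it announces the new partition using $n\log g=\cO(n\log\log n)$ bits.
\item The forward frontier at $t_0$ needs $f_{t_0+1}$, while the backward frontier needs $f_{t_0}$, and these have opposite owners. Since ownership alternates, once both frontiers have started, each message advances both of them by one layer.
\end{enumerate}
So $j-1$ rounds handle about $2(j-1)+1=m_j$ layers; the middle-layer choice should absorb the off-by-one. Because the $f_t$ are permutations, parts keep their sizes, so balance is automatic. The total preprocessing cost is $\cO(k\cdot n\log\log n)=\cO(nr^2\log\log n)$.

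\emph{Chasing.} I will maintain the invariant that the sender of round $j$ knows the current vertex $v$ at the start of $I_j$. Initially $v=v_0$ is fixed, and Bob sends first. In round $j$ the sender proceeds as follows.
\begin{enumerate}
\item It computes the class $c$ with $v\in P^{(s)}_c$; this class is the same at every layer of $I_j$.
\item It sends $v$, $c$, and, for each layer $t$ of $I_j$ it owns, the restriction of $f_t$ to $P^{(t-1)}_c$. Each value lies in $P^{(t)}_c$, so it costs $\cO(\log(n/g))$ bits, and one restriction costs $\cO(n)$ bits.
\item The receiver now knows the restriction of every layer of $I_j$ to class $c$: its own layers directly, and the sender's from the message. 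It follows the pointer through the whole interval and learns the start vertex of $I_{j+1}$.
\end{enumerate}
The receiver is the sender of round $j+1$, so the invariant is preserved. Over all rounds chasing costs $\cO(nk)$, within budget. These messages are piggybacked onto the same rounds as the preprocessing messages, and after round $r$ the receiver knows $v_k$.

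For $\bpcp_{n,k}$ with $k\le r^2+r$, I plan to lengthen each interval by one layer. In round $j$, the sender first chases its own last layer of $I_{j-1}$ and then sends only what is needed; alternatively, the final message can transmit the parity of the image of each element of class $c$ in one extra layer, costing $\cO(n/g)$ bits. Where the extra $r$ layers fit exactly is the step I expect to need care.

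The main obstacle is the round-scheduling bookkeeping. I must check that the alternating ownership really lets both propagation frontiers advance in every round. I must also check that, in round $j$, the sender owns the right layers relative to the chasing direction, and that preprocessing for $I_j$ finishes strictly before round $j$. I will settle this first by tracking parities for a small $r$.
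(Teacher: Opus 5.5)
Your architecture for $\pcp_{n,k}$ is the paper's: \cref{alg:pcp-delay} run in parallel inside \cref{alg:pcp}, with your $I_j$ being the paper's round-$j$ block shifted by one layer. However, the scheduling claim you defer is false as stated, and no choice of $t_0$ rescues it. When you grow consistent partitions outward from a seed, the round-$j$ sender holds after $j-1$ messages at most $2j$ consecutive partitions. The two extreme layers of that window are hers, because every frontier step is taken by the receiver of the latest message using its own permutation. The round-$j$ receiver holds only $2j-2$ partitions. Both end layers $f_{(j-1)^2+1}$ and $f_{j^2}$ of $I_j$ belong to the round-$j$ receiver, so before round $j$ no party holds $P^{((j-1)^2)},\dots,P^{(j^2)}$. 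For example, when $j=2$, after Bob's single message the best Alice can hold is $P^{(0)},\dots,P^{(3)}$ or $P^{(2)},\dots,P^{(5)}$.

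The fix is to require less. The receiver applies $f_{(j-1)^2+1}$ and $f_{j^2}$ itself, and the sender's last restricted layer is $f_{j^2-1}$. So only $P^{((j-1)^2)},\dots,P^{(j^2-2)}$ are ever used. Send images with $\lceil\log n\rceil$ bits each, which is still $\cO(n)$ per layer, since the receiver will not know $P^{(j^2-1)}$. Seed at $t_0=j^2-j$, which is a Bob layer, and have Bob pull back first. Then both parties hold these $2j-2$ partitions after $j-1$ rounds; this is \cref{alg:pcp-delay} with indices shifted by one.

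For $\bpcp_{n,k}$ with $k\le r^2+r$, the idea that produces the extra $r$ layers is missing. Lengthening $I_j$ fails for the same parity reason. An appended layer $f_{j^2+1}$ belongs to the round-$j$ sender, who would then need the window $P^{((j-1)^2)},\dots,P^{(j^2)}$ just ruled out; this is the obstruction noted in the footnote to \cref{lem:pcp-delay}. Sending parities in the final message buys only one layer.

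The paper instead propagates parity labels backward from the last layer, in parallel with the forward chase. Bob owns $f_{r^2+r}$, since $r^2+r$ is even, and he sets $b_{r^2+r-1}(v)=f_{r^2+r}(v)\bmod 2$. In each round the sender transmits the current label vector $b_t$ ($n$ bits), and the receiver, who owns $f_t$, computes $b_{t-1}=b_t\circ f_t$. After $r-1$ rounds $b_{r^2}$ is known. The $r$-th message hands it to the party holding $v_{r^2}$, which outputs $b_{r^2}(v_{r^2})$. This costs only $\cO(n)$ extra bits per round.
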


The protocol for proving \cref{thm:pcp} utilizes \cref{alg:pcp-delay} as a subroutine, which solves $\pcp_{n,2\tau-1}$ in $\tau$ rounds with $\cO(n\tau\log\log n)$ communication, assuming the proper party starts communication.
More crucially, it has the ``delayed starting vertex'' property, as formalized in \cref{lem:pcp-delay}.

\begin{algorithm}[tp]
    \caption{A $\tau$-round protocol for $\pcp_{n,2\tau-1}$.}
    \label{alg:pcp-delay}
    \begin{algorithmic}[1]
        \State \label{line:pcp-delay-init-p}Set $p \gets 1$ for Alice and $p \gets 0$ for Bob.
        \State Initialize $L=\ceil{\log n}$, and $\chi_i(v) \gets 0$ for $i \in [\tau]$ and $v \in [n]$.
        \If{$p \equiv \tau \pmod{2}$} \label{line:pcp-delay-init-if}
            \State \label{line:pcp-delay-init-t}Set $\chi_\tau(v) \gets v \bmod L$ for $v \in [n]$.
            \State \label{line:pcp-delay-init-s}Set $\chi_{\tau-1}(v) \gets \chi_\tau(f_\tau(v))$ for $v \in [n]$.
        \EndIf
        \For{$i=1,\dots,\tau-1$}
            \If{$p \equiv \tau-i+1 \pmod{2}$} \label{line:pcp-delay-i-if}
                \State \label{line:pcp-delay-i-send}Send $\chi_{\tau-i},\chi_{\tau+i-1}$.
            \Else
                \State Receive $\chi_{\tau-i},\chi_{\tau+i-1}$.
                \State \label{line:pcp-delay-i-s}Set $\chi_{\tau-i-1}(v) \gets \chi_{\tau-i}(f_{\tau-i}(v))$ for $v \in [n]$.
                \State \label{line:pcp-delay-i-t}Set $\chi_{\tau+i}(v) \gets \chi_{\tau+i-1}(f_{\tau+i}^{-1}(v))$ for $v \in [n]$.
            \EndIf
        \EndFor
        \If{$p=1$} \label{line:pcp-delay-r-if}
            \State \label{line:pcp-delay-v0}Get input $v_0$.
            \State \label{line:pcp-delay-r-send}Send $v_0$, and $\chi_0,\chi_{2\tau-1}$.
            \State Send $f_i(v)$ for odd $i \in [2\tau-1]$ and all $v \in [n]$ with $\chi_{i-1}(v)=\chi_0(v_0)$.
        \Else
            \State Receive $v_0$, and $\chi_0,\chi_{2\tau-1}$.
            \State Receive $f_i(v)$ for odd $i \in [2\tau-1]$ and all $v \in [n]$ with $\chi_{i-1}(v)=\chi_0(v_0)$.
            \State \label{line:pcp-delay-out}Output $f_{2\tau-1} \circ \dots \circ f_1(v_0)$.
        \EndIf
        \alglinenoNew{algcommon}
        \alglinenoPush{algcommon}
    \end{algorithmic}
\end{algorithm}

\begin{lemma}
\label{lem:pcp-delay}
For $n,\tau \ge 1$, there exists a $\tau$-round deterministic protocol for $\pcp_{n,2\tau-1}$ with $\cO(n\tau\log\log n)$ communication such that\footnote{In fact, \cref{alg:pcp-delay} can also solve $\pcp_{n,2\tau}$ as Bob owns $f_{2\tau}$ and thus can output correctly in \cref{line:pcp-delay-out}. Nevertheless, due to parity constraints, this extra layer does not help improve \cref{alg:pcp}.}
\begin{enumerate}
    \item \label{item:pcp-delay-1} The first message is sent by Alice for odd $\tau$ and Bob for even $\tau$.
    \item \label{item:pcp-delay-r} The last message is received by Bob.
    \item \label{item:pcp-delay-0} The starting vertex $v_0$ is given to Alice only at the beginning of the last round.
\end{enumerate}
\end{lemma}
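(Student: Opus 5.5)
The plan is to prove \cref{lem:pcp-delay} by showing that \cref{alg:pcp-delay} itself is the required protocol. I would check four things in order: well-definedness together with ownership and round structure, a consistency invariant for the labelings $\chi_j$, correctness of the output, and the communication bound. Throughout I use the convention that $f_j$ is owned by Alice ($p=1$) for odd $j$ and by Bob ($p=0$) for even $j$.

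\emph{Ownership and rounds.} At initialization, the party with $p \equiv \tau \pmod 2$ owns $f_\tau$, so it can set $\chi_\tau(v)=v \bmod L$ and then $\chi_{\tau-1}=\chi_\tau\circ f_\tau$. In loop iteration $i$, the sender has $p\equiv \tau-i+1$, so senders alternate between the two parties. The receiver has $p\equiv\tau-i$, which is exactly the parity of both $f_{\tau-i}$ and $f_{\tau+i}$, so it can compute $\chi_{\tau-i-1}$ and $\chi_{\tau+i}$. An induction on $i$ shows that the pair $\chi_{\tau-i},\chi_{\tau+i-1}$ sent in iteration $i$ is held by the sender at that time: for $i=1$ it was computed at initialization, and for $i>1$ it was computed by the same party, as receiver, in iteration $i-1$.
\begin{itemize}
\item The sender in iteration $i=\tau-1$ has $p\equiv 0$, i.e.\ it is Bob. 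Hence Alice has just computed $\chi_0,\chi_{2\tau-1}$ and sends the final message, which gives \cref{item:pcp-delay-r}, and she receives $v_0$ only at the start of that last round, which gives \cref{item:pcp-delay-0}.
\item The first message comes from $p\equiv\tau$, which gives \cref{item:pcp-delay-1}.
\item In the degenerate case $\tau=1$ there is no loop: Alice initializes $\chi_1$ and $\chi_0$ herself and then sends directly.
\end{itemize}
In all cases there are $(\tau-1)+1=\tau$ rounds.

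\emph{Invariant.} By induction over the construction, $\chi_{j-1}(v)=\chi_j(f_j(v))$ for all $j\in[2\tau-1]$ and all $v$. For $j\le\tau$ this holds by the definitions on \cref{line:pcp-delay-init-s,line:pcp-delay-i-s}; for $j>\tau$ it follows from $\chi_{\tau+i}=\chi_{\tau+i-1}\circ f_{\tau+i}^{-1}$. Two consequences follow.
\begin{itemize}
\item Since every $f_j$ is a permutation, each $\chi_j$ is $\chi_\tau$ composed with a bijection, so every label class of every $\chi_j$ has size at most $\lceil n/L\rceil$.
\item Along the true path $v_0,v_1,\dots,v_{2\tau-1}$ we have $\chi_{i}(v_i)=\chi_0(v_0)=:c$ for all $i$.
\end{itemize}

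\emph{Correctness.} For every odd $i$, the vertex $v_{i-1}$ lies in class $c$ of $\chi_{i-1}$, so Alice's last message includes $f_i(v_{i-1})=v_i$. Bob owns every even $f_i$. Hence Bob can chase the path from $v_0$ and output $v_{2\tau-1}$. Bob can also identify which vertices $v$ the transmitted values refer to: every $\chi_j$ other than $\chi_0,\chi_{2\tau-1}$ has been sent or computed by him, and those two are sent in the last round. Alternatively, Alice can send pairs $(v,f_i(v))$ at $\cO(\log n)$ bits each.

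\emph{Communication.} Each labeling $\chi_j$ costs $n\lceil\log L\rceil=\cO(n\log\log n)$ bits, and at most $2\tau$ of them are sent. The final round sends at most $\tau\cdot\lceil n/L\rceil$ values of $\cO(\log n)$ bits each, which is $\cO(n\tau)$, plus $\cO(\log n)$ bits for $v_0$. The total is $\cO(n\tau\log\log n)$.

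The main obstacle I expect is the parity bookkeeping in the ownership step: checking that the party computing each $\chi$ really owns the permutation it composes with, and that the alternation ends with Bob sending in iteration $\tau-1$. The invariant and the class-size bound are routine once that is settled.
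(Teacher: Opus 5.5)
Your proposal is correct and follows essentially the same route as the paper. Both verify \cref{alg:pcp-delay} directly: the sender's parity in each iteration gives the ownership and round-order claims, the consistency $\chi_{j-1}=\chi_j\circ f_j$ keeps the whole path in the class $\chi_0(v_0)$, and the cost is $\cO(n\log\log n)$ bits per coloring plus $\cO(n/L\cdot\log n)=\cO(n)$ bits per Alice-owned layer in the last round. Your version just makes the parity bookkeeping and the class-size bound (each $\chi_j$ is $\chi_\tau$ composed with a bijection) more explicit than the paper does.
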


\begin{proof}
Fix sufficiently large $n$, and $\tau \ge 1$.
We prove the lemma using \cref{alg:pcp-delay}.
\cref{item:pcp-delay-1,item:pcp-delay-r} follow directly from the initialization in \cref{line:pcp-delay-init-p} and the conditions in \cref{line:pcp-delay-i-if,line:pcp-delay-r-if}.
They also ensure an alternating speaking order for Alice and Bob.
\cref{line:pcp-delay-v0} implies \cref{item:pcp-delay-0}.

The correctness of the protocol is shown by induction on round $i$.
In the base case of $i=0$, the party with $p \equiv \tau \pmod{2}$ holds $f_\tau$ as input.
Due to the condition in \cref{line:pcp-delay-init-if}, it (almost) equally partitions each of layers $\tau$ and $\tau-1$ into $L=\Theta(\log n)$ color classes in \cref{line:pcp-delay-init-t,line:pcp-delay-init-s}.
Furthermore, the color of each vertex $v$ in layer $\tau-1$ is the same as the color of $f_\tau(v)$ in layer $\tau$.
In other words, the coloring is consistent with the pointers for layer $\tau$.

In round $i \in [\tau-1]$, by induction, the sender knows the colorings of layers $\tau-i,\dots,\tau+i-1$, while the receiver knows the same colorings except the two outer layers $\tau-i$ and $\tau+i-1$.
Once the receiver gets the colorings of layers $\tau-i,\tau+i-1$ from the sender, it further extends them to layers $\tau-i-1,\tau+i$ in \cref{line:pcp-delay-i-s,line:pcp-delay-i-t}.
This is well-defined because it holds $f_{\tau-i},f_{\tau+i}$ as input.

In the last round, Alice sends the newly obtained starting vertex $v_0$, along with the colorings of layers $0,2\tau-1$.
Upon receiving the message, Bob now knows the colorings of all layers.
More importantly, for every $i \in [2\tau-1]$, the $i$-th target vertex $v_i$ must have the same color as $v_0$ because the colorings are consistent with all pointers.
Consequently, it suffices for Alice to additionally send her pointers only for vertices with color $\chi_0(v_0)$.
This ensures that Bob has enough knowledge to output correctly.

As to the total amount of communication, only $L=\Theta(\log n)$ colors are used, so sending each coloring costs at most $\cO(n\log L) = \cO(n\log\log n)$ bits in \cref{line:pcp-delay-i-send,line:pcp-delay-r-send}.
In the meantime, each layer has $\Theta(n/L)$ vertices of any specific color.
Sending pointers for these vertices costs $\cO(n/L \cdot \log n) = \cO(n)$ per layer owned by Alice.
In total, the protocol communicates $\cO(n\tau\log\log n)$ bits.
This concludes the proof.
\end{proof}

The ``delayed starting vertex'' property in \cref{item:pcp-delay-0} of \cref{lem:pcp-delay} enables us to run $r$ parallel instances of \cref{alg:pcp-delay} on disjoint layers and gradually chase through more and more layers at once.
Concretely, \cref{alg:pcp} utilizes this intuition to quadratically improve the round complexity.

\begin{algorithm}[tp]
    \caption{An $r$-round protocol for $\pcp_{n,r^2}$.}
    \label{alg:pcp}
    \begin{algorithmic}[1]
        \alglinenoPop{algcommon}
        \State Set $p \gets 1$ for Alice and $p \gets 0$ for Bob.
    	\For{$i=2,\dots,r$ in parallel}
            \State Run in parallel an instance of \cref{alg:pcp-delay} with $\tau=i$ on $f_{(i-1)^2},\dots,f_{i^2-1}$,
            \Statex \hspace{1cm} swapping Alice and Bob for odd $i$.
            \State \label{line:pcp-i-in}Give $v_{(i-1)^2-1}$ to Alice (or Bob for odd $i$) at the beginning of round $i$.
            \State \label{line:pcp-i-out}Get $v_{i^2-1}$ from Bob (or Alice for odd $i$) at the end of round $i$.
        \EndFor
        \If{$p \equiv r^2 \pmod{2}$}
            \State \label{line:pcp-out}Output $f_{r^2}(v_{r^2-1})$.
        \EndIf
        \alglinenoPush{algcommon}
    \end{algorithmic}
\end{algorithm}

\begin{algorithm}[tp]
    \caption{An $r$-round protocol for $\bpcp_{n,r^2+r}$.}
    \label{alg:bpcp}
    \begin{algorithmic}[1]
        \alglinenoPop{algcommon}
        \State Set $p \gets 1$ for Alice and $p \gets 0$ for Bob.
        \If{$p \equiv r^2+r \pmod{2}$}
            \State Initialize $b_{r^2+r-1}(v) \gets f_{r^2+r}(v) \bmod{2}$ for $v \in [n]$.
        \EndIf
        \State Run in parallel an instance of \cref{alg:pcp} on layers $1,\dots,r^2$.
        \For{$i=1,\dots,r-1$}
            \If{$p \equiv r^2+r-i+1 \pmod{2}$} \label{line:bpcp-i-if}
                \State \label{line:bpcp-i-send}Send $b_{r^2+r-i}$.
            \Else
                \State Receive $b_{r^2+r-i}$.
                \State Set $b_{r^2+r-i-1}(v) \gets b_{r^2+r-i}(f_{r^2+r-i}(v))$ for $v \in [n]$.
            \EndIf
        \EndFor
        \If{$p \equiv r^2+1 \pmod{2}$}
            \State \label{line:bpcp-r-send}Send $b_{r^2}$.
        \Else
            \State Receive $b_{r^2}$.
            \State Get $v_{r^2}$ from \cref{alg:pcp}.
            \State Output $b_{r^2}(v_{r^2})$.
        \EndIf
        \alglinenoPush{algcommon}
    \end{algorithmic}
\end{algorithm}

Now, we are ready to prove \cref{thm:pcp}.

\begin{proof}[Proof of \cref{thm:pcp}]
Fix sufficiently large $n$, and $r \ge 1$.
Assume without loss of generality that $k=r^2$.
Since the case of $r=1$ holds trivially, assume $r \ge 2$ as well.
We first prove by induction on $i$ that \cref{alg:pcp} solves $\pcp_{n,k}$.
By \cref{item:pcp-delay-1} of \cref{lem:pcp-delay}, \cref{alg:pcp-delay} starts with Alice for odd $\tau$ and Bob for even $\tau$.
Thus, swapping the roles of Alice and Bob for odd $i$ ensures that all instances of \cref{alg:pcp-delay} start with Bob.

For the base case of $i=1$, Alice knows $v_0$.
For $i \in [2,r]$, layer $(i-1)^2$ is held by Alice for even $i$ and Bob for odd $i$.
Also verify that $i^2-1-(i-1)^2+1=2i-1=2\tau-1$.
Therefore, the $i$-th instance of \cref{alg:pcp-delay} can be run on layers $(i-1)^2,\dots,i^2-1$, with the roles of Alice and Bob swapped for odd $i$.
In round $i$, the $(i-1)$-th instance has already completed, so $v_{(i-1)^2-1}$ is held by Alice for even $i$ and Bob for odd $i$ by inductive hypothesis (or $v_0$ is already known to Alice for $i=2$).
It then implies that \cref{line:pcp-i-in,line:pcp-i-out} are well-defined, due to \cref{item:pcp-delay-0,item:pcp-delay-r} of \cref{lem:pcp-delay}.
After all $r$ rounds, both $v_{r^2-1}$ and $f_{r^2}$ are known to Alice for odd $r$ and Bob for even $r$, so \cref{line:pcp-out} outputs correctly.
The total communication is \[\sum_{\tau=1}^r \cO(n\tau\log\log n) = \cO(nr^2\log\log n).\]

For the bit version, simply propagate the bits backwards from the last layer, in parallel to \cref{alg:pcp}.
This helps to chase through $r$ extra layers.
The algorithm is formalized in \cref{alg:bpcp}.
Observe that $r^2+r=r(r+1)$ is always even, so the condition in \cref{line:bpcp-i-if} ensures that backward propagation starts with Bob as well.
The additional communication is only $\cO(n)$ per layer in \cref{line:bpcp-i-send,line:bpcp-r-send}.
As to correctness, the receiver of round $i$ owns $f_{r^2+r-i}$ and thus can derive $b_{r^2+r-i-1}$ by induction.
After $r-1$ rounds of backward propagation, $b_{r^2}$ is known to Alice for even $r$ and Bob for odd $r$.
Round $r$ communicates it to the other party and thus ensures correct output, concluding the proof.
\end{proof}

%% file: prob-sublinear.tex
\section{Lower Bound for $\prob$ with More Rounds}
\label{sec:prob-sublinear}

In this section, we prove the following strengthening of \cref{thm:prob}.

\begin{theorem}
\label{thm:prob-sublinear}
For $n \ge 1$, $\eta \in (0,1)$, and $r=\cO(n^\eta)$, any $r$-round protocol for $\prob_{n,r}$ with constant error probability less than $1/2$ requires at least $\bfR^{(r)}(\prob_{n,r}) = \Omega_\eta(n \log n)$ communication.
\end{theorem}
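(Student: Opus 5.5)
The plan is to rerun the proof of \cref{thm:prob} (the virtual-message version in \cref{sec:prob-full}) unchanged, after first finding where $r=o(\log n)$ is actually used. \cref{lem:prob-wM} only costs $\cO(r\log n)=o(L)$ bits in total for $r=\cO(n^\eta)$. The union bound over $r$ rounds of failure probabilities $2^{-n/\log n}$ is also harmless. The only real use is in \cref{item:prob-full-O} of \cref{lem:prob-full}, through
\[\Phi_r=\frac{\sum_i \bbD_1(A_i,B_i\mid\gamma_{i-1},m'_i)+Krn}{\log n}.\]
Here the additive $Kn$ per round is the $\cO(mn)$ slack from \cref{clm:cyc}, and summed over rounds it exceeds $n$ once $r\gtrsim\log n$. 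So the whole task is a sharper version of \cref{lem:prob-D}. The goal is that the per-round additive loss becomes $o(n/r)$, at the price of replacing $\log n$ in the denominator by $(1-\eta')\log n$ for a constant $\eta'\in(\eta,1)$.

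\textbf{Key step: a refined \cref{clm:cyc}.} As before, reduce to $\sigma_i=\mathrm{id}$ and split the expectation at a threshold $t$ on $\sum_i\cyc(\sigma'_i)$. The part below the threshold contributes at most $2^t$. For the tail, replace the crude $2^{mn}n^{mn-t}$ count by the true count of permutations with many cycles. The number of $n$-permutations with at least $c$ cycles is at most roughly $n!\,(e\ln n/c)^{c}$, which can be read off the generating function $\prod_{j<n}(x+j)$. The tail is therefore bounded by
\[2^{\bbD_\infty(Q)}\sum_{c>t}2^{c}(e\ln n/c)^c .\]
Its $\log_2$ is about $\bbD_\infty(Q)-c\log(c/(2e\ln n))$. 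If we insist $t\ge n^{1-\eta'}$, then $\log(c/(2e\ln n))\ge(1-\eta')\log n-\cO(\log\log n)$ for every $c>t$. The tail is then $o(1)$ as soon as $t\ge \bbD_\infty(Q)/((1-\eta')\log n-\cO(\log\log n))$. This step is where Hölder's inequality may be needed: the $m=2$ case involves joint tails of $\cyc(\sigma'_1)+\cyc(\sigma'_2)$ under a joint $Q$, and the counts do not factor. The fix is to bound the mixed count by the product of single-permutation counts, or to apply Hölder to $2^{\cyc(\sigma'_1)}2^{\cyc(\sigma'_2)}$ against the density bound. The result I am aiming for is a \cref{lem:prob-D-sublinear} of the form
\[\Exp_{w,z}2^{\bbD_2(Q^{(w,z)})}\le 2^{\frac{\bbD_\infty(Q)}{(1-\eta')\log n-\cO(\log\log n)}+\cO(n^{1-\eta'})}.\]

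\textbf{Rerunning the round-by-round argument.} Redefine
\[\Phi_i=\sum_{j\le i}\paren*{\frac{\bbD_1(A_j,B_j\mid\gamma_{j-1},m'_j)+1}{(1-\eta')\log n-\cO(\log\log n)}+Kn^{1-\eta'}},\]
using \cref{eqn:prob-D} to pass from $\bbD_\infty$ to $\bbD_1$. The Markov slack in \cref{lem:prob-E} is then $2^{-\Omega(n^{1-\eta'})}$ per round instead of $2^{-\Omega(n/\log n)}$, and a union bound over $r=\cO(n^\eta)$ rounds still gives $o(1)$ because $\eta'<1$. On $\oE'$ we get
\[\Phi_r\le \frac{20L}{(1-\eta')\log n}+\cO\paren*{r n^{1-\eta'}}.\]
Taking $L=c_\eta n\log n$ with $c_\eta$ small, and $\eta'=(1+\eta)/2$ so that $rn^{1-\eta'}=\cO(n^{1-(1-\eta)/2})=o(n)$, gives $\Phi_r\le n/10$. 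The proof of \cref{lem:prob-O} then goes through verbatim. For the last-round bound $g$ it uses the $m=1$ case of the refined claim in the same way.

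\textbf{Main obstacle.} I expect the hard part to be the tail estimate in the refined claim. One needs a clean, uniform-in-$c$ bound on the number of permutations (or pairs of permutations) with at least $c$ cycles, precise enough that the exponent loses only the factor $1-\eta'$ in the $\log n$ denominator and not a constant additive term per round. Handling the pair case for a non-product $Q$, probably via Hölder, is the delicate point.
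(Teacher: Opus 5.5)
Your proposal is correct, and its global structure is the paper's. The paper also finds that the only obstruction is the $Kn/\log n$ per-round slack inherited from \cref{clm:cyc}. It proves a sharpened \cref{lem:prob-D} with denominator $\zeta\log n$ and additive term $\cO_\zeta(n^\zeta)$ (\cref{lem:prob-D-sublinear}), then reruns \cref{lem:prob-full} with $\Phi_i$ rescaled. Your $\eta'=(1+\eta)/2$ is exactly the paper's $\zeta=(1-\eta)/2$ under $\zeta=1-\eta'$.

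You differ only in how the sharpened lemma is proved. The paper drops the threshold split and applies \holder{}'s inequality with $q=\zeta\log n$ to $\Exp_{a,b\sim\cP_n}(n!)^2Q(a,b)\,2^{\cyc(a)+\cyc(b)}$. The density factor gives $2^{\bbD_\infty(Q)/q}$, and the moment factor $\Exp_{a\sim\cP_n}2^{q\cyc(a)}=\prod_{i\le n}(1+(2^q-1)/i)$ gives $2^{\cO_\zeta(n^\zeta)}$. You keep the threshold split and replace the crude $2^{mn}n^{mn-t}$ count by Stirling-number tails read off $\prod_{j<n}(x+j)$. Both use the same generating function: bounding your tail count by $y^{-c}\prod_{j<n}(y+j)$ with $y\approx n^{\zeta}$ is essentially the paper's \holder{} step. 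The resulting bounds are equivalent; your extra $\cO(\log\log n)$ in the denominator is harmless. \holder{} is shorter and avoids boundary bookkeeping at the threshold, while your route is more elementary and stays closer to \cref{clm:cyc}.

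The step you flag as the main obstacle is not actually delicate, and \holder{} is not needed there. As in the original claim, a joint $Q$ enters only through the pointwise bound $Q(\sigma'_1,\sigma'_2)\le 2^{-\bbH_\infty(Q)}$. The tail is then $2^{-\bbH_\infty(Q)}$ times a sum of $2^{\cyc(\sigma'_1)+\cyc(\sigma'_2)}$ over all pairs in $\cP_n\times\cP_n$ with large total cycle count, and that sum does factor. The number of pairs with $\cyc(\sigma'_1)+\cyc(\sigma'_2)=c$ is the coefficient of $x^c$ in $\bigl(\prod_{j<n}(x+j)\bigr)^2$. It is bounded by $y^{-c}\bigl(\prod_{j<n}(y+j)\bigr)^2$ for any $y\ge1$, or by roughly $(c+1)(n!)^2(2eH_n/c)^c$ via convexity of $x\log x$, where $H_n=\sum_{i\le n}1/i$. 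So the dependence structure of $Q$ plays no role, and your $m=1$ tail estimate carries over to $m=2$ with only lower-order changes.
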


The only constraint of the analysis in \cref{sec:prob} that limits $r=o(\log n)$ is the $\cO(rn/\log n)$ loss in $\Phi_r$, which can be traced back to \cref{lem:prob-D}.
Therefore, the key to proving \cref{thm:prob-sublinear} is a stronger alternative to \cref{lem:prob-D}, as formalized in \cref{lem:prob-D-sublinear}.

\begin{lemma}
\label{lem:prob-D-sublinear}
For $\zeta \in (0,1)$ and any distribution $Q$ on $\cP_n \times \cP_n$, it holds that \[\Exp_{w,z \sim \bits^n} 2^{\bbD_2(Q^{(w,z)})} \le 2^{\frac{\bbD_\infty(Q)}{\zeta\log n}+\cO_\zeta(n^\zeta)},\]
where $Q^{(w,z)}$ is the distribution of $(aw,bz) \in \bits^{2n}$ when $(a,b) \sim Q$.
\end{lemma}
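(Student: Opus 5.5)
The proof will follow the proof of \cref{lem:prob-D} verbatim up to the reduction
\[\Exp_{w,z \sim \bits^n} 2^{\bbD_2(Q^{(w,z)})} \le \max_{a,b \in \cP_n} \Exp_{(a',b') \sim Q} 2^{\cyc(a^{-1}a')+\cyc(b^{-1}b')}.\]
As in \cref{clm:cyc}, I will assume without loss of generality that $a,b$ are the identity. It then suffices to prove a sharper version of \cref{clm:cyc} for $m=2$:
\[\Exp_{(\sigma_1,\sigma_2) \sim Q} 2^{\cyc(\sigma_1)+\cyc(\sigma_2)} \le 2^{\frac{\bbD_\infty(Q)}{\zeta\log n}+\cO_\zeta(n^\zeta)}.\]
The loss $\cO(mn/\log n)$ in \cref{clm:cyc} comes from its crude count: the elements that are not cycle-minima are located among $2^{mn}$ subsets, and each is given $n$ insertion options. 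I will replace this by an exact generating-function count.

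\textbf{Step 1: compute the moment under the uniform distribution.} The number of permutations of $[n]$ with exactly $k$ cycles is the unsigned Stirling number $c(n,k)$, and $\sum_k c(n,k)x^k = x(x+1)\cdots(x+n-1)$. For a general exponent $\lambda \ge 1$ this gives
\[\Exp_{\sigma \sim \cP_n} \lambda^{\cyc(\sigma)} = \prod_{j=1}^{n} \frac{\lambda+j-1}{j} \le \exp\paren*{(\lambda-1)H_n} \le n^{\cO(\lambda)}.\]

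\textbf{Step 2: convert min-entropy deficiency into a uniform moment via \holder{}'s inequality.} Write $q(\sigma_1,\sigma_2) \le 2^{\bbD_\infty(Q)}/(n!)^2$, set $p = 1/\zeta'$ for a small parameter $\zeta'$, and let $1/p + 1/p' = 1$. Then
\[\Exp_Q 2^{\cyc_1+\cyc_2} = \Exp_{U} \frac{q}{u}\, 2^{\cyc_1+\cyc_2} \le \paren*{\Exp_U (q/u)^{p'}}^{1/p'} \paren*{\Exp_U 2^{p(\cyc_1+\cyc_2)}}^{1/p},\]
where $U$ is uniform on $\cP_n \times \cP_n$ with mass $u$. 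Since $q/u \le 2^{\bbD_\infty}$ and $\Exp_U(q/u) = 1$, the first factor is at most $2^{\bbD_\infty(Q)/p}$. By Step 1 with $\lambda = 2^p$, the second factor is at most $\paren*{\prod_{j} \frac{2^p+j-1}{j}}^{2/p}$.

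\textbf{Step 3: choose $p$ and bound the uniform factor.} I will split the product at $j = 2^p$. Factors with $j > 2^p$ contribute $\exp\paren*{2^p \ln(n/2^p)}$. Factors with $j \le 2^p$ contribute at most $\binom{2^{p+1}}{2^p} \le 4^{2^p}$. So after taking the $(2/p)$-th power, the uniform factor is $2^{\cO(2^p \log n / p)}$.

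To make the entropy term equal $\bbD_\infty(Q)/(\zeta\log n)$, I need $p = \zeta\log n$. Then $2^p = n^\zeta$, and the uniform factor becomes $2^{\cO(n^\zeta \log n/(\zeta\log n))} = 2^{\cO_\zeta(n^\zeta)}$, which is exactly the claimed form. The H\"older step is the one expected to require care. In particular, I must check that the cruder split still gives $n^{\zeta}$ rather than $n^{\zeta}\log n$; the factor $1/p$ cancelling $\log n$ is the crucial point. If the constants are off, I can replace $\zeta$ by $\zeta/2$, which is absorbed into $\cO_\zeta$.

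Combining the three steps with the reduction gives the lemma.
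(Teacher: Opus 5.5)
Your proposal is correct and is essentially the paper's proof: \holder{}'s inequality against the uniform measure on $\cP_n \times \cP_n$, with exponent $\zeta\log n$ on the cycle term. As in the paper, the density factor is bounded by $2^{\bbD_\infty(Q)/(\zeta\log n)}$ using its maximum and unit mean, and the cycle moment is computed via $\prod_{j}(1+(2^{p}-1)/j)$. The split at $j=2^p$ is unnecessary, since your Step 1 bound $\exp((2^p-1)H_n)$ already gives $2^{\cO(n^\zeta/\zeta)}$ after taking the $2/p$ power. Your fallback of replacing $\zeta$ by $\zeta/2$ would not actually work, because it doubles the coefficient of $\bbD_\infty(Q)$; fortunately it is not needed.
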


The proof of \cref{lem:prob-D-sublinear} uses \holder{}'s inequality, which can be viewed as a generalization of Cauchy-Schwarz inequality.

\begin{lemma}[\holder{}'s inequality]
\label{lem:holder}
For $p,q > 1$ such that $1/p+1/q=1$, and any two real-valued random variables $X,Y$, it holds that \[\Exp\abs{XY} \le \paren*{\Exp\abs{X}^p}^{\frac{1}{p}}\paren*{\Exp\abs{Y}^q}^{\frac{1}{q}}.\]
\end{lemma}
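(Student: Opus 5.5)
This is the classical \holder{} inequality. I would prove it by normalizing both random variables and applying Young's inequality pointwise. Write $\|X\|_p=(\Exp\abs{X}^p)^{1/p}$ and $\|Y\|_q=(\Exp\abs{Y}^q)^{1/q}$.

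First I dispose of the degenerate cases. If $\|X\|_p=\infty$ or $\|Y\|_q=\infty$ and the other norm is positive, the right-hand side is $+\infty$ and there is nothing to prove. If $\|X\|_p=0$, then $X=0$ almost surely. Hence $XY=0$ almost surely and both sides vanish, using the convention $0\cdot\infty=0$ if needed. The case $\|Y\|_q=0$ is symmetric. So I may assume $0<\|X\|_p,\|Y\|_q<\infty$.

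The key pointwise tool is Young's inequality: for reals $s,t\ge 0$ and $p,q>1$ with $1/p+1/q=1$,
\[st \le \frac{s^p}{p}+\frac{t^q}{q}.\]
To prove it, note that the case $s=0$ or $t=0$ is trivial. Otherwise, concavity of $\log$ with weights $1/p$ and $1/q$ summing to $1$ gives
\[\log\paren*{\frac{s^p}{p}+\frac{t^q}{q}} \ge \frac{1}{p}\log s^p+\frac{1}{q}\log t^q=\log(st).\]

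Now set $X'=X/\|X\|_p$ and $Y'=Y/\|Y\|_q$, so that $\Exp\abs{X'}^p=\Exp\abs{Y'}^q=1$. Apply Young's inequality pointwise with $s=\abs{X'}$ and $t=\abs{Y'}$, then take expectations using linearity and monotonicity:
\[\Exp\abs{X'Y'} \le \frac{1}{p}\Exp\abs{X'}^p+\frac{1}{q}\Exp\abs{Y'}^q=\frac{1}{p}+\frac{1}{q}=1.\]
Multiplying through by $\|X\|_p\|Y\|_q$ yields the claim.

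There is no real obstacle here. The only points needing care are the degenerate and infinite cases, and the justification of Young's inequality via concavity, equivalently weighted AM--GM. Cauchy--Schwarz is recovered as the special case $p=q=2$.
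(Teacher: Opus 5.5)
Your proof is correct. Normalizing to $\Exp\abs{X'}^p=\Exp\abs{Y'}^q=1$ and then applying Young's inequality pointwise, proved via concavity of $\log$, is the standard argument, and your handling of the zero and infinite cases is complete. The paper gives no proof of this lemma: it invokes \holder{}'s inequality as a classical fact in the proof of \cref{lem:prob-D-sublinear}, applying it only to finitely supported random variables over $\cP_n \times \cP_n$ with positive finite norms, so the degenerate cases you treat never arise there.
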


\begin{proof}[Proof of \cref{lem:prob-D-sublinear}]
Arguments similar to the proofs of \cref{lem:prob-D,clm:cyc} show that it suffices to upper bound \[\Exp_{(a,b) \sim Q} 2^{\cyc(a)+\cyc(b)}.\]
To this end, for $p,q > 1$ such that $1/p+1/q=1$, we have that
\begin{align}
\Exp_{(a,b) \sim Q} 2^{\cyc(a)+\cyc(b)}
& = \Exp_{a,b \sim \cP_n} (n!)^2 \cdot Q(a,b) \cdot 2^{\cyc(a)+\cyc(b)} \nonumber\\
& \le \paren*{\Exp_{a,b \sim \cP_n} (n!)^{2p} \cdot Q(a,b)^p}^{\frac{1}{p}} \paren*{\Exp_{a,b \sim \cP_n} 2^{q \cdot (\cyc(a)+\cyc(b))}}^{\frac{1}{q}}, \label{eqn:prob-holder}
\end{align}
by \holder{}'s inequality (\cref{lem:holder}).
Also observe that \[\Exp_{a,b \sim \cP_n} (n!)^2 \cdot Q(a,b) = (n!)^2 \cdot (n!)^{-2} = 1.\]
Therefore, we can get
\begin{align}
\Exp_{a,b \sim \cP_n} (n!)^{2p} \cdot Q(a,b)^p
& \le \paren*{\max_{a,b \in \cP_n} (n!)^2 \cdot Q(a,b)}^{p-1} \nonumber\\
& \le \paren*{2^{2\log(n!)} \cdot 2^{-\bbH_\infty(Q)}}^{p-1} \nonumber\\
& = 2^{(p-1) \cdot \bbD_\infty(Q)}. \label{eqn:prob-holder-p}
\end{align}
In the meantime, using the cycle decomposition representation of permutations, we also have
\begin{align}
\Exp_{a,b \sim \cP_n} 2^{q \cdot (\cyc(a)+\cyc(b))}
& = \paren*{\Exp_{a \sim \cP_n} 2^{q \cdot \cyc(a)}}^2 \nonumber\\
& = \prod_{i \in [n]} \paren*{\frac{1}{i} \cdot 2^q + \paren*{1-\frac{1}{i}} \cdot 1}^2 \nonumber\\
& = \prod_{i \in [n]} \paren*{1 + \frac{2^q-1}{i}}^2 \nonumber\\
& \le \exp\paren*{2 \cdot \sum_{i \in [n]} \frac{2^q-1}{i}} \nonumber\\
& = \exp\paren*{\cO(2^q \cdot \ln n)}. \label{eqn:prob-holder-q}
\end{align}
To conclude the proof, simply plug \cref{eqn:prob-holder-p,eqn:prob-holder-q}, as well as $(p-1)/p=1/q$ and $q=\zeta\log n$, into \cref{eqn:prob-holder}.
\end{proof}

Now, we are ready to prove \cref{thm:prob-sublinear}.

\begin{proof}[Proof of \cref{thm:prob-sublinear}]
Fix $\zeta=(1-\eta)/2 > 0$, so that $rn^{\zeta}=o(n)$, and sufficiently large $n$.
Change the definition of $\Phi_i$ to be \[\Phi_i = \sum_{j \in [i]} \paren*{\frac{\bbD_1(A_j,B_j \mid \gamma_{j-1},m'_j)}{\zeta\log n}  + Kn^\zeta},\] for some sufficiently large constant $K$, and adjust the event $\cE_i$ to be defined with respect to the updated threshold $\Phi_i$.
Then, conditioned on $\oE'$, we can get \[\Phi_r \le \frac{20L}{\zeta\log n} + Krn^\zeta \le \frac{n}{10},\] for $L=\zeta n/1000 \cdot \log n$.
Therefore, it can be verified that \cref{lem:prob-full} still holds, except that $2^{-n/\log n}$ becomes $2^{-n^\zeta/\log n}$ in \cref{item:prob-full-E}.
The same argument then concludes the proof.
\end{proof}